\renewcommand\refname{REFERENCES}
\newcommand\TenPtJEL{\@setfontsize\TenPtJEL{10.5}{12}}
\long\def\JEL#1{\long\gdef\@JEL{#1}}
\pgfplotsset{compat=1.16}
	\pgfplotsset{compat=1.17}
\newenvironment{supplement}{}{}
\newcommand{\stitle}{}
\newcommand{\mres}{\mathbin{\vrule height 1.6ex depth 0pt width
0.13ex\vrule height 0.13ex depth 0pt width 1.3ex}}
	\newcommand*\dd{\mathop{}\!\mathrm{d}}
	\newtheorem{theorem}{Theorem}
	\newtheorem{lemma}{Lemma} 
	\newtheorem{proposition}{Proposition} 
	\newtheorem{definition}{Definition}
    \newtheorem{corollary}{Corollary}
	\newtheorem{assumption}{Assumption}
	\newtheorem*{example*}{Example}
    \newtheorem*{assumptions*}{\assumptionnumber}
\providecommand{\assumptionnumber}{}
\definecolor{cyan}{cmyk}{1, 0.4, 0, 0}
\newcommand{\nddvd}[1]{{\color{blue}[David: #1]}}
\definecolor{mypink}{RGB}{219, 48, 122}
\def\Xint#1{\mathchoice
{\XXint\displaystyle\textstyle{#1}}%
{\XXint\textstyle\scriptstyle{#1}}%
{\XXint\scriptstyle\scriptscriptstyle{#1}}%
{\XXint\scriptscriptstyle\scriptscriptstyle{#1}}%
\!\int}
\def\XXint#1#2#3{{\setbox0=\hbox{$#1{#2#3}{\int}$ }
\vcenter{\hbox{$#2#3$ }}\kern-.6\wd0}}
\def\dashint{\Xint-}
 \def\thanks#1{\protected@xdef\@thanks{\@thanks
        \protect\footnotetext{#1}}}
\title{\vspace{-2cm}\Huge{Free Discontinuity Regression} \\
\normalsize{With an Application to the Economic Effects of Internet Shutdowns}}
\author{\normalsize Florian Gunsilius$^{1}$\thanks{\texttt{fgunsil@emory.edu, dvdijcke@umich.edu}. Authors listed alphabetically.  Python code for this paper can be found at \url{https://github.com/Davidvandijcke/fdr} We thank Austin Wright as well as audiences at the Bank of England, 
Emory, MIT, Northwestern, Princeton, Simon Fraser, UIUC, 
the University of Michigan, Yale, 
the 2023 Munich Econometrics conference, 
the 2023 European Winter Meetings of the Econometric Society, 
and the 2024 ESIF Economics and AI+ML Meeting for helpful comments and suggestions.
All errors are our own.}}
\author{\normalsize David Van Dijcke$^{2,3}$}
\affil{\footnotesize $^{1}$Department of Economics, Emory University \\ 
$^{2}$Department of Economics, University of Michigan\\
$^{3}$Risk Analytics Division, Ipsos Public Affairs}
\date{\vspace{-1em} \small \today}
\begin{document}

\singlespacing

\thispagestyle{empty}
\maketitle
{\vspace{-2em}}
\begin{abstract} % 100 or fewer: condense, add sentence on application "causal estimates"
% Discontinuous changes in observed outcomes based on thresholds in treatment assignments can provide causal insights. In many contexts, like geographic settings, the locations of thresholds are unknown and multivariate. We introduce a non-parametric method that can estimate multivariate discontinuities with unknown location by segmenting the estimated regression surface into smooth and discontinuous parts. The estimator relies on a convex relaxation of the Mumford-Shah functional, for which we establish identification and convergence results. We use it to analyze the effects of an internet shutdown on economic activity in India, which dropped by 50\%, much more than previous estimates. As such, the increasing global prevalence of such shutdowns presents alarming implications for economies worldwide.  
Sharp, multidimensional changepoints---abrupt shifts in a regression surface whose locations and magnitudes are unknown---arise in settings as varied as gene-expression profiling, financial covariance breaks, climate-regime detection, and urban socioeconomic mapping. Despite their prevalence, there are no current approaches that jointly estimate the location and size of the discontinuity set in a one-shot approach with statistical guarantees. We therefore introduce Free Discontinuity Regression (FDR), a fully nonparametric estimator that simultaneously (i) smooths a regression surface, (ii) segments it into contiguous regions, and (iii) provably recovers the precise locations and sizes of its jumps. By extending a convex relaxation of the Mumford–Shah functional to random spatial sampling and correlated noise, FDR overcomes the fixed-grid and i.i.d. noise assumptions of classical image-segmentation approaches, thus enabling its application to real-world data of any dimension.  This yields the first identification and uniform consistency results for multivariate jump surfaces: under mild SBV regularity, the estimated function, its discontinuity set, and all jump sizes converge to their true population counterparts. Hyperparameters are selected automatically from the data using Stein's Unbiased Risk Estimate, and large-scale simulations up to three dimensions validate the theoretical results and demonstrate good finite-sample performance. Applying FDR to an internet shutdown in
India reveals a 25–35\% reduction in economic activity around the estimated shutdown
boundaries-—much larger than previous estimates. By unifying smoothing, segmentation, and effect-size recovery in a general statistical setting, FDR turns free-discontinuity ideas into a practical tool with formal guarantees for modern multivariate data. \\

   %Discontinuities in regression functions can reveal important insights. In many contexts, like geographic settings, such discontinuities are multivariate and unknown a priori. We introduce a non-parametric regression method that also estimates the location and size of discontinuities by segmenting the regression surface.  Unlike classical image-segmentation approaches that assume fixed pixel grids and/or deterministic noise, our estimator accommodates both random sampling in space and random, correlated noise in outcomes. In contrast to machine-learning based segmentation methods, we can prove identification of and convergence to the true location and size of the discontinuities for the proposed method. Applying FDR to an internet shutdown in India reveals a 25–35\% reduction in economic activity around the estimated shutdown boundaries-—much larger than previous estimates. \\
   
\noindent \emph{Keywords:} boun\-ded variation, internet shutdowns, Mum\-ford-Shah functional, nonparametric regression, segmentation, discontinuity. 
\noindent \\ 
% \emph{JEL Codes: C14, C21, L86, O33}

\end{abstract}
\thispagestyle{empty} % Optional: To make sure the title page has no page number
\newpage
\setcounter{page}{1} % Start the page counter from 1 after the title page

\clearpage
\onehalfspacing
%%%%%%%%%%%%%%%%%%%%%%%%%%%%%%%%%%%%%%%%%%%%%%%%%%%%%%%%%%%%%%%%%%%%%%%%%
%%%% Main text entry area:
%%%%%%%%%%%%%%%%%%%%%%%%%%%%%%%%%%%%%%%%%%%%%%%%%%%%%%%%%%%%%%%%%%%%%%%%%

\section{Introduction}

In many settings of interest, discontinuous changes in regression surfaces can reveal fundamental insights into the underlying problem structure, with both location and size of discontinuities typically unknown and dependent on multiple variables. Such discontinuities arise in many contexts, including algorithmic decision-making in finance and education, marketing segmentation, and tipping points in climate, urban sorting, financial markets, and spatial gene-expression profiling \citep{argyle2020monthly,  brunner2021effects, kuo2002integration, li2010contextual, lamberson2012tipping, card2008tipping, scheffer2009early, hansen2017regression, caetano2017school, aue2009break, chen2023scs, hou2023statistical}.

We introduce \emph{Free Discontinuity Regression} (FDR), a method that reconstructs the regression surface \emph{and} the exact geometry and magnitude of its jumps in one shot. FDR segments the domain into smooth and discontinuous regions without any point-wise smoothness assumption. In contrast to wavelet smoothing, fused-lasso variants, or gradient-threshold rules, our approach embeds smoothing and boundary detection in the same optimization, so the jump set and its sizes emerge as primary estimands rather than as results of a second-stage thresholding procedure, allowing us to develop the first formal identification guarantees for both jump location and size estimation in any dimension. 

Our starting point is the Mumford-Shah (MS) functional \citep{mumford1989optimal}, a workhorse model from mathematical computer vision and extension of classic discrete approaches \citep{geman1984stochastic}, which jointly segments and denoises while incorporating spatial regularity. The MS functional arises naturally as a solution to the limitations of simpler edge detection and thresholding methods \citep{canny1986computational, kass1988snakes}, which almost surely misclassify discontinuities in noisy settings \citep{chan2005image} and typically require multiple estimation steps, complicating statistical inference.\footnote{For empirical validation of the MS functional's superiority over simpler approaches, see \citet{wang2012robust, chan2001active, lucas2022hyperparameter, strekalovskiy2012convex}.} The MS functional is non-convex, so its critical points need not be global optima. This poses a threat to the reproducibility of
estimates and the ability to characterize convergence to the true population function. To resolve it, we employ the method of calibrations \citep{alberti2003calibration} to obtain a convex relaxation with global solutions, implementable via primal-dual algorithms \citep{chambolle2011first}. 

Most classical approaches to the MS functional treat the noise as already baked into \emph{fixed} image $f$. Although $f$ may deviate from the true underlying signal $u$, the analysis typically does not place a probability distribution over all possible noisy images; rather, $f$ is taken as a single, deterministic function. In contrast, our statistical viewpoint treats the observed function $Y_i$ as doubly random: the function values are observed with random noise, and live in a random point cloud. \citet{caroccia2020mumford} recently considered a similar setting for the Mumford-Shah functional on random graphs generated by random point clouds, where the noise is additive and i.i.d. In contrast, we work with a random grid on a random point cloud, and allow for the noise to be correlated. This is important in spatial settings like our application, and even in the classical image setting, where we can expect noise in blurry segments of the image to be correlated. Moreover, we provide sharp identification results in any dimension, in addition to $\Gamma$-convergence---the first results of their kind, to our knowledge. These complement and extend recent results for discrete approximations of the classical non-convex MS functional \citep{caroccia2020mumford, ruf2019discrete, morini2002global, richardson1992limit} and related functionals \citep{chambolle2021learning, garcia2016continuum, trillos2017new, hutter2016optimal, hu2022voronoigram}.

\begin{figure}[htbp!]
\centering
\caption{Simulations: 1D to 3D}
\label{fig:simulated_examples}
\vspace{-1em}
\begin{subfigure}[b]{0.8\linewidth}
\includegraphics[width=\textwidth]{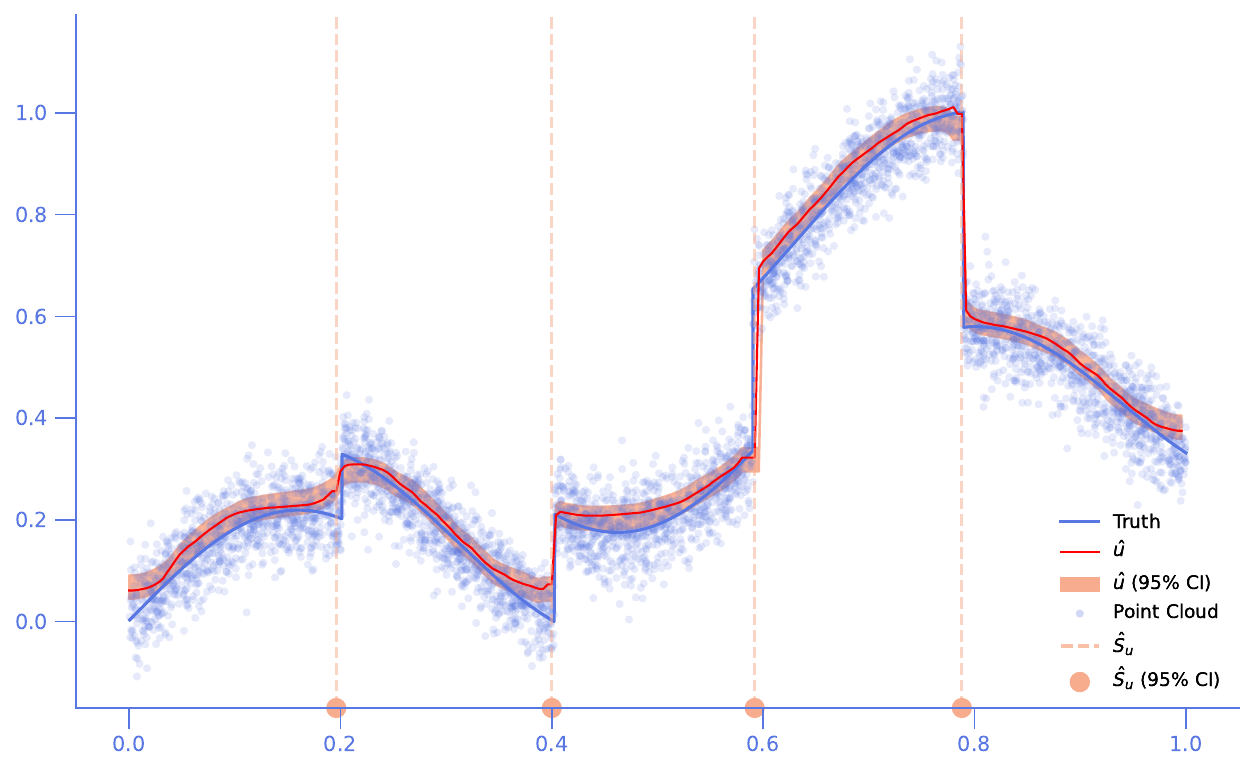}
    \caption{1D: Truth + Point Cloud + $\hat{u}$ + Boundary}
    \label{fig:sims_1D}
\end{subfigure}\hspace{1em}%
\begin{subfigure}[b]{0.24\linewidth}
\includegraphics[width=\textwidth]{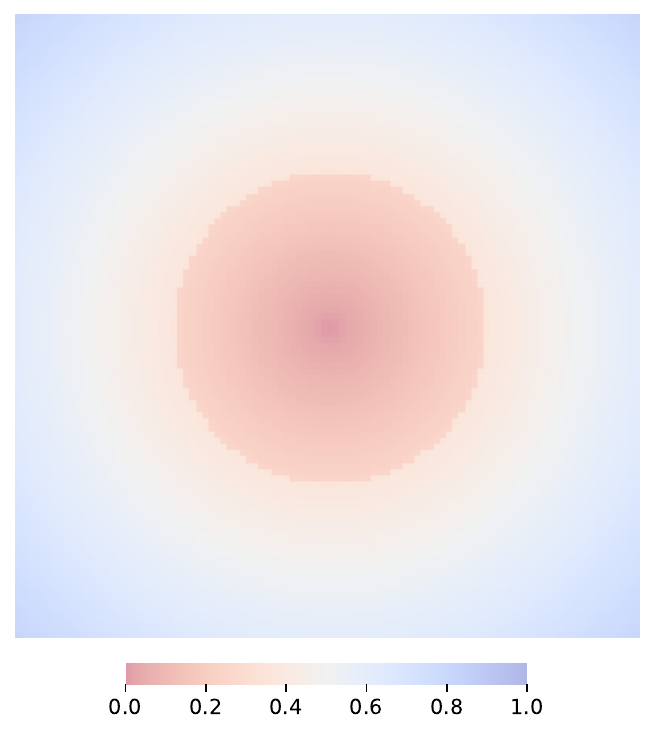}
    \caption{2D: Truth}
\end{subfigure}
\begin{subfigure}[b]{0.24\linewidth}
\includegraphics[width=\textwidth]{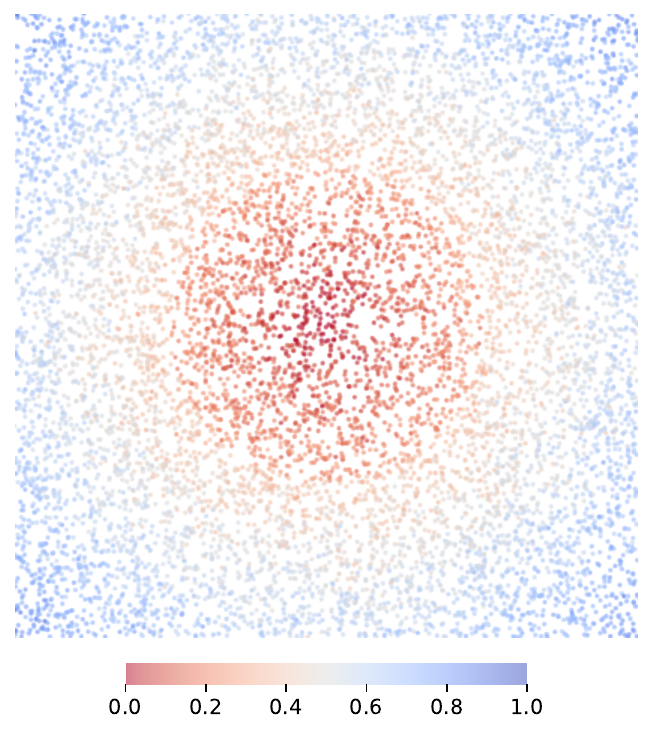}
    \caption{2D: Point Cloud}
    \label{fig:sims_2D}
\end{subfigure}
\begin{subfigure}[b]{0.24\linewidth}
\includegraphics[width=\textwidth]{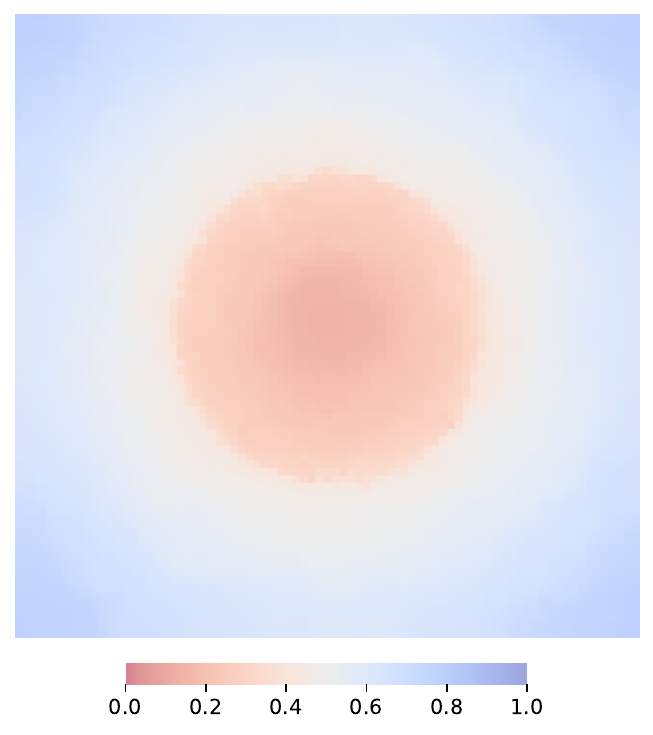}
    \caption{2D: $\hat{u}$}
\end{subfigure}
\begin{subfigure}[b]{0.24\linewidth}
\includegraphics[width=\textwidth]{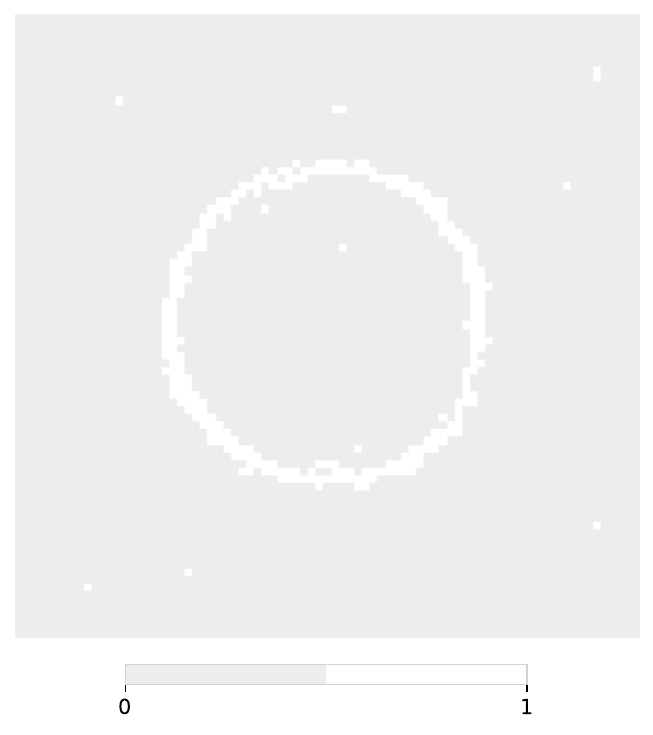}
    \caption{2D: Boundary}
\end{subfigure}\hspace{1em}%

\begin{subfigure}[b]{0.24\linewidth}
\includegraphics[width=\textwidth]{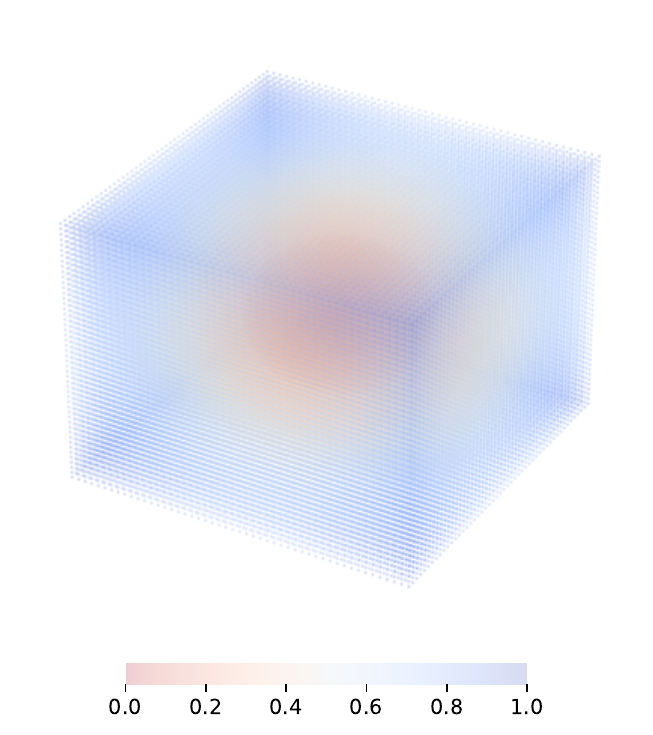}
    \caption{3D: Truth}
\end{subfigure}
\begin{subfigure}[b]{0.24\linewidth}
\includegraphics[width=\textwidth]{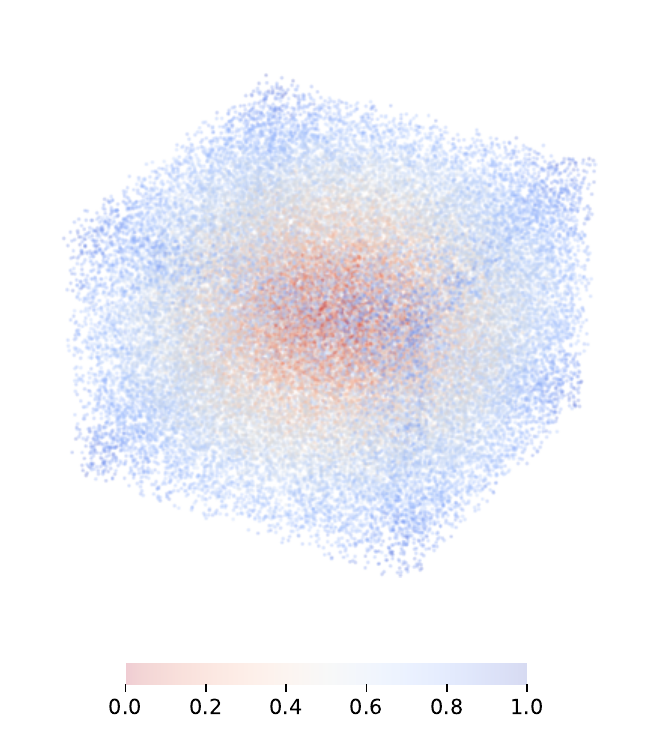}
    \caption{3D: Point Cloud}
\end{subfigure}
\begin{subfigure}[b]{0.24\linewidth}
\includegraphics[width=\textwidth]{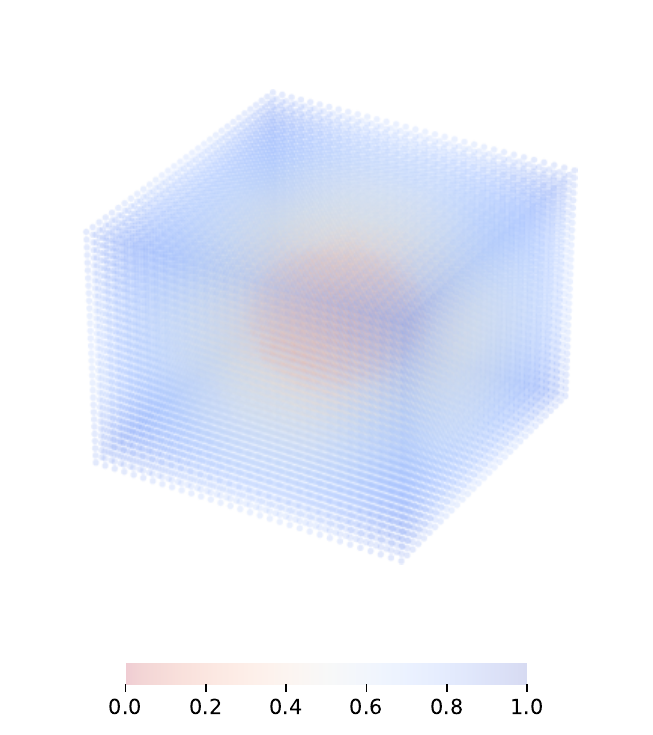}
    \caption{3D: $\hat{u}$}
\end{subfigure}
\begin{subfigure}[b]{0.24\linewidth}
\includegraphics[width=\textwidth]{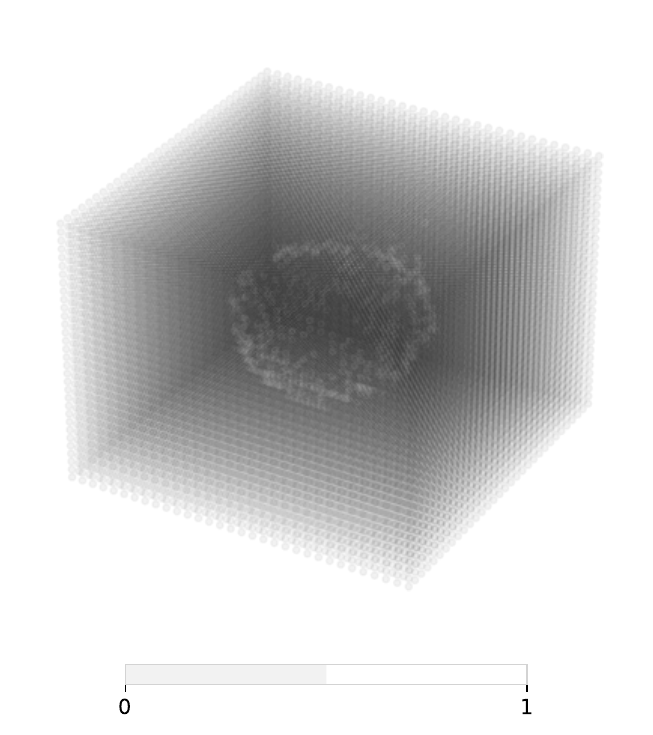}
    \caption{3D: Boundary}
\end{subfigure}
\floatfoot{\textit{Notes}: figures depict, for 1D--3D; the true underlying function without noise, the random point cloud with Gaussian noise ($\sigma= 0.05$), the estimated response function $\hat{u}$ and the estimated boundary locations. 1D case also includes red shaded 95\% confidence bands estimated by subsampling with an estimated rate of convergence \citep{politis1999subsampling}, and red dots below the plot to indicate jump sizes that are different from 0 at 95\% level. Hyperparameters for 1D, 2D chosen by SURE. \textit{Details}: \textbf{1D} jump sizes: 0.1286, 0.2133,  0.3192, -0.4220 ($d=0.2823,  0.4682, 0.7005, -0.9262$), n=5,000, $\lambda= 98.6712, \nu= 0.0001$. \textbf{2D} jump size $\alpha = 0.1126 \, (d=0.75),$ n=10,000; $\lambda=48.3921, \nu=0.0012$ ; \textbf{3D} jump size $\alpha=0.0738 \,(d=0.5),$ n=50,000; $\lambda=10, \nu=0.008$. }
\end{figure}

Figure \ref{fig:simulated_examples} demonstrates our method on 1D-3D simulated examples, showing accurate recovery of both functions and discontinuities from noisy data. We note that, while our main contribution lies in estimating multivariate discontinuities, our method also performs remarkably well in the one-dimensional case, handling multiple jumps of different sizes effortlessly without requiring user input on the number of jumps or the minimum jump size. Further below, we apply the method to estimate the effects of an internet shutdown in Rajasthan, India, in 2021. This one-day shutdown, implemented to prevent exam cheating, created sharp geographic discontinuities in connectivity. Using anonymized mobile activity around economic points of interest and cached location data from satellite-connected apps, we estimate a 25-35\% reduction in economic activity along the shutdown boundary. These findings substantially exceed previous estimates and highlight the asymmetry between internet expansion and shutdowns in modern economies, as well as the critical importance of reliable digital infrastructure for economic performance.

Our method advances several literatures. For a concise overview of how it compares to the main competing methods, see Table  \ref{tab:capabilities_piecewise_smooth}. The rich literature on change point detection methods typically focuses on discontinuity locations rather than magnitudes, or on the one-dimensional setting \citep{page1954continuous, muller1992change, braun2000multiple, killick2012optimal, porter2015regression, donoho1994ideal, harchaoui2008kernel}, while structural break estimation primarily addresses time series \citep{andrews1993tests, bai2003computation, delgado2000nonparametric}. Recent extensions to multivariate domains either impose structural assumptions \citep{park2022jump, herlands2019change, zhu2014spatially}, omit regression estimation \citep{madrid2022change}, require treatment status to be observed \citep{herlands2018automated}, or assume known discontinuity locations \citep{cheng2023regression, narita2021algorithm, abdul2022breaking, cattaneo2022regression, keele2015geographic}. Classical multivariate jump regression approaches \citep{qiu1997jump, korostelev1993minimax, o1994regularized, muller1994maximin, donoho1999wedgelets, li2017bayesian, qiu1998discontinuous} often require knowing partition counts or limit dimensionality. The fused lasso literature \citep{tibshirani2005sparsity, rinaldo2009properties, harchaoui2010multiple} has mostly focused on piecewise constant approximations. Recent approaches have made progress on estimating multivariate, piecewise-smooth regression surfaces with discontinuities, but do not develop any formal approaches for estimating the jump sizes and locations, instead relying on post-estimation thresholding without formal guarantees \citep{park2022jump, caroccia2020mumford, zhu2014spatially}. Finally, note that there is also a rich but distinct literature on ``multiple changepoints'' that deals with the setting where the \emph{outcome variable} is multivariate or functional \citep{aue2009break, matteson2014nonparametric, fryzlewicz2014wild, kovacs2023seeded, madrid2022change, dubey2020frechet, horvath2021monitoring}, as opposed to our setting where the \emph{regressors} are. A notable contribution in that literature is \citet{madrid2023change} who allow the multivariate output signals to be correlated and only satisfy mixing conditions, in line with our setting for a univariate output with multivariate domain.

Our approach, on the contrary, explicitly estimates the piecewise-smooth regression function jointly with its jump locations and sizes. Moreover, we provide formal convergence and identification results for both, which to our knowledge are the first of their kind for the general multivariate setting. Additionally, all existing multivariate estimation approaches assume that the sampling noise is i.i.d. across data points or pixels, while we allow it to be correlated under a general mixing process. This is crucial for frequently occurring spatiotemporally dispersed jump processes like climate regimes, geographic policy discontinuities, or financial time series. Finally, unlike many existing methods, which often work on a fixed lattice, we allow for randomness in both the locations and the sampling error of the data points. 

\begin{scriptsize}

\begin{table}[htbp]
\centering
\caption{Comparison of Statistical Methods for Multivariate Jump Surface Estimation}
\label{tab:capabilities_piecewise_smooth}
\footnotesize
\begin{tabular}{@{}lccccc@{}}
\toprule
\textbf{Method} &
\makecell{Jump\\Domain} &
\makecell{Jump\\Location} &
\makecell{Jump\\Size} &
\makecell{Jump\\Identification} &
\makecell{Correlated\\Sampling/Noise} \\
\midrule
Proposed Method\textsuperscript{1}          & Any $D$ & \checkmark & \checkmark & \checkmark & \checkmark \\
Wild Binary Segmentation\textsuperscript{2} & 1D      & \checkmark & $\times$   & \checkmark & $\times$   \\
PELT\textsuperscript{3}                     & 1D      & \checkmark & $\times$   & $\times$   & \checkmark \\
TV Denoising/Fused Lasso\textsuperscript{4} & Any $D$ & $\sim$\textsuperscript{a} & $\sim$\textsuperscript{a} & $\times$ & $\times$ \\
Trend Filtering on Graphs\textsuperscript{5} & Graph   & $\sim$\textsuperscript{a} & $\sim$\textsuperscript{a} & $\times$ & $\times$ \\
Change Surfaces\textsuperscript{6}          & Any $D$ & $\sim$\textsuperscript{b} & $\sim$\textsuperscript{b} & $\times$ & $\times$\textsuperscript{b} \\
Jump Gaussian Processes\textsuperscript{7}  & Any $D$ & $\sim$\textsuperscript{c} & $\sim$\textsuperscript{c} & $\times$ & $\times$ \\
SVCM\textsuperscript{8}                     & Any $D$ & $\sim$\textsuperscript{a} & $\sim$\textsuperscript{a} & $\times$ & $\sim$\textsuperscript{d} \\
MS on Graphs\textsuperscript{9}             & Graph   & $\sim$\textsuperscript{e} & $\sim$\textsuperscript{e} & $\times$ & $\times$ \\
\bottomrule
\end{tabular}

\vspace{0.6em}
\begin{minipage}{\linewidth}
\scriptsize
\textbf{Legend.} $\times$ = No; $\checkmark$ = Yes; $\sim$ = Partially or requires additional assumptions/extensions.\\
\textbf{Jump Domain} denotes the dimension of the Euclidean point cloud on which the function is observed (or “Graph’’ if defined on a network);  
\textbf{Jump Location/Size} records whether those quantities are explicitly estimated;  
\textbf{Jump Identification} flags formal convergence results;
\textbf{Correlated Sampling/Noise} indicates support for correlated noise in the model.\\[4pt]
\textsuperscript{a} Requires post-estimation gradient thresholding; no explicit discussion of discontinuity identification.\\
\textsuperscript{b} Models smooth transition boundaries with Gaussian processes; no explicit discussion of discontinuity identification.\\
\textsuperscript{c} Estimates local (not global) piecewise-continuous regressions.\\
\textsuperscript{d} Errors are i.i.d.\ samples from a stochastic process; input locations are i.i.d.\\
\textsuperscript{e} Jump recovery done via manual gradient thresholding.\\[4pt]
\textsuperscript{1}\cite{gunsilius2023free}\,
\textsuperscript{2}\cite{fryzlewicz2014wild}\,
\textsuperscript{3}\cite{killick2012optimal}\,
\textsuperscript{4}General version in \cite{hutter2016optimal}\,
\textsuperscript{5}\cite{wang2016}\,
\textsuperscript{6}\cite{herlands2019change}\,
\textsuperscript{7}\cite{park2022jump}\,
\textsuperscript{8}\cite{zhu2014spatially}\,
\textsuperscript{9}\cite{caroccia2020mumford}.
\end{minipage}
\end{table}
\end{scriptsize}

\section{Free Discontinuity Regression}
In this section, we develop the statistical framework for estimating regression functions with unknown discontinuities of any dimension. All mathematical notation is detailed in Appendix \ref{app:notation}.
\subsection{Regression framework}\label{sec:RDD_as_FDR}
We consider $n$ randomly sampled units indexed by $i=1,2,\ldots,n$ for which we observe a potentially multivariate regressor $X_i\in \mathbb{R}^d$ and an outcome of interest $Y_i\in\mathbb{R}$. The regression model is
\begin{equation}\label{eq:model}
Y_i = f(X_i) +\varepsilon_i,\qquad \mathbb{E}[\varepsilon_i\vert X_i] = 0,
\end{equation}
where the $\varepsilon_i$ are unobservable error terms. Our goal is to estimate both the regression surface $f(x)$ and its discontinuity set $S_f$.
To handle multivariate regressors, we need to extend the notion of a univariate discontinuity to multiple dimensions. Following Definition \ref{def:jumppoints} in Appendix \ref{app:notation} \citep[Definition 3.67]{ambrosio2000functions}, we characterize jump points of a function $u$ on $\mathcal{X}$ through two distinct limiting values, $u^{+}(x)$ and $u^{-}(x)$, called \emph{traces}. These traces reflect how $u$ behaves as one approaches a point $x$ in the domain from opposite sides, with an orientation vector $\rho$ defining these ``sides." The collection of all such jump points forms the \emph{(approximate) discontinuity set} $S_u$. %, with the geometry illustrated in Figure~\ref{fig:balls}.

\begin{comment}
\begin{figure}[htbp!]
\centering
\caption{Discontinuity Set $S_f$ for a Function $f$}
\label{fig:balls}
\begin{tikzpicture}[rotate=25]

    % Draw the curve
    \draw[thick] (-3,0) to[out=0, in=180] (0,1.5) to[out=0, in=180] (3,0);
    \node at (3,0.25) {$S_f$};

    % Draw the point
    \fill[black] (0,1.5) circle (2pt);
    \node[above] at (0,1.55) {$x$};

    % Draw the epsilon-ball up
    \draw[] (0,1.5) circle (1);
    \node[above] at (-1.45,2.3) {$B_\varepsilon(x)^-$};

    % Draw the epsilon-ball down
    \node[below] at (1.4,0.7) {$B_\varepsilon(x)^+$};

    % Draw epsilon
    \node at (-0.7,1.65)
    {$\varepsilon$};

    % Draw the normal vector straight down
    \draw[->, thick, black] (0,1.5) -- (0,0);
    \node[below right] at (0.1,1.1) {$\rho$};

    % Draw the horizontal line normal to the normal vector
    \draw[dotted, thick] (-1,1.5) -- (1,1.5);

\end{tikzpicture}
\end{figure}
\end{comment}

\subsection{FDR as an estimator based on the Mumford-Shah Functional}
Standard approaches to this estimation problem face several fundamental limitations. Smoothing methods do not explicitly estimate discontinuity locations or sizes, but merely approximate the regression surface while preserving discontinuities. This leads to downward-biased jump size estimates in practice \citep[see e.g.][Fig. 7.1f]{caroccia2020mumford}. Simple thresholding approaches that estimate jump locations via gradient steepness almost surely misclassify discontinuities in noisy settings and fail to take into account the spatial regularity of the function that produces the jumps, resulting in severely biased estimates and high false positive and negative rates for jump detection \citep{chan2005image}. The Mumford-Shah (MS) functional \citep{mumford1989optimal} addresses these issues by combining smoothing and thresholding in a unified framework.
\subsubsection{The Mumford-Shah Functional}
Assuming the random variable $X$ has density $f_X$, our statistical version of the MS functional takes the form:
\begin{multline} \label{eq:MS}
E(u)=\lambda \underbrace{\int_{\mathcal{X}}(f(x)-u(x))^2 f_X(x)\dd x}_{\text{Regression}} + \underbrace{\int_{\mathcal{X} \backslash S_u}|\nabla u(x)|^2 f_X(x)\dd x}_{\substack{\text{Roughness  penalty} \\  \text{away from discontinuity}}} + \underbrace{\nu \mathscr{H}^{d-1}\left(S_u\right)}_{\substack{\text{Discontinuity regularity} \\ \text{ penalty}}}. \end{multline}
The functional combines three components:
1) A regression term that minimizes the mean-squared error between the true function $f$ and its approximation $u$; 2) a roughness penalty that controls the gradient size of $u$ away from the discontinuity set $S_u$; 3) a penalty on the ``size" of the discontinuity set, measured by its $(d-1)$-dimensional Hausdorff measure $\mathscr{H}^{d-1}$. The goal is to estimate the minimizer $u$ of this function, which immediately delivers an estimate of the discontinuity set $S_u$ as well as the jump sizes $u^+(x) - u^-(x)$ for each jump point $x \in S_u$. 

The hyperparameters $\lambda, \nu\geq0$ control the balance between function smoothness and boundary complexity, with the gradient term's weight normalized to 1. The squared Euclidean norm in $\mathbb{R}^d$ is denoted by $|\cdot|^2$.
\subsubsection{Convexification Through Calibrations}
The MS functional's non-convexity through its third term threatens consistent estimation, as optimization routines need not converge to global optima. In practice, this manifests as spurious discontinuities and boundaries (\citealp[Fig. 4-5]{brown2012completely}; \citealp[Fig. 5]{pock2009algorithm}). A common approach to non-convex optimization is to re-estimate the model multiple times with different initializations and select the solution that minimizes the objective function. However, this approach becomes computationally infeasible for functional minimands with dimension larger than 1, as the set of potential initializations grows too large. Moreover, it is not guaranteed to converge to the true global minimum.
We resolve this using calibrations \citep{alberti2003calibration, pock2010global}, which lift the problem to a higher dimension by considering the graph $\Gamma_f$ of $f$. The key is finding a vector field $p:\mathbb{R}^{d+1}\to\mathbb{R}^{d+1}$ on the lifted space that maximizes flux through the graph. This yields the convex relaxation:
\[
E(u)=\sup _{p \in K} \int_{\mathcal{X} \times \mathbb{R}} p \cdot D \mathds{1}_u
\]
with a convex constraint set
\begin{multline*}
K= \left \{p \in C_0\left(\mathcal{X} \times \mathbb{R}, \mathbb{R}^{d+1}\right):\right.\\ p^t(x, t) \geq \frac{\left\lvert p^x(x, t)\right\rvert^2}{4f_X(x)}-\lambda f_X(x)(t-f(x))^2, \quad
 \left|\int_{t_1}^{t_2} p^x(x, s) d s\right| \leq \nu, \thickspace \forall t_1, t_2 \in \mathbb{R}\bigg\}. 
\end{multline*}
Here, $t$ indexes the lifted dimension, and $\mathds{1}_u(x,t)$ is the indicator function of the subgraph of $u(x)$, taking value 1 if $t<u(x)$ and 0 otherwise. The measure $D\mathds{1}_u\coloneqq (D_1\mathds{1}_u,\ldots, D_{d+1}\mathds{1}_u)$ is $(d+1)$-dimensional Radon measure, with $p^x$ and $p^t$ denoting the first $d$ and last components of $p\in C_0\left(\mathcal{X} \times \mathbb{R}, \mathbb{R}^{d+1}\right)$ respectively.
To also achieve convexity over the target function space, we follow \citet{pock2009algorithm} and optimize over the more general space of functions mapping to the unit interval:
\begin{equation*}
v(x,t)\in  C \coloneqq \bigg\{ v\in SBV(\mathcal{X}\times\mathbb{R},[0,1]): \lim _{t \rightarrow-\infty} v(x, t)=1, \lim _{t \rightarrow+\infty} v(x, t)=0 \bigg\}.
\end{equation*}
Here, $SBV(\mathcal{X}\times\mathbb{R},[0,1])$ denotes the space of special functions of bounded variation mapping to the unit interval. This class is particularly well-suited to our problem as it naturally accommodates functions with discontinuities while still being small enough to be controllable.
\begin{definition}\label{def:FDR}
The Free Discontinuity Regression estimator is the $0.5$-level set of the optimizer $v^*$ of the optimization problem
\begin{equation} \label{eq:main_problem}
\inf_{v \in C} E(v) \coloneqq \inf_{v\in C}\sup _{p \in K}\langle p, Dv\rangle\equiv \inf _{v \in C} \sup _{p \in K} \int_{\mathcal{X} \times \mathbb{R}} p \cdot D v.
\end{equation}
\end{definition}
This optimization problem is well-behaved due to its convexity and admits a solution under mild conditions:
\begin{proposition}\label{prop:existence}
The optimization problem \eqref{eq:main_problem} admits a global solution $v^*$ in $SBV(\mathcal{X}\times \mathbb{R})\cap\{v: |Dv|\leq c\}$ for fixed $c<+\infty$ if $f\in SBV(\mathcal{X})$.
\end{proposition}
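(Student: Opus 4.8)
The plan is to argue by the direct method of the calculus of variations, exploiting the fact that, by construction, $E(v)=\sup_{p\in K}\langle p,Dv\rangle$ is a supremum of linear functionals and is therefore convex; the work is to pair weak-$*$ lower semicontinuity of $E$ with sequential compactness of the feasible set $A:=C\cap\{v:|Dv|\le c\}$. First I would check that the problem is non-degenerate. The set $A$ is nonempty: since $f\in SBV(\mathcal{X})$, its subgraph is a set of finite perimeter, so $\mathds{1}_{\{t<f(x)\}}\in SBV(\mathcal{X}\times\mathbb{R})$, it satisfies the two boundary conditions defining $C$, and it lies in $A$ once $c$ exceeds the perimeter of that subgraph. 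Moreover $E\ge 0$ on $C$, because $p\equiv 0$ is admissible in $K$ (both defining inequalities hold trivially at $p=0$), whence $E(v)\ge\langle 0,Dv\rangle=0$. Thus $\inf_A E\in[0,+\infty)$ and a minimizing sequence $\{v_n\}\subset A$ exists.

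Next I would establish lower semicontinuity of $E$ along weak-$*$ convergence in $BV$. For each fixed $p\in K\subset C_0(\mathcal{X}\times\mathbb{R};\mathbb{R}^{d+1})$, the linear functional $v\mapsto\langle p,Dv\rangle=\int p\cdot dDv$ is continuous whenever $Dv_n\rightharpoonup^* Dv$, since this is precisely the pairing of a continuous field vanishing at infinity against weak-$*$ convergent vector measures. A pointwise supremum of continuous functionals is lower semicontinuous, so $E(v^*)\le\liminf_n E(v_n)$ as soon as $v_n\rightharpoonup^* v^*$. It remains to produce such a limit inside $A$.

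The crux, and the step I expect to require the most care, is compactness on the unbounded strip $\mathcal{X}\times\mathbb{R}$: the uniform bounds $0\le v_n\le 1$ and $|Dv_n|\le c$ are not by themselves enough because the $t$-direction is infinite. Here I would use the boundary conditions $v\to 1$ as $t\to-\infty$ and $v\to 0$ as $t\to+\infty$ together with the coercive fidelity constraint $p^t\ge|p^x|^2/(4f_X)-\lambda f_X(t-f)^2$ in $K$, which renders $E$ insensitive to the behaviour of any competitor far from the graph of $f$ and forces its transition into a bounded band; this lets me reduce to a bounded slab $\mathcal{X}\times[-M,M]$ on which competitors outside the slab are pinned to the constants $1$ and $0$. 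On this bounded domain the standard $BV$ compactness theorem \citep{ambrosio2000functions} yields a subsequence with $v_{n_k}\to v^*$ in $L^1$, $Dv_{n_k}\rightharpoonup^* Dv^*$, and $|Dv^*|\le\liminf_k|Dv_{n_k}|\le c$, so that $0\le v^*\le 1$ and the boundary conditions pass to the limit, giving $v^*\in C\cap\{|Dv|\le c\}$. Combined with the lower semicontinuity above, $v^*$ minimizes $E$ over $A$.

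The one genuinely delicate point embedded in the previous step is that a weak-$*$ $BV$ limit of $SBV$ functions may develop a Cantor part and thereby leave $SBV$. To keep $v^*\in SBV(\mathcal{X}\times\mathbb{R})$ I would invoke Ambrosio's $SBV$ closure theorem \citep{ambrosio2000functions}, which, beyond the total-variation bound $|Dv_n|\le c$, asks for separate uniform control of the approximate-gradient and jump contributions along the minimizing sequence. Both are available because finiteness of the calibration energy reproduces exactly the gradient and $\mathscr{H}^{d-1}$ terms of the Mumford--Shah functional \eqref{eq:MS} and carries no Cantor term, so an energy bound along $\{v_n\}$ bounds these two pieces and rules out concentration of a Cantor part in the limit. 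Everything else is a routine application of convexity, weak-$*$ lower semicontinuity, and $BV$ compactness, so this SBV-closure argument is where I would concentrate the technical effort.
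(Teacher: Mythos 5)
Your proposal follows the same skeleton as the paper's proof: the paper establishes weak$^*$-compactness of $C\cap\{v:|Dv|\le c\}$ by citing Theorem 4.8 of \citet{ambrosio2000functions} and continuity of the pairing, then applies Berge's maximum theorem, which is just the Weierstrass packaging of your direct method. Your preliminary steps are correct and more explicit than the paper's: nonemptiness of the feasible set via the subgraph indicator of $f$, the bound $E\ge 0$ from $p\equiv 0\in K$, and weak$^*$ lower semicontinuity of $E$ as a supremum of weak$^*$-continuous linear functionals. Your concern about the unbounded $t$-direction is also legitimate (purely local $BV$ compactness would let the transition layer escape to $t=\pm\infty$ and destroy the conditions defining $C$ in the limit), and coercivity is the right repair, though as written it is only a sketch: admissible fields must lie in $C_0$, so the penalty on far-away transitions must be made uniform along the minimizing sequence, e.g.\ by showing that pinning a competitor to $1$ on $\{t<-M\}$ and to $0$ on $\{t>M\}$ never increases $E$. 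That step is fixable.

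The genuine gap is your final step, keeping the limit in $SBV$. The identity behind your claim—that finite calibration energy ``reproduces exactly the gradient and $\mathscr{H}^{d-1}$ terms'' of \eqref{eq:MS}—is the exactness result $E(\mathds{1}_u)=MS(u)$, which holds only for indicator functions of subgraphs; elements of a minimizing sequence in $C$ need not be of that form. For general $v\in C$, the convexified functional penalizes a jump of $v$ of height $h$ only \emph{linearly}, at rate $\nu h$ (this is forced by the constraint $\left\lvert\int_{t_1}^{t_2}p^x\,ds\right\rvert\le\nu$), whereas Ambrosio's closure theorem \citep[Theorem 4.7]{ambrosio2000functions} needs a jump integrand $\theta$ with $\theta(h)/h\to\infty$ as $h\downarrow 0$. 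Hence a uniform energy bound does \emph{not} control $\mathscr{H}^d(J_{v_n})$ and does not exclude Cantor parts. Concretely, take $d=1$, $\mathcal{X}=(0,1)$, $f_X\equiv1$, $f$ bounded; let $w$ be the Cantor function, $w_n\coloneqq 2^{-n}\lfloor 2^n w\rfloor$, and $v_n\coloneqq\mathds{1}_{\{t<0\}}+w_n(x)\mathds{1}_{\{0\le t<1\}}$. Then $v_n\in C\cap SBV(\mathcal{X}\times\mathbb{R})$ with $|Dv_n|\le 3$ and $E(v_n)\le\nu+\lambda\bigl(\|f\|_\infty^2+(1+\|f\|_\infty)^2\bigr)$ uniformly in $n$: the $2^n$ vertical jump segments of height $2^{-n}$ contribute at most $\nu\sum_j 2^{-n}=\nu$, the horizontal jumps are controlled because $p^t\ge-\lambda f_X(t-f)^2$, and $\nabla v_n=0$. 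Yet $\mathscr{H}^1(J_{v_n})\ge 2^n\to\infty$, and $v_n\overset{*}{\rightharpoonup}v_\infty\coloneqq\mathds{1}_{\{t<0\}}+w(x)\mathds{1}_{\{0\le t<1\}}$, whose derivative contains the Cantor measure $Dw\otimes\mathcal{L}^1\mres(0,1)$, so $v_\infty\notin SBV$. So bounded energy plus bounded variation is compatible with losing the $SBV$ property in the limit; your invocation of the closure theorem is unjustified, and the example also shows that $C\cap\{|Dv|\le c\}\cap SBV$ is not weak$^*$-closed, so no argument resting solely on compactness of this set plus semicontinuity can, by itself, deliver a minimizer \emph{in} $SBV$. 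What is needed is an argument that \emph{some} minimizer is of indicator type—e.g.\ via the generalized coarea/thresholding structure of $E$, since the derivative of the indicator of a finite-perimeter set is a purely jump measure and such indicators are automatically $SBV$—i.e.\ one must exploit (near-)minimality, not mere energy boundedness, and that is exactly where the remaining work lies.
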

While the solution $v^*$ need not take the form of an indicator function---which is required for the relaxation to be tight---we prove that it converges to one as $\lambda\to +\infty$ in a certain topology (Theorem \ref{thm:ident}). In practice, our automatic hyperparameter selection procedure selects values of $\lambda$ that are many orders of magnitudes larger than the selected $\nu$, and as such also leads to estimated solutions that are indicator functions in practice. Following \citet{pock2009algorithm}, we obtain the underlying function $u$ by thresholding $v$ at the $0.5$-level set. When $v$ is an indicator function, any threshold $t \in (0,1)$ would yield equivalent results; the midpoint is natural when the solution deviates slightly from an indicator.

\section{Statistical Properties}\label{sec:stats}

This section establishes the mathematical and statistical properties of the estimator. We first show that the solution to \eqref{eq:main_problem} recovers both the true discontinuity set and jump sizes as $\lambda\to+\infty$ with fixed $\nu>0$. We then prove consistency of our empirical estimator via $\Gamma$-convergence \citep{dal2012introduction}. All proofs appear in Appendix \ref{sec:proofs}.

We also provide practical tools for implementation: a data-driven method for selecting tuning parameters $\lambda,\nu$ using Stein's unbiased risk estimate (SURE), and two approaches to uncertainty quantification. The first uses subsampling with estimated convergence rates \citep{politis1999subsampling} and should be used for estimation procedures, while the second employs split conformal inference \citep{lei2018distribution} for computationally efficient uncertainty quantification in high-dimensional prediction. %Full theoretical inference results for the general setting we consider, absent strong regularity conditions on the discontinuity set and the regression surface \citep[e.g.][]{hutter2016optimal}, are an interesting avenue for future research. 

\subsection{Identification of the Discontinuity Sets and Jump Sizes}\label{sec:identification}

We first establish that the true function $f$ is identified under mild regularity conditions. Specifically, we show that for fixed $\nu>0$ and as $\lambda\to+\infty$, the solution to \eqref{eq:main_problem} recovers both the correct discontinuity set and jump sizes. This complements and extends previous results \citep{richardson1992limit,morini2002global} which established convergence of the discontinuity set for the classical non-convex Munford-Shah functional. This result is the first to provide mathematical and statistical properties for the convexified problem.

We require two assumptions, following \citet{richardson1992limit}. Let $S_f$ denote the discontinuity set of $f$, which includes the jump set $J_f$ and coincides with it $\mathscr{H}^{d-1}$-almost everywhere by the Federer-Vol'pert theorem \citep[Theorem 3.78]{ambrosio2000functions}.

\begin{assumption}\label{ass:ident1}
The density $f_X$ is bounded above and below everywhere on its support $\mathcal{X}$, i.e., $0<c_x\leq f_X(x)\leq \frac{1}{c_x}<+\infty$ for all $x\in\mathcal{X}$. Further, $f\in SBV(\mathcal{X})$ and there is a constant $c>0$ such that $|f(x)|\leq c$ for $\mathcal{L}^d$-almost every $x\in\mathcal{X}$. Moreover, $\int_{\mathcal{X}}\left\lvert\nabla f\right\rvert^2\dd x +\mathscr{H}^{d-1}(S_f) <+\infty$.
\end{assumption}

\begin{assumption}\label{ass:ident2}
For any $x\in S_f$ it holds that $\mathscr{H}^{d-1}\left(S_f\cap B_\rho(x)\right)>0$ for all $\rho>0$. Moreover, for any set $A\subset\mathcal{X}$ with $\text{dist}(A,S_f)>0$ there exists a constant $L>0$ such that $|f(x)-f(y)|\leq L|x-y|$ for any $x,y\in A$. 
\end{assumption}

Assumption \ref{ass:ident1} imposes standard boundedness on the functions and their jump sizes. Assumption \ref{ass:ident2} ensures regularity both at and away from the discontinuity set. The first part implies connectedness of the discontinuity set when measured with $\mathscr{H}^{d-1}$ on $\mathcal{X}$ (e.g., precluding isolated points in two dimensions). This ensures $\mathscr{H}^{d-1}((\mathcal{X}\cap \bar{S}_f)\setminus S_f) = 0$, meaning the closure $\bar{S}_f$ coincides with $S_f$ $\mathscr{H}^{d-1}$-almost everywhere \citep[p.~337]{ambrosio2000functions}.

\begin{theorem}\label{thm:ident}
Let Assumptions \ref{ass:ident1} and \ref{ass:ident2} hold. Then for fixed $\nu>0$ and in the limit as $\lambda\to+\infty$ every sequence of solutions $v^*(\lambda)$ to \eqref{eq:main_problem} satisfies $\lim_{\lambda\to+\infty}\nabla v^*(\lambda) = 0$ $\mathcal{L}^{d+1}$-almost everywhere. Moreover, the jump set $J_{v^*}(\lambda)$ converges in Hausdorff distance $d_H$ to the graph $\Gamma_f$, i.e.
\begin{equation*}
\lim_{\lambda\to+\infty} d_H\left(J_{v^*}(\lambda),\Gamma_f\right) = 0.
\end{equation*} 
\end{theorem}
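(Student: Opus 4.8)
The plan is to exploit the calibration identity, which makes $E$ restricted to subgraph indicators coincide with the Mumford--Shah energy \eqref{eq:MS}, and to bound the energy of any minimizer by that of the true function used as a competitor. Concretely, since $f\in SBV(\mathcal{X})$ with $\int_{\mathcal{X}}|\nabla f|^2\dd x+\mathscr{H}^{d-1}(S_f)<+\infty$ and $f_X$ is bounded (Assumption \ref{ass:ident1}), the subgraph indicator $\mathds{1}_f$ is an admissible competitor whose energy
\[ E(\mathds{1}_f)=\int_{\mathcal{X}\setminus S_f}|\nabla f|^2 f_X\dd x+\nu\,\mathscr{H}^{d-1}(S_f)=:M<+\infty \]
is finite and independent of $\lambda$, because the fidelity term vanishes when $u=f$. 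Hence every global minimizer satisfies $E(v^*(\lambda))\le M$ for all $\lambda$. Writing $u_\lambda$ for the $0.5$-level set of $v^*(\lambda)$ and reading off the three terms of \eqref{eq:MS} from this bound yields the uniform estimates $\lambda\int_{\mathcal{X}}(f-u_\lambda)^2 f_X\dd x\le M$, $\int_{\mathcal{X}\setminus S_{u_\lambda}}|\nabla u_\lambda|^2 f_X\dd x\le M$, and $\mathscr{H}^{d-1}(S_{u_\lambda})\le M/\nu$. The first, together with the lower bound $f_X\ge 1/k$, gives $L^2$-convergence $\|u_\lambda-f\|_{L^2(\mathcal{X})}^2=O(1/\lambda)\to 0$; the last two supply the compactness needed below.

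For the gradient statement I would argue that the minimizer is forced toward the indicator $\mathds{1}_f$, whose absolutely continuous gradient vanishes. The mechanism is the fidelity weight: in the lifted energy the diffuse (smooth) transition of $v$ in the $t$-direction is penalized by a factor proportional to $\lambda f_X(t-f(x))^2$, so any mass of $v^*$ transitioning between $0$ and $1$ at heights $t\neq f(x)$ costs order $\lambda$. Since the total energy stays bounded by $M$, the region on which $\nabla v^*(\lambda)$ is bounded away from $0$ must collapse onto the $\mathcal{L}^{d+1}$-null graph $\{t=f(x)\}$ as $\lambda\to+\infty$; extracting the decomposition $Dv^*=\nabla v^*\,\mathcal{L}^{d+1}+D^j v^*$ in $SBV$ and combining the energy bound with this concentration gives $\nabla v^*(\lambda)\to 0$ $\mathcal{L}^{d+1}$-almost everywhere, i.e.\ $v^*$ becomes indicator-like and all of its derivative concentrates on its jump set.

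For the Hausdorff convergence I would first note that $J_{v^*(\lambda)}$ is the (filled) graph $\Gamma_{u_\lambda}$ of the level-set function, so the claim is that $\Gamma_{u_\lambda}\to\Gamma_f$ in $d_H$, which splits into two inclusions. The lower inclusion---every point of $\Gamma_f$ is approximated by points of $\Gamma_{u_\lambda}$---follows from upgrading the $L^2$-convergence $u_\lambda\to f$ to convergence of the one-sided traces $u_\lambda^\pm\to f^\pm$ on $S_f$, which is exactly what pins down the jump sizes $f^+-f^-$ and hence the vertical segments of the graph; here I would use the uniform $BV$ bound, lower semicontinuity of the surface energy, and the regularity of $S_f$ from Assumption \ref{ass:ident2}. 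The upper inclusion---no spurious jumps far from $\Gamma_f$---is the analogue of the discontinuity-set convergence of \citet{richardson1992limit} and \citet{morini2002global} and rests on uniform density lower bounds for $\mathscr{H}^{d-1}\lfloor S_{u_\lambda}$ (so that jumps cannot appear where $f$ is Lipschitz, cf.\ Assumption \ref{ass:ident2}) together with the perimeter bound $\mathscr{H}^{d-1}(S_{u_\lambda})\le M/\nu$. Under Assumptions \ref{ass:ident1}--\ref{ass:ident2} these are precisely the ingredients of the classical result, which I would invoke for the projected set $S_{u_\lambda}\to S_f$ and then lift to the graph using the trace convergence just established.

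The hard part will be the trace/jump-size convergence underlying the lower inclusion: $L^2$ proximity of $u_\lambda$ to $f$ does not by itself control the values $u_\lambda^\pm$ on the codimension-one set $S_f$, so one must combine the $SBV$ compactness, the lower semicontinuity of the surface energy, and the non-degeneracy of $S_f$ (Assumption \ref{ass:ident2}) to prevent the approximate jump from either smearing out or landing at the wrong height. Making the ``collapse onto $\{t=f\}$'' argument quantitative enough to conclude \emph{almost-everywhere} (rather than merely in-measure) vanishing of $\nabla v^*$ is the second delicate point, which I would handle by passing to the limit along the energy bound and using the strict convexity of the lifted integrand in the gradient variable.
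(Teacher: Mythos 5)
Your anchor step---that $E(\mathds{1}_f)$ is finite uniformly in $\lambda$ because the fidelity term vanishes on the graph---matches the paper's starting point, and your ``collapse onto $\{t=f(x)\}$'' mechanism for the gradient is morally the paper's argument (there phrased dually: off the graph the constraint $p^t \geq \lvert p^x\rvert^2/(4f_X) - \lambda f_X(t-f(x))^2$ degenerates as $\lambda\to+\infty$, so the range of $p^t$ expands to all of $\mathbb{R}$ and any diffuse gradient makes $\sup_{p\in K}\int p\cdot Dv$ blow up). But the core of your plan has a genuine gap: you pass from the bound $E(v^*(\lambda))\leq M$ on the \emph{relaxed} energy to the three Mumford--Shah bounds for the \emph{thresholded} function $u_\lambda$, i.e.\ you implicitly use $MS(u_\lambda)\leq E(v^*(\lambda))$. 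That inequality is exactly the exactness (generalized coarea) property of the calibration relaxation, which is not known to hold: the minimizer $v^*(\lambda)$ of \eqref{eq:main_problem} need not be a subgraph indicator at any finite $\lambda$, and thresholding a non-binary $v^*$ is a heuristic whose MS energy is not controlled by the relaxed energy. (Indeed, proving that $v^*$ becomes indicator-like is part of what the theorem asserts, so assuming it at the outset is circular.) Everything downstream---the $O(1/\lambda)$ fidelity rate, the uniform Dirichlet and perimeter bounds, the SBV compactness---inherits this gap.

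The second, related problem is your appeal to \citet{richardson1992limit} and \citet{morini2002global} for the ``no spurious jumps'' inclusion. Those results are theorems about \emph{minimizers} of the non-convex MS functional: the uniform density lower bounds on $\mathscr{H}^{d-1}\mres S_{u_\lambda}$ that rule out stray jump pieces come from (quasi-)minimality, not from bounded energy, and bounded MS energy alone is compatible with arbitrarily wild small jump sets. Since $u_\lambda$ is not known to be an MS minimizer, you cannot invoke them. The paper avoids both issues by never leaving the lifted space: it shows (i) $E(\mathds{1}_f)<+\infty$ uniformly in $\lambda$ via the calibration representation and Assumption \ref{ass:ident1}; (ii) any $v\in C$ whose diffuse gradient is nonzero on a set of positive $\mathcal{L}^{d+1}$-measure has diverging energy, because $v$ must transition from $1$ to $0$ in $t$ and off-graph transitions cost arbitrarily much once the range of $p^t$ opens up; and (iii) if $J_v$ fails either Hausdorff inclusion relative to $\Gamma_f$, then---using the Lipschitz property and the non-degeneracy of $S_f$ from Assumption \ref{ass:ident2}, the area formula, a projection bound, and rectifiability/density arguments---there is a piece of $J_v$ (or of $\Gamma_f$ missed by $J_v$) of positive $\mathscr{H}^d$-measure lying off the graph, on which the same dual blow-up forces $E(v)\to+\infty$, contradicting minimality against $\mathds{1}_f$. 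Comparison with competitors happens only through the single bounded quantity $E(\mathds{1}_f)$, never through a thresholded primal function.
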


The proof of \ref{thm:ident} requires $\lambda$ to diverge ``fast enough'' based on the countable rectifiability \citep{mattila1999geometry} of the jump set: for any given sequence $\{r_m\}$ converging to 0 defining the lower density \citep{ambrosio2000functions} of the jump set, we need $\{\lambda_m\}$ such that $\lambda_mr_m\to+\infty$. This is not a restriction since $\lambda$ can be arbitrarily large. In practice, $\lambda$ must substantially exceed $\nu$ to identify correct jump sizes. Our hyperparameter selection procedure consistently finds such values, with $\lambda$ orders of magnitude larger than $\nu$, yielding convergent estimates of both jump set and sizes as $N$ grows. We now turn to the practical implementation and the corresponding estimator.

\subsection{The Empirical Estimator and Convergence}\label{sec:consistency}

We now develop the empirical version of the estimator for observed data. Consider a random sample $\{X_i,Y_i\}_{i=1}^n$ from distribution $P_{Y,X}$, with errors $\varepsilon_i$, which can be correlated and not identically distributed, as long as they satisfy a law of large numbers. This is important for the spatial setting in our application, where the error terms in the nonparametric regression are correlated in general. Sufficient for this are mixing conditions. In the following, to guarantee consistency of the kernel density estimator $\hat{f}_{X,h(n)}$, we assume that the marginal data generating process $\{X_i\}_{i\in\mathbb{N}}$ is stationary. This can be relaxed \citep{irle1997consistency}.

The key challenges in moving to random data are twofold: 
\begin{enumerate}
    \item[(i)] First, the random location of the points. Unlike computer vision applications \citep{pock2009algorithm} where values lie on a fixed lattice, our observations occur at arbitrary points in the domain. While recent approaches handle this using nearest-neighbor graphs \citep{caroccia2020mumford, chambolle2017accelerated, chambolle2021learning}, such methods are less suitable here as our constraint set $K$ requires explicit dimension-wise operations, particularly summation over one dimension.
    \item[(ii)] The second source of randomness is the error term $\varepsilon_i$. In contrast to existing image segmentation approaches, we allow the terms $\varepsilon_i$ to be dependent, that is, a strongly mixing process. 
\end{enumerate}
%This paper is the first in the literature to mathematically analyze these two sources of randomness jointly, while allowing for dependence in the data generating process. The applications of this setting are manifold: from spatial regression approaches as in our application below, to image recognition with randomly sampled data \ndfg{David here LIDAR} to...

To accommodate the random locations, we discretize \eqref{eq:main_problem} on a regular grid following \citet{pock2009algorithm}. Specifically, we overlay an $(N_1 \times N_2 \times \ldots \times N_d) \times S$ pixel grid on the random point cloud $\{X_i\}_{i=1}^n$, defining
$
\mathcal{Q}_N=\{(k_1,\ldots, k_d, s): k_i=1,2, \ldots, N_i; \ldots; k_d = 1,\ldots N_d; s=1,2, \ldots, S\}.
$
To ease notation, we assume without loss of generality that all dimensions are discretized in the same manner, $s\coloneqq k_{d+1}$, $N_1,\ldots,N_{d+1} = N$. Also, denote $k\coloneqq (k_1,\ldots,k_{d+1})$, and $k_{-}\coloneqq (k_1,\ldots,k_d)$. The empirical analogue of $v\in C$ becomes
\[\hat{v}_{Nn}(x)\equiv \hat{v}_{Nn}(\tilde{x},t)\coloneqq \sum_{1\leq k_1,\ldots,k_{d+1}\leq N}v_{k_1,\ldots,k_{d+1}}\mathds{1}\left\{x\in Q_{k_1,\ldots,k_{d+1}}\right\}\]
where $v_k$ is assigned to the center of cube $Q_k$.  We define the empirical analogue $p_N$ of $p\in K$ to take values on the boundary of the respective cubes. That is, for each $j=1,\ldots, d+1$, the corresponding value 
$p_{k_1,\ldots,k_{j}+\frac{1}{2},\ldots,k_{d+1}}$ lies on the boundary $\partial Q_{k_1,\ldots,k_{j},\ldots,k_{d+1}}\cap \partial Q_{k_1,\ldots,k_{j}+1,\ldots,k_{d+1}}$.
This preserves duality in the empirical setting, as the empirical analogues of $p$ are defined on boundaries of the cubes, while the analogues of $v$ are defined in the center \citep{chambolle2021learning}. Problem \eqref{eq:main_problem} becomes
\begin{equation} \label{eq:main_problem_discrete}
\begin{aligned}
&\min _{v \in \tilde{C}_N } \hat{E}_{Nn}(v) \coloneqq  \min_{v \in \tilde{C}_N} \max _{p \in \hat{K}_{Nn}}  \langle p, D_Nv\rangle_N  
\end{aligned}
\end{equation}
with
\begin{equation*}
\begin{aligned}
&\tilde{C}_N=\left\{v: v(k) \in[0,1], v(k_1, \ldots, k_d, 1)=1, v(k_1, \ldots, k_d, N)=0\right\}\\
&\hat{K}_{Nn} =  \left\{p=\left(p^x, p^t\right)^T :   
 p^t(k) \geq \frac{\left\lvert p^x(k)\right\rvert^2}{4\thickspace \hat{f}_{X,h(n)}(k_{-})}-  \lambda\thickspace \hat{f}_{X,h(n)}(k_{-})\thickspace\left(\frac{k_{d+1}}{N}-\hat{f}_{Nn}(k_{-})\right)^2,\right.\\
 &\hspace{2cm}\left.\left|\frac{1}{N}\sum_{s_1 \leq k_{d+1} \leq s_2} p^x(k)\right| \leq \nu\right\},
\end{aligned}
\end{equation*}
where $D_Nv$ is the normalized forward difference and $\langle p, Dv\rangle_N$ the scalar product between two vectors,
\[\langle p_N,D_Nv_N\rangle_N =\sum_{0\leq k_1,\ldots, k_{d+1}\leq N} N\left(v^\uparrow_{k_1,\ldots, k_{d+1}} - v_{k_1,\ldots, k_{d+1}}\right) p_{k_1,\ldots,k_{d+1}}, \]
with $v^\uparrow_{k_1,\ldots,k_j,\ldots, k_{d+1}}\coloneqq v_{k_1,\ldots,k_{j}+1,\ldots,k_{d+1}}$ and $p_k^\uparrow = p_k\cdot e^\uparrow$. The constraints hold for all grid points $k$, and $s_1 \leq k_{d+1} \leq s_2$ iterates over indices with $1\leq s_1 \leq k_{d+1} \leq s_2 \leq S$.

For function estimation, we use weighted averages within each grid cube:
\[\hat{f}_{Nn}(\bar{x}_{k_{-}})=\sum_{i: X_i\in Q_{k_{-}}} w_i Y_i\] 
where $\bar{x}_{k_{-}}$ is the cube center and weights $w_i$ sum to 1; for simplicity, we assume uniform weights, i.e., $w_i = 1 / \#\left\{i: X_i\in Q_k\right\}$. We estimate the density $f_x$ via a standard Nadaraya-Watson estimator 
\[\hat{f}_{X,h(n)} = \frac{1}{nh(n)^d}\sum_{i=1}^n K\left(\frac{X_i-x}{h(n)}\right),\] where  $h(n)$ is a bandwidth and $K(\cdot)$ is a multivariate kernel. We assume that $K:\mathbb{R}^d\to\mathbb{R}$ satisfies $K(x\rho) \geq K(x)$ for all $x\in\mathbb{R}^d$ and $0\leq\rho\leq 1$, $\int_{\mathbb{R}^d} K(x) dx = 1$, $K(x) = O(|x|^{-n-\rho})$ as $|x|\to+\infty$ for some $\rho>0$, and that $\sup_{x\in\mathbb{R}^d} K(x) <+\infty$. In the following, we call such a kernel \emph{regular} \citep{irle1997consistency}.

We make the following assumption on the data-generating process.
\begin{assumption}\label{ass:stats}
Let $(\Omega, \mathcal{A}, P)$ be a probability space and let the following hold:
\begin{enumerate}
\item[(i)] (Mixing assumption and sampling frequency)\\ The number of grid cubes $Q_k$ grows as $N^d(n) = O(n^m)$ for some $0<m<1$. Furthermore, $(Y_i,X_i,\varepsilon_i):\Omega\to \mathbb{R}\times\mathbb{R}^d\times\mathbb{R}$ is an $\alpha$-mixing sequence with $\alpha(u) = O(u^{-s})$ with $s>\max\left\{\frac{r+m}{1-m},\frac{r+1}{\frac{2v}{2v+1}(1-m)}\right\}$ for some $r>1$ and $v\geq 3$. The marginal process $\{X_i\}_{i=1}^n$ is strictly stationary. The penalty term $\nu$ is fixed, and $\lambda_n=o(N(n))$.

\item[(ii)] (Moment conditions)\\
$\mathbb{E}\left[\varepsilon_i \vert X_i\right]=0$ and the moment generating function $\mathbb{E}[\exp(t\varepsilon_i)]$ exists in a neighborhood of zero; also, $\mathbb{E}[|Y_i|^{r+\delta}]<M<+\infty$ for all $i$ and some $\delta>0$. The $X_i$ have a common marginal density $f_X$, which has compact support $\mathcal{X}$ and is bounded away from zero and infinity, i.e., $0< c_x\leq f_X(x)\leq \frac{1}{c_x}<+\infty$ for all $x\in\mathcal{X}$. 

\item[(iii)] (Regularity of kernel and histogram estimator)\\ Let $K:\mathbb{R}^d\to\mathbb{R}$ be a regular kernel and $h(n)\to0$ with $nh(n)^d\to+\infty$ as $n\to+\infty$ and $\sum_{i=1}^\infty (i^2 h(i)^{d+d/l})^{-1} < +\infty$, where $0<l<s$. The weights $w_i$ in the histogram estimator take the form $w_i = 1/ \sum_{i=1}^n 1(X_i\in Q_k)$.
\end{enumerate}
\end{assumption}
These regularity assumptions are standard. Part (i) requires a bound on the mixing parameter $s$ that is very minor, since $r>1$ can be as close to unity as we want. It involves the rate $m$ at which the number $N^d$ of grid points grows with the number of data points $n$. The slower this growth, the smaller the mixing constant can be. The strict stationarity assumption on $\{X_i\}_{i=1}^n$ is done for convenience in the proof and can be relaxed. Further, we allow the penalty term $\lambda_n$ to vary with the data, but very slowly, in order to still obtain the almost sure convergence. The moment conditions in Part (ii) are very minor and allow for unbounded error terms $\varepsilon_i$ as long as their tails decay quickly enough. The regularity for kernel and histogram estimators in Part (iii) are standard and taken from the kernel estimator results in \citet{irle1997consistency}.

Before proving the full statistical result, we provide a consistency result for the simpler setting where we pretend $f(x)$ is known and does not have to be estimated. This result is novel in the image recognition literature and complements recent convergence results in this area \citep[e.g.][]{chambolle2021learning}. It also showcases the complication when introducing the two sources of randomness.
\begin{proposition}\label{prop:consistency_det}
Let Assumption \ref{ass:stats} hold, let $\hat{f}_{Nn}\equiv f$ be a fixed function whose values are given in the center $\bar{x}_k$ of each cube $Q_k$, and set $f_X(x)\equiv 1$ everywhere on $\mathcal{X}$. Then, for fixed penalty terms $\nu,\lambda$, $E_N(v)$ $\Gamma$-converges in the weak$^*$-topology to  
\[E(v)\coloneqq \begin{cases} \sup_{p\in K} \int_{[0,1]^{d+1}} p\cdot Dv&\text{if $v\in C$}\\ +\infty&\text{else} \end{cases} .\]
\end{proposition}

Now we add the randomness to $f$ as well as $f_X$. The fact that $f$ has to be estimated through random draws complicates the estimation procedure significantly. In fact, any estimator $\hat{f}_{Nn}$ for $f$ that averages out the error caused by $\varepsilon_i$, including the proposed histogram estimator, converges to a precise representative $\tilde{f}$ of $f$. This means that it is consistent only for points $x\not\in S_f$. For points $\tilde{x}\in S_f$, $\hat{f}_{Nn}(\tilde{x})$ converges almost surely to $\tilde{f}(\tilde{x}) = (1-\theta) f^+(\tilde{x}) +\theta f^{-}(\tilde{x})$, that is, a convex combination of the values of the traces $f^+(\tilde{x})$ and $f^-(\tilde{x})$. The weight $\theta$ depends on the angle of the hyperplane induced by the approximate differential and the weighting from $f_X$. If all $Q_k$ are cubes and $f_X$ is the density of the uniform distribution, then $\theta=1/2$. Importantly, $0<\theta<1$, which implies that the proposed estimator identifies the location of the jumps of $f$ perfectly. This is captured in the following result.

\begin{theorem}\label{thm:consistency_rand}
Let Assumptions \ref{ass:ident1} -- \ref{ass:stats} hold. Then $\hat{E}_{Nn}(v)$ $\Gamma$-converges in the weak$^*$-topology almost surely to
\[\tilde{E}(v)\coloneqq \begin{cases} \sup_{p\in \tilde{K}}\int_{\mathcal{X}\times\mathbb{R}} p\cdot Dv&\text{if $v\in C$}\\ +\infty&\text{else}, \end{cases}\] 
where $\tilde{K}$ is the same as the constraint $K$, but $f$ replaced by 
\[\tilde{f}(x) = \begin{cases} f(x) &\text{if $x\not\in S_f$} \\ (1-\theta) f^+(x) + \theta f^-(x) &\text{otherwise,}\end{cases}\]
where 
\[\theta = \frac{\int_{Q_kN\cap H^+(x,\rho)} f(x) f_X(x) dx}{\int_{Q_{kN}} f_X(x)dx}\] with $\{Q_{kN}\}_N$ the sequence of grid cubes containing and converging to $\{x\}$ and $H^+(x,\rho) = \{y: \langle y-x,\rho\rangle\geq0\}$ a half space created by the hyperplane $H(x,\rho)$ at $x$ oriented by $\rho \in S^{d-1}$ and separating the traces $f^+(x)$ and $f^-(x)$. 
\end{theorem}
Complementing this with a compactness result yields convergence of the minimizer:

\begin{corollary}\label{corr:consistency_rand}
    Let the assumptions from Theorem \ref{thm:consistency_rand} hold and let $\hat{v}_{Nn}$ be minimizers of $\hat{E}_{N,n}(v)$ with $v\in C\cap \{v\in SBV(\mathcal{X}\times\mathbb{R}): |Dv|\leq 1\}$. Then $\hat{v}_{Nn}\to \tilde{v}$ in the weak$^*$-topology almost surely, where $\tilde{v}$ is a minimizer of the deterministic population objective $\tilde{E}(v)$.  
\end{corollary}

This shows that the estimator is consistent for a precise representative of $f$, identifying the correct jump locations as $\lambda_n\to+\infty$. As shown in the simulations in Figure \ref{fig:simulated_examples}, the proposed estimator also allows to identify the correct jump size, not just the location. The key for this is the weak$^*$-convergence of the optimizer as proved in Theorem \ref{thm:consistency_rand}: by the closure of SBV \citep[Theorem 4.7]{ambrosio2000functions}, this implies that the traces of $v_n$ converge $\mathcal{H}^{d-1}$-almost everywhere to the traces of $v$. Since the traces are identified, this also identifies the jump sizes directly. Finally, standard convergence results for level-sets \citep[e.g.][]{camilli1999note} imply that the $0.5$-level set, which is our estimator for the true function, also converges almost surely. 

In practice, following the vision literature \citep{strekalovskiy2014real},
we detect the empirical discontinuity set by thresholding the estimated
gradient,
\[
   \hat S_u \;=\;
   \bigl\{x\in\mathcal X \;:\;
          \|\nabla\hat u_{Nn}(x)\|\;\ge \sqrt{\nu}\bigr\}.
\]
That gives us the estimated jump locations. To estimate the corresponding jump sizes, for every boundary voxel
\(x\in\hat S_u\), we march in the estimated outward normal direction
\(\hat{\rho}_{N_n}(x)\) and in the inward direction \(-\hat{\rho}_{N_n}(x)\) until we reach the
first voxels \(x_{N_n}^{+}(x)\) and \(x_{N_n}^{-}(x)\) that lie
\emph{outside} the boundary mask \(\hat S_u\).
The empirical jump–size estimator is
\[
   \widehat{\Delta}_{N_n}(x)
   \;=\;
   \hat u_{N_n}\!\bigl(x_{N_n}^{+}(x)\bigr)
   \;-\;
   \hat u_{N_n}\!\bigl(x_{N_n}^{-}(x)\bigr),
\]
i.e.\ the difference between the ``pure'' function values immediately to
the right and left of the interface.  Because the march follows
\(\hat{\rho}_{N_n}(x)\), the construction automatically captures the correct
orientation, so no grid refinement is needed. Then, Corollary \ref{corr:consistency_rand} implies convergence of the empirical normal $\hat{\rho}_{N_n}(x)$ to the population normal to the jump set, and hence of the estimated jump sizes to their population counterparts.

Finally, we construct confidence bands using subsampling \citep[Ch.8]{politis1999subsampling}, which we choose over bootstrap methods due to the estimator's non-smooth nature. The Online Appendix provides implementation details and describes a more computationally efficient alternative using conformal prediction.

\subsection{Data-Driven Choice of Hyperparameters by SURE}

While Theorem \ref{thm:ident} establishes that the population estimator recovers the true jump locations and sizes as $\lambda$ grows for fixed $\nu$, optimal hyperparameter choice in finite samples depends on the data. We select $\lambda,\nu$ by minimizing Stein's unbiased risk estimate (SURE) \citep{stein1981estimation}, which provides an asymptotically unbiased estimate of mean-squared error under Gaussian errors.

Given estimator $\hat{u}_\theta(Y)$ with hyperparameters $\theta = (\lambda, \nu)$, the Stein estimator is
\begin{equation} \label{eq:SURE}
\eta(\hat{u}_\theta(Y)) = \frac1N \lvert Y - \hat{u}_\theta(Y) \rvert^2 - \sigma^2 + 2 \sigma^2 \operatorname{div}_Y \hat{u}_\theta(Y),
\end{equation}
where $\operatorname{div}_Y \hat{u}_\theta(Y) \coloneqq \sum_{i=1}^N \frac{\partial \mathrm{\hat{u}_{\theta,i}}(Y_i)}{\partial Y_i}$ is the divergence with respect to the data. For continuous and bounded operators with well-defined divergence, $\eta(\hat{u}_\theta)$ is unbiased for $MSE(\hat{u}_\theta(Y)) \coloneqq \frac1N \lvert \hat{u}_\theta(Y) - f \rvert^2$ \citep{ramani2008monte}.

Since the estimator lacks a closed-form divergence, we use a Monte Carlo approximation by perturbing the data with random noise $\mathbf{b} \sim N(0, 1)$:
\begin{equation*}
\operatorname{div}_Y \hat{u}_\theta(Y) = \lim_{\delta\to 0} E_{\mathbf{b}} \left\{ \mathbf{b}' \left( \frac{ \hat{u}_\theta(Y + \delta \mathbf{b}) - \hat{u}_\theta(Y) }{ \delta } \right) \right\}.
\end{equation*}
Following \citet{lucas2022hyperparameter}, we average over $R$ draws of $\mathbf{b}^{(r)}$ to obtain 
$\bar{\eta}^R(\hat{u}_\theta(Y)) \coloneqq \frac1R \sum_{r=1}^R \eta_{\delta, \mathbf{b^{(r)}}}(\hat{u}_\theta(Y))$,
setting $\delta=0.01$. Monte Carlo simulations in the next section confirm that this approach selects effective hyperparameters.

%%% commenting out conformal for now

\subsection{Simulations} \label{sec:simulations}

{
\begin{table}[htbp!]
\centering
\footnotesize
\caption{Monte Carlo Simulations}
\label{tab:mc}
\begin{subtable}{0.95\textwidth}
\caption{1D}
\begin{subtable}{\textwidth}
\centering
\begin{tabular}{rrrrrr}
\toprule
    N &    MSE &   MSE $\tau_{\mathrm{FD}}$ &    Bias $\tau_{\mathrm{FD}}$ &     FNR &    FPR \\
\midrule
  500 & 0.0789 &    0.1555 &  0.07   & -0.0437 & 0.1933 \\
 1000 & 0.0461 &    0.1761 &  0.029  &  0      & 0.1099 \\
 5000 & 0.0012 &    0.0102 & -0.0117 &  0      & 0.018  \\
\bottomrule
\end{tabular} \\ SURE: $\lambda$ = 98.6712, $\nu$ = 0.0001  \\ 

\end{subtable}
\end{subtable}
% \vspace{8mm} \vfill 
\begin{subtable}{0.95\textwidth}
\caption{2D}
\begin{subtable}{\textwidth}
\centering
 d = 0.25 \\ 
\begin{tabular}{rrrrrrrr}
\toprule
     N &   $ \alpha $ &   $ \hat{\alpha} $ &    MSE &   MSE $\tau_{\mathrm{FD}}$ &   Bias $\tau_{\mathrm{FD}}$ &    FNR &    FPR \\
\midrule
  1000 &  0.0377 &       0.0972 & 0.0018 &    0.0088 & 0.0595 & 0.0105 & 0.5722 \\
  5000 &  0.0377 &       0.0475 & 0.0011 &    0.0007 & 0.0098 & 0.259  & 0.0415 \\
 10000 &  0.0377 &       0.0445 & 0.0011 &    0.0003 & 0.0068 & 0.7593 & 0.0014 \\
\bottomrule
\end{tabular} 
\\ SURE: $\lambda$ = 26.0100, $\nu$ = 0.0012  \\ 

 d = 0.50 \\ 
\begin{tabular}{rrrrrrrr}
\toprule
     N &   $ \alpha $ &   $ \hat{\alpha} $ &    MSE &   MSE $\tau_{\mathrm{FD}}$ &    Bias $\tau_{\mathrm{FD}}$ &    FNR &    FPR \\
\midrule
  1000 &  0.0754 &       0.1059 & 0.002  &    0.0081 &  0.0305 & 0.01   & 0.5642 \\
  5000 &  0.0754 &       0.074  & 0.0012 &    0.002  & -0.0014 & 0.0353 & 0.0499 \\
 10000 &  0.0754 &       0.0751 & 0.0012 &    0.001  & -0.0003 & 0.0884 & 0.0075 \\
\bottomrule
\end{tabular} 
\\ SURE: $\lambda$ = 28.1132, $\nu$ = 0.0011  \\ 

 d = 0.75 \\ 
\begin{tabular}{rrrrrrrr}
\toprule
     N &   $ \alpha $ &   $ \hat{\alpha} $ &    MSE &   MSE $\tau_{\mathrm{FD}}$ &    Bias $\tau_{\mathrm{FD}}$ &    FNR &    FPR \\
\midrule
  1000 &  0.1131 &       0.1131 & 0.0025 &    0.0083 & -0      & 0.0127 & 0.5126 \\
  5000 &  0.1131 &       0.1028 & 0.0014 &    0.0036 & -0.0103 & 0.036  & 0.0339 \\
 10000 &  0.1131 &       0.1149 & 0.0013 &    0.0019 &  0.0017 & 0.0336 & 0.0074 \\
\bottomrule
\end{tabular} 
\\ SURE: $\lambda$ = 50.9836, $\nu$ = 0.0014  \\ 

\end{subtable}
\end{subtable}
\floatfoot{\textit{Note}: table shows averaged results from Monte Carlo simulations of 1D and 2D (300 and 100 per row, respectively) smoothly varying functions with jumps and additive Gaussian noise ($\sigma=0.05$) in Figures \ref{fig:sims_1D} and \ref{fig:sims_2D}. The functions have varying true jump sizes for 1D (see the notes in Figure \ref{fig:simulated_examples}) and jump sizes of Cohen's d 0.25, 0.5, 0.75 with true jump sizes indicated by $\alpha$ for 2D. We uniformly sample random point clouds from these functions, where $N$ denotes the sample size, and estimate the function using our FDR estimator based on those point clouds. $\hat{\alpha}$ denotes the estimated jump size, $MSE$ is the mean squared error with respect to the true noise-free image, $\mathrm{MSE} \, \tau_{\mathrm{FD}}$ is the MSE with respect to the true jump sizes, Bias $\tau_{\mathrm{FD}}$ is the bias of the estimated jump sizes, FNR and FPR are the false negative and false positive rate of the estimated jump locations. Hyperparameters $\lambda$, $\nu$ are estimated using a finite-difference Monte-Carlo approximation of Stein's unbiased risk estimate (SURE) with $R=3$ simulations on a $20\times20$ grid \citep{ramani2008monte}, as in Eq. \eqref{eq:SURE}. $N=\frac{1}{20} n$ for 1D simulations, $N=\frac23 n$ for 2D simulations, where $n$ is the raw sample size and $N$ the number of grid cells along each dimension.}
\end{table}
}

We validate the estimator on 1D, 2D, and 3D simulations with piecewise-smooth functions featuring jumps of varying magnitudes (Cohen’s \(d = 0.25,\) \( 0.5,\) \(0.75\)). Additive Gaussian noise is set to \(\sigma = 0.05\). We discretize the lifted dimension into 32 points \citep{pock2009algorithm} and select hyperparameters \(\lambda,\nu\) via a Monte Carlo version of Stein’s unbiased risk estimate (MC-SURE) \citep{ramani2008monte, lucas2022hyperparameter} on a \(20 \times 20\) grid.

Figure~\ref{fig:simulated_examples} shows the true functions, noisy point clouds, and estimated surfaces for sample sizes of \(n = 5{,}000\) (1D), \(n = 10{,}000\) (2D), and \(n = 50{,}000\) (3D). The method accurately recovers both location and magnitude of discontinuities. In the 1D case, we also plot 95\% confidence bands computed via subsampling \citep{politis1999subsampling}, showing that all estimated jumps are significant.

Table~\ref{tab:mc} reports mean squared error (MSE), jump-size bias, and misclassification rates for boundary detection, averaged over 300 (1D) and 100 (2D) simulations. All measures improve as \(n\) grows, confirming our convergence results -- except for a slight and mechanical increase in the false negative rate for the jump locations due to the finer grid sizes. The false positive and negative rates for the estimated jump locations rapidly decrease to zero, indicating highly accurate jump location detection even in small samples. The estimated jump sizes for the 2D case converge to estimates that different by less than 0.5\% from the true jump sizes, except for the smallest jumps, which still converge but at a slightly slower rate.  Notably, the data-driven selection of hyperparameters consistently finds \(\lambda \gg \nu\), aligning with Theorem~\ref{thm:ident}.

\section{Application: The Economic Effects of Internet Shutdowns in India}

Internet shutdowns—deliberate disruptions of internet or electronic communications—have recently drawn global attention, reaching a record high in 2022 \citep{rosson2023weapons}. India, with its burgeoning digital economy, has implemented at least 646 such shutdowns between 2018 and 2023 \citep{InternetShutdowns2024}, more than any other country, primarily to quell protests, communal violence, and cheating in examinations \citep{hrw}. In this section, we exploit the geographic discontinuities caused by these shutdowns to estimate their short-term economic impact.

The FDR estimator is well-suited for this purpose because the areas exposed to the shutdown—where internet connectivity abruptly dropped—were unknown \textit{a priori}. Its multidimensional formulation allows us to explicitly estimate the geography of the targeted areas. This is of independent interest, to understand what areas were affected, but also enables us to precisely estimate the impact of the shutdown by looking at the size of the jumps at the estimated area boundaries. 

We focus on a shutdown imposed by the Rajasthan state government on September 26, 2021, to prevent cheating on the Rajasthan Eligibility Exam for Teachers. This high-stakes exam, not held since 2018, drew hundreds of thousands of candidates in 2021 amid heightened concerns of cheating after several scandals \citep{Purohit}. In the days before the exam, multiple district governments announced a mobile internet shutdown from 6 am to 6 pm.\footnote{Specifically, the District Magistrates or Divisional Commissioners of Ajmer, Jhunjhunu, Kota, Bundi, Baran, Jhalawar, and parts of Udaipur issued suspension orders \citep{Mishra_2021}.}

Several other districts also cut telecom services without notice, ultimately affecting an estimated 25 million people \citep{yeung2021internet}. Because the Rajasthan telecom circle does not perfectly align with administrative boundaries, and connectivity depends on how cell towers, service providers, and mobile devices interact, it was unclear which districts—and what parts of those districts—would be affected. Where the shutdown was enforced, it was complete, as local governments compelled providers to disable both transmission (cell towers) and point-of-contact (mobile phones). This meant users could not connect to towers outside the shutdown area, inducing discontinuous drops in connectivity.

Although the shutdown technically targeted mobile services, it effectively amounted to a near-complete internet blackout. In 2022, Wi-Fi accounted for just 0.08\% of wireless connectivity in India, and only 3.74\% of internet subscribers had wired connections \citep{trai2023indian}. This disruption reverberated through the economy. In Jaipur alone, 80,000 shops reportedly closed \citep{economist_shutdowns}, due to the ubiquity of mobile-based payment systems like UPI or Google Pay. Although cash is accepted, it is less convenient in a system where even small transactions often rely on mobile platforms \citep{nyt_mobile_payments}, leading many ATMs to run dry during similar shutdowns \citep{jaipur_shutdown}. A large share of retail, hospitality, and mobile-app-based transactions also ground to a halt.

More generally, these shutdowns directly disrupt India’s digital economy, which accounted for about 22\% of GDP in 2019 and continues to expand. While wired connections remained available, losing mobile connectivity still undermined supply-chain tracking, process automation, distribution networks, remote work, and customer support—especially for small and rural businesses that rely entirely on mobile internet \citep[p.55]{kathuria2018anatomy}.

A commonly cited framework for estimating the costs of internet shutdowns is the back-of-the-envelope method from \citet{west2016internet},\footnote{See also \citet{top10vpn2023} and \url{https://netblocks.org/projects/cost}.} which uses data on the digital economy’s size, mobile penetration, and the ``digital multiplier'' \citep{quelch2009quantifying}. While transparent, this approach does not measure actual local economic impacts, and there appear to be no academic estimates of these, despite shutdowns’ global prominence. In development economics, however, research on digital infrastructure expansions suggests notable economic benefits. \citet{roller2001telecommunications}, for instance, found that each percentage-point rise in mobile penetration raises economic growth by around 0.15\%. Other studies show welfare improvements, lower living costs, and increased employment from mobile rollouts \citep{jensen2007digital, bjorkegren2022network, couture2021connecting, zuo2021wired, hjort2019arrival}, consistent with digitization reducing search, replication, transportation, tracking, and verification costs \citep{goldfarb2019digital}.

In contrast, an internet shutdown disrupts those welfare-improving channels, creating economy-wide ripple effects. Moreover, losing established connectivity is not simply the inverse of gaining it; digital infrastructures, once in place, play “an ‘enabling function’ across all critical infrastructure sectors” \citep{CISA}. Our findings therefore complement, rather than mirror, existing work on connectivity expansions.

Lastly, using mobile device data to proxy socioeconomic activity is well-established in fields such as remote sensing, network science, and complex systems \citep{vscepanovic2015mobile, kung2014exploring, frias2012relationship, eagle2010network, blumenstock2015predicting}, as well as in industry \citep{naef2014using} and government \citep{worldbank2022ukraine}. More recently, economists and policy researchers have adopted these data to study social and economic questions \citep{kreindler2021measuring, van2023public}.

\subsection{Data}

We analyze anonymized device-level location data from data provider Veraset to assess the effects of the shutdown at a fine spatio-temporal granularity. The data consist of ``pings'', timestamped GPS locations shared by mobile devices with apps that use common software development kits. Inside our sample area around Rajasthan, we observe 126 million unique pings from 3.8 million devices during the four Sundays of September 2021 between 6 am and 6 pm, the shutdown time window.
To capture the disruption to mobile connectivity, we define
\begin{equation}
\label{eq:pings}
\overline{Pings_i} \coloneqq \frac{Pings_{i t_0}}{\frac13 \sum_{t=t_0-3}^{t_0-1} Pings_{i t}},
\end{equation}
where $i$ indexes a 5$\times$5km grid cell overlaid on Rajasthan; $t_0$ indexes September 26, 2021, between 6 am and 6 pm (the shutdown period); and $t_0-3,\ldots,t_0-1$ index the same time window on the three previous Sundays. $Pings_{it}$ counts mobile device pings in grid cell $i$ in period $t$. The normalization by the prior month's Sunday average removes regional differences in absolute activity levels.

To estimate economic effects, we compute an analogous measure $\overline{Econ}_i$ that captures activity around commercial points of interest (POIs). We combine two datasets --- SafeGraph Places (global retail chains) and OpenStreetMap (local businesses) --- to identify approximately 108,000 retail, commercial, and industrial locations (see Appendix \ref{app:data} for detailed classifications). We classify a ping as economically relevant if it originates within 150 meters of a POI. The economic measure is cast onto a coarser 40$\times$40km grid intersected with municipalities to ensure sufficient observations per cell.

Some devices continue emitting signals during the shutdown because location data is determined via satellite connections, which were unaffected. Apps with background location permissions cache these positions for later transmission, enabling measurement of economic activity even in shutdown areas. The representativeness of this cached location data and additional robustness checks are discussed in Appendix \ref{app:data}.

\begin{figure}[htbp!]
\begin{subfigure}[b]{0.4\linewidth}
\includegraphics[width=\textwidth]{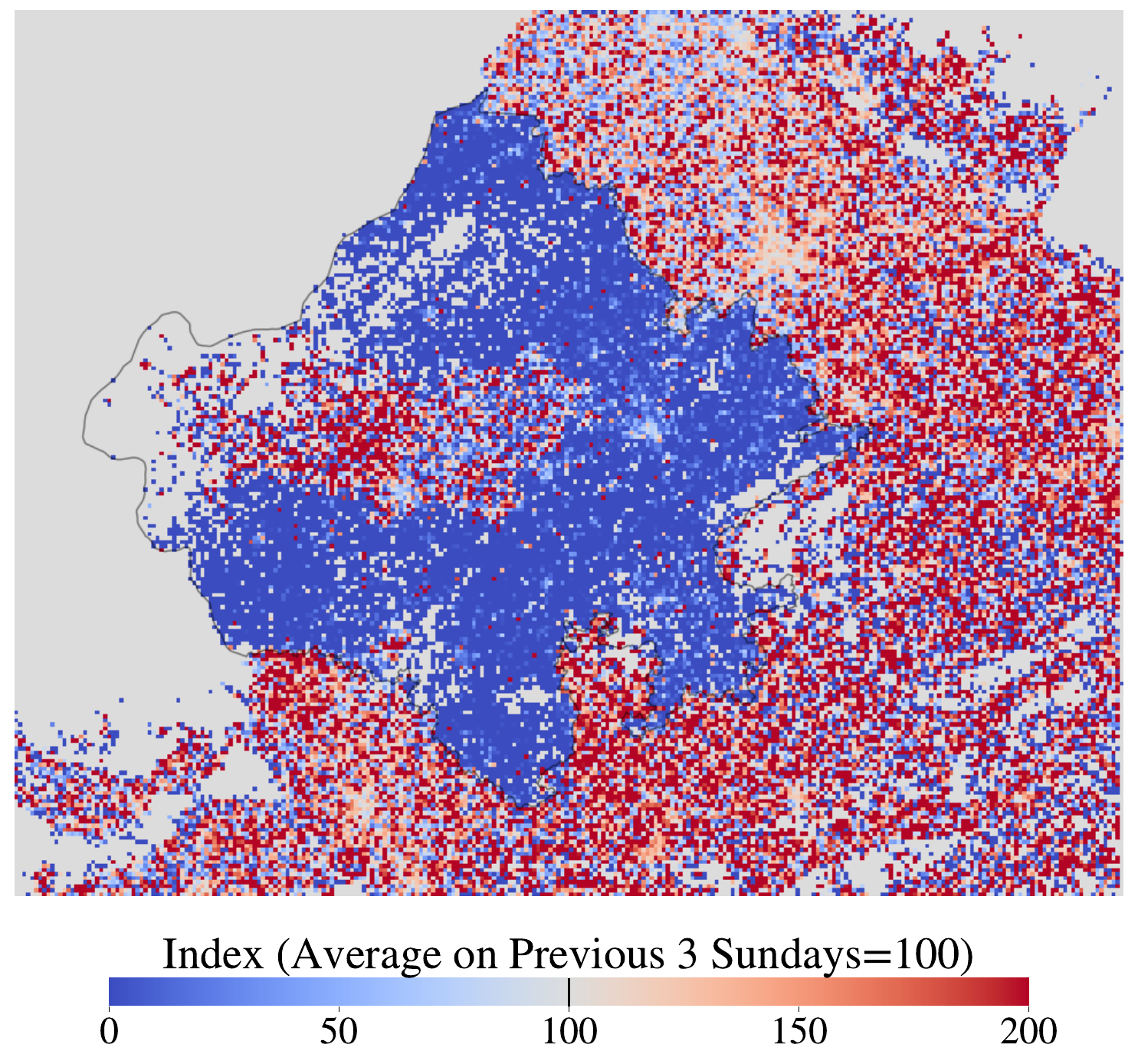}
    \caption{Mobile: Raw Data}
    \label{fig:india_mobile_raw}
\end{subfigure}
\begin{subfigure}[b]{0.4\linewidth}
\includegraphics[width=\textwidth]{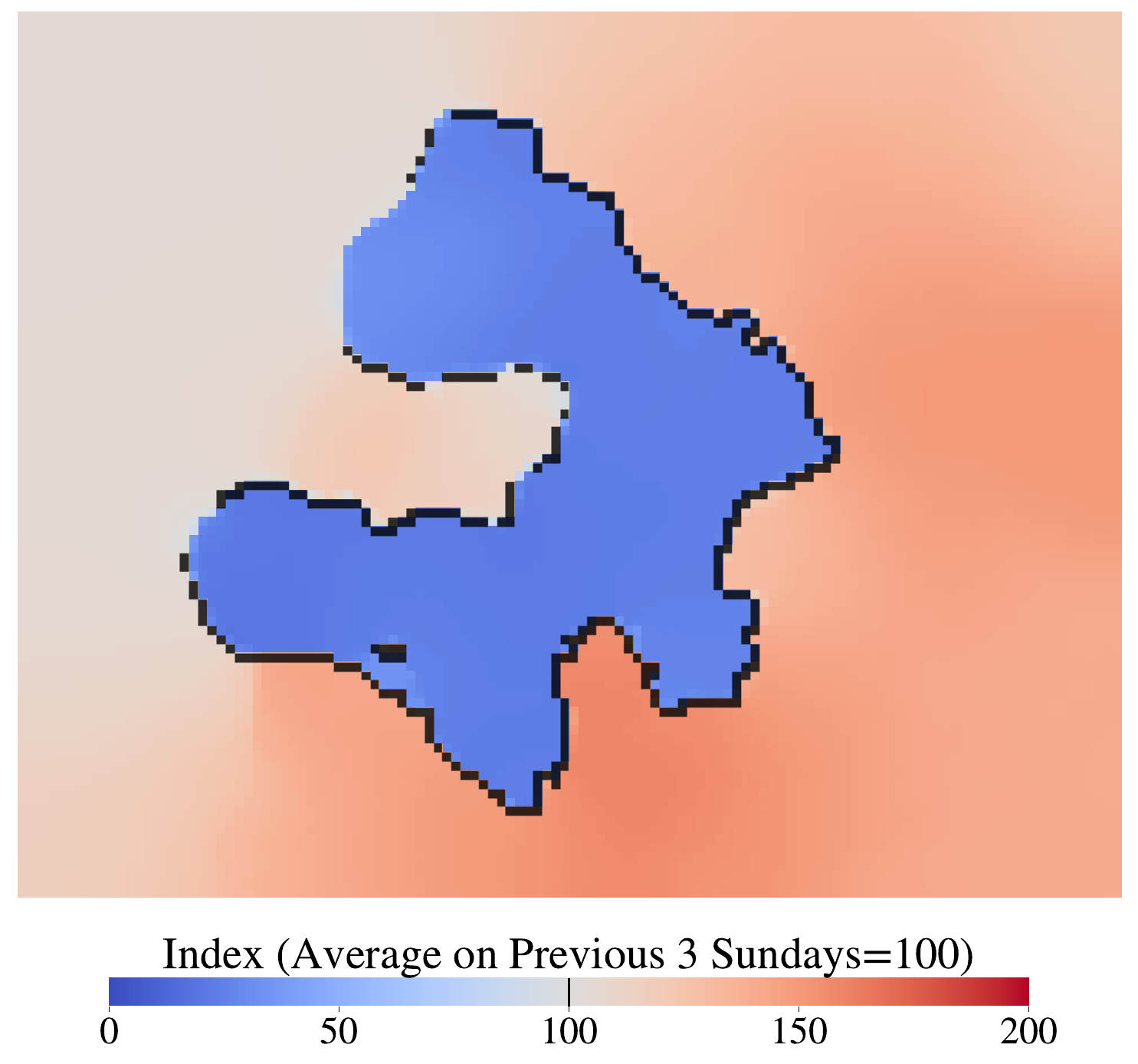}
    \caption{Mobile: $\hat{u}$ \& $S_u$}
        \label{fig:india_mobile_u}
\end{subfigure}
\begin{subfigure}[b]{0.5\linewidth}
\includegraphics[width=\textwidth]{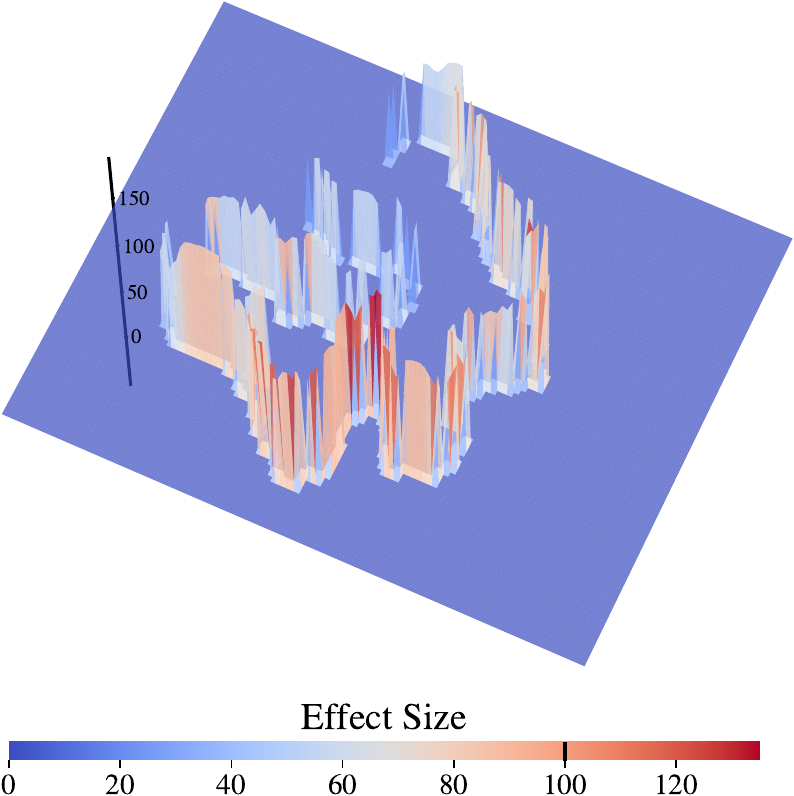}
    \caption{Mobile: $\tau_{\mathrm{FD}}(x)$}
            \label{fig:india_mobile_treatment}
\end{subfigure}
\caption{Internet Shutdown: Mobile Data Effects}
\label{fig:india_mobile}
\floatfoot{ \textit{Notes:} Plots depict in- and outputs of FDR estimation of effects of internet shutdown in Rajasthan state, India, on September 26, 2021, on mobile device signal, as measured by $\overline{Pings}_i$ (see \eqref{eq:pings}) which is a measure of the total mobile device pings per $5\times5$km between 6 am and 6 pm on the day of the shutdown relative to the average in the same time window on the preceding 3 Sundays. $\lambda = 91.2474, \nu=0.0656$ selected by SURE. \textbf{(a)} Shows the raw input data with the fill color of each cell indicating the value of $\overline{Pings}$. The outline of Rajasthan state is indicated by black lines. \textbf{(b)} shows the estimated regression function in color, with the estimated jump set $S_u$ indicated in black. \textbf{(c)} shows the effect curve $\tau_{\mathrm{FD}}(x)$ for those areas that have a jump induced by the shutdown (dropping the discontinuity in the north-east), with the z-axis indicating the magnitude of the drop in terms of $\overline{Pings}_i$.}
\end{figure}

%The reason some devices still emit a signal from within Rajasthan even during the shutdown is that mobile devices do not, in fact, require a mobile connection to measure their geographic location, but only a satellite connection, which was not shut down. Some mobile applications will cache users' locations when the device is offline and upload them when the internet connection is restored. This requires that users grant the applications permission to continuously collect background location. 

\subsection{Results}

We apply the FDR estimator to the data described above, starting with the mobile device signal in Figure \ref{fig:india_mobile}. 

\paragraph*{Mobile signal} The raw data (Figure \ref{fig:india_mobile_raw}) shows a stark drop-off in signal that aligns with the Rajasthan state boundary, with an unexpected pocket of activity in the northwest intersecting the districts of Jaisalmer, Jodhpur, and Nagaur. Areas without prior signal are normalized to 100, while areas unaffected by the shutdown show signals averaging 130\% of the Sunday September average. This reflects both mechanical noise from low spatial granularity and the broader economic recovery in India at that particular time in the COVID-19 pandemic \citep{Bhaduri_2021}.

\begin{figure}[ht!]
\begin{subfigure}[b]{0.4\linewidth}
\includegraphics[width=\textwidth]{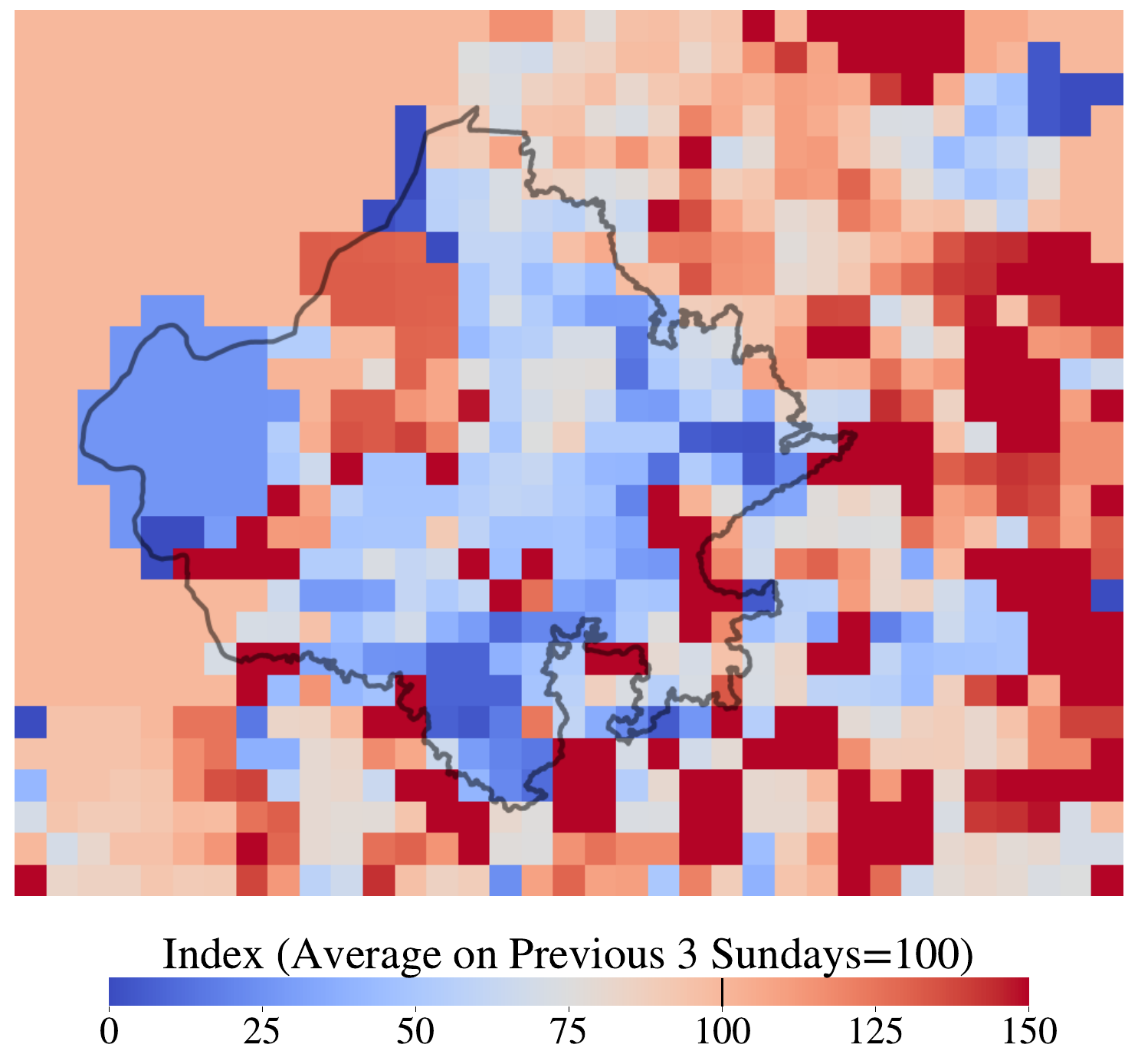}
    \caption{Economic: Raw Data}
\end{subfigure}
\begin{subfigure}[b]{0.4\linewidth}
\includegraphics[width=\textwidth]{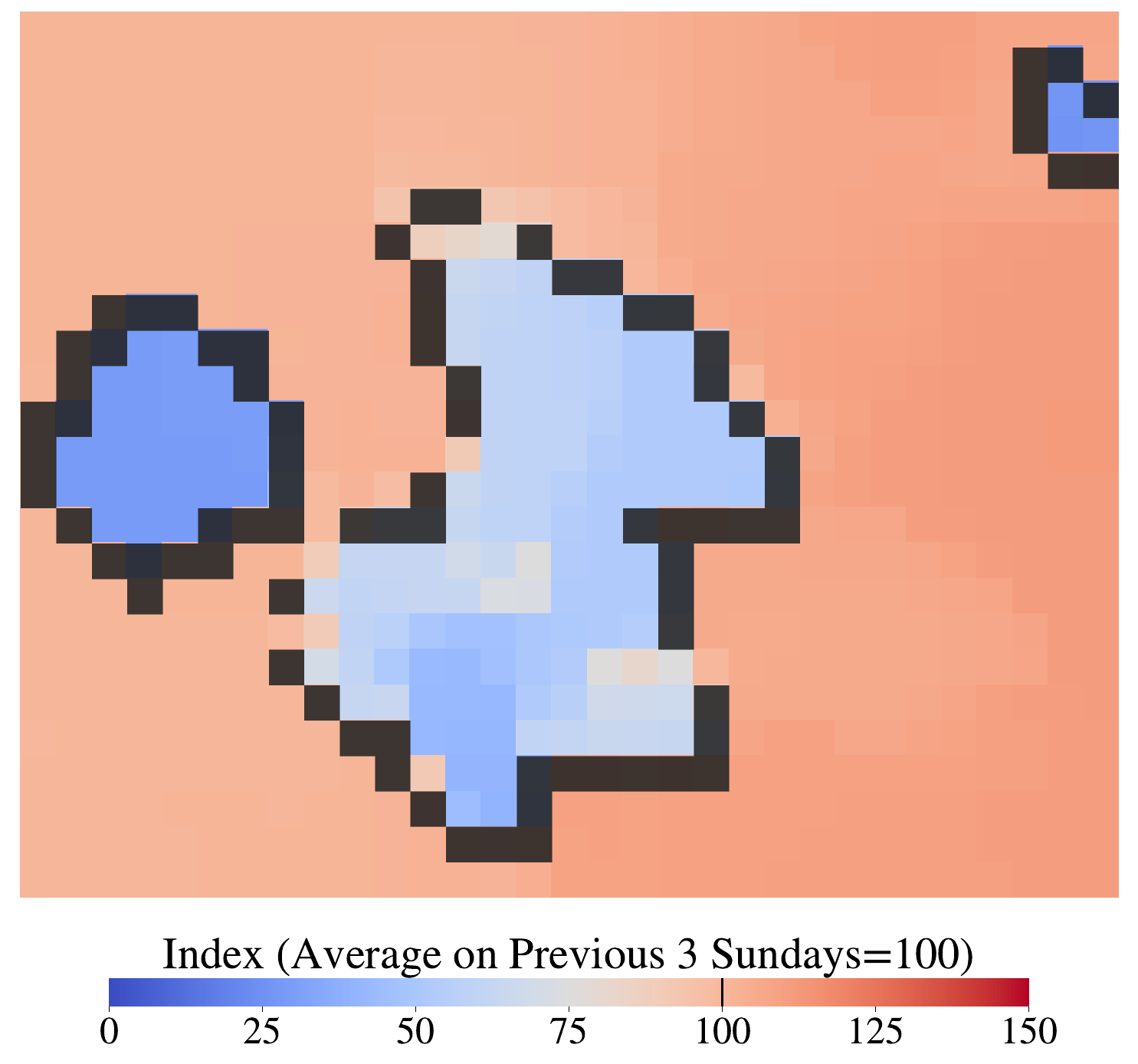}
    \caption{Economic: $\hat{u}$ \& $S_u$}
\end{subfigure}
\begin{subfigure}[b]{0.5\linewidth}
\includegraphics[width=\textwidth]{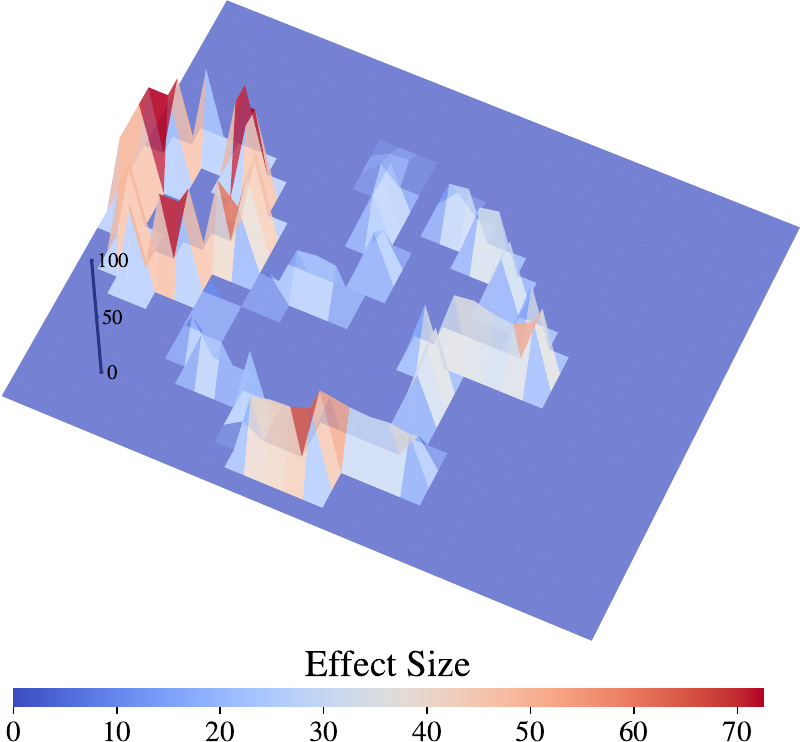}
    \caption{Economic: $\tau_{\mathrm{FD}}(x)$}
\end{subfigure}
\caption{Internet Shutdown: Economic Effects}
\label{fig:india_econ}
\floatfoot{\textit{Notes:} Plots depict in- and outputs of FDR estimation of effects of internet shutdown in Rajasthan state, India, on September 26, 2021, on economic activity, as measured by $\overline{Econ}_i$ (see the ``Economic Activity`` paragraph) which is a measure of the share of mobile device pings per $40\times40$km that fell within a 150m radius of a commerce-related Point of Interest between 6 am and 6 pm on the day of the shutdown relative to the average in the same time window on the preceding 3 Sundays. $\lambda=4.9813, \nu=0.0754$ selected by SURE. \\
\textbf{(a)} Shows the raw input data with the fill color of each cell indicating the value of $\overline{Econ}_i$. The outline of Rajasthan state is solid black. \textbf{(b)} shows the estimated regression function in color, with the estimated jump set $S_u$ indicated in black if the jump size is significantly different from 0 at the 95\% level based on 4$\times$200 subsamples with estimated rate of convergence of approx. $\sqrt{n}$ \citep{politis1999subsampling} and grey otherwise. \textbf{(c)} shows the effect curve $\tau_{\mathrm{FD}}(x)$ for those areas that have a jump induced by the shutdown (dropping the discontinuity in the north-east), with the z-axis indicating the magnitude of the drop in terms of $\overline{Econ}_i$.}
\end{figure}

For hyperparameter selection, we search over $\lambda \in [1,100]$ and $\nu \in [0.001, 0.1]$ to account for the large amount of noise in the mobile data. We winsorize at the 90th percentile to handle outliers from temporary gatherings and set $f_X$ to the uniform distribution after verifying robustness to alternatives. The resulting regression function $\hat{u}$ (Figure \ref{fig:india_mobile_u}) is smooth within the two partitions, with the estimated jump set $S_u$ precisely capturing the state boundary and the unexpected area of continued connectivity within the state. The effect curve (\ref{fig:india_mobile_treatment}) shows an average reduction in mobile signal of 100\% relative to baseline, with higher reductions near Gujarat and Madhya Pradesh. About 25\% of the baseline signal remains inside the shutdown area due to satellite-based location caching, enabling our subsequent economic analysis. Overall, we estimate that the shutdown was highly effective in its goal of disrupting connectivity, inducing a large drop in the mobile device signal throughout the targeted area.

\paragraph*{Economic activity}
Figure \ref{fig:india_econ} shows the results for economic activity. We construct 95\% confidence bands around the function gradient using 200 subsamples at 4 different subsample sizes, estimating a convergence rate of approximately $\sqrt{n}$. Hyperparameters are selected from $\lambda \in [1,5]$ and $\nu \in [0.075, 0.15]$, adjusted for the coarser grid. Despite lower resolution, the estimated jump set closely tracks the mobile signal discontinuities, overlapping with state boundaries and identifying the northwestern pocket of connectivity.

The shutdown reduced economic activity by approximately 35\% in the eastern region and 60\% in the western region, though the western estimate is less reliable due to sparse data and proximity to Pakistan. Outside the shutdown area, activity increased 0-5\%, consistent with India's broader economic recovery at the time \citep{woloszko2020tracking}. Scaling our estimate by 0.7—the most conservative estimate of mobility-GDP correlation from the literature \citep{dong2017measuring, frias2012relationship, spelta2021mobility}—yields a 25\% reduction in economic activity. This substantially exceeds previous estimates of shutdown costs, which predicted impacts around 16.4\% of annualized GDP.

This discrepancy likely reflects two factors: previous estimates focused narrowly on digitized economic activity, missing spillovers to non-digital sectors, and relied on outdated digital economy multipliers from 2009 \citep{quelch2009quantifying} that understate digital infrastructure's current importance, particularly in rapidly digitalizing economies like India.

\paragraph*{Implications}
Our results reveal that arbitrary internet shutdowns inflict larger short-term economic damage than previously understood. While some recovery might occur—though we find no evidence of this in mobile or economic activity the following day (Figure \ref{fig:india_overshooting})—the magnitude of our estimates underscores digital infrastructure's critical role in modern economies. The findings highlight an important asymmetry: while the benefits of expanding internet access accrue gradually, the costs of removing it are immediate and substantial once digital infrastructure becomes integrated into economic activity. This emphasizes not just the importance of expanding internet access, but of ensuring its continued reliability \citep{UN2022}.

\begin{comment}
\nddvd{Notes to self 
\begin{itemize}
    \item Key objects of interest: locations of shutdowns + economic effects
    \item Confidence bands
    \item econ Median 84.91024968741482  mean:  90.30781721275609
    \item mobile Median 105.10796739557092  mean:  110.9660314799414
    \item Cite Inference after estimation of breaks
\end{itemize}}
\end{comment}

% While the sociopolitical consequences of these disruptions are discussed extensively in the literature, the quantifiable impact on economic activity remains less explored. Given the intricacies associated with measuring such effects, traditional estimators might not capture the nuances fully. This necessitates the introduction of a more robust and precise estimator. In this section, we apply our proposed estimator to empirically investigate the economic ramifications of internet shutdowns in India, providing a tangible context to demonstrate its efficacy.

{\section{Conclusion}

We introduced \emph{Free Discontinuity Regression} (FDR): the first fully non-parametric framework that \emph{simultaneously} estimates a multivariate regression surface and the geometry—and magnitude—of its unknown discontinuities.  
Built on a statistical reinterpretation of the convexified Mumford–Shah functional, FDR overcomes three long-standing bottlenecks in high-dimensional jump surface estimation:

\begin{enumerate}[leftmargin=2.2em]
\item \textit{Unified estimation.}  Unlike wavelet, fused-lasso, or post-hoc thresholding approaches, FDR brings smoothing and segmentation under a single objective, producing jump locations and sizes \emph{as native outputs} rather than after-thoughts.

\item \textit{Global optimality with guarantees.}  Calibrations convert the non-convex Mumford–Shah functional into a saddle-point problem that primal–dual algorithms solve to \emph{global} optimality.  We prove that, under mild SBV-regularity, the recovered surface and jump set are identified and consistent in any dimension.

\item \textit{Realistic data-generating processes.}  FDR allows a random design, spatially-correlated noise, and randomness in both grid \emph{and} sampling error—assumptions that match modern geospatial, imaging and remote-sensing datasets but are absent from existing theory.
\end{enumerate}

A data-driven version of Stein's Unbiased Risk Estimate selects the two tuning parameters automatically and in a way that is consistent with our identification theorems.  Large-scale simulations from 1D to 3D confirm the predicted convergence of both surfaces and jump sizes, while an application to internet shutdowns in India showcases FDR’s empirical power—revealing a 25–35\% drop in economic activity, far larger than headline estimates. This points to a large asymmetry in the effects of internet expansions and shutdowns, underscoring the internet's critical role in modern economies.

\paragraph*{Outlook}  By combining theoretical guarantees with efficient optimization procedures, the proposed estimator provides a robust and practical statistical estimation tool for multivariate changepoint estimation. Immediate extensions include (i) dynamic or panel settings where jump surfaces evolve over time, through vectorial coupling of the multiple output channels \citep{strekalovskiy2012convex}, (ii) adaptive or multiresolution grids for ultra-high-resolution imagery, (iii) instrumental-variable and causal designs that leverage the estimated discontinuity set as an endogenous boundary, and (iv) computational improvements through statistical versions of the sublabel-accurate approaches in the image literature \citep{mollenhoff2017sublabel}. The hope is that the framework will spur new research into the estimation and application of jumps in high-dimensional processes and further bridge the gap between computer vision and statistical estimation.

\clearpage 
\begin{supplement}
\stitle{}
The supplementary material introduces the mathematical notation (\ref{app:notation}), additional details on the empirical application (\ref{app:data}), detailed proofs of the results in the texts (\ref{sec:proofs}), implementation details (\ref{app:implement}), details on uncertainty quantification (\ref{app:uncertainty}), and additional results (\ref{app:additional_results}).

\begin{appendices}

\begin{appendix}

% Reset figure and table counters
\setcounter{figure}{0}
\setcounter{table}{0}

% Redefine figure and table numbering to include "A-"
\renewcommand{\thefigure}{A-\arabic{figure}}
\renewcommand{\thetable}{A-\arabic{table}}

\section{Mathematical Notation and Definitions} \label{app:notation}
This section contains notation and definitions of mathematical objects used in the main text. These are standard and we refer to \cite{ambrosio2000functions} for further reading.

\begin{definition}[Approximate jump points] \label{def:jumppoints} Let $u$ be locally integrable on $\mathcal{X}$, that is $u \in L_{\text {loc }}^1(P_X)$, where $P_X$ denotes the law of $X$. We say that a point $x\in \mathcal{X}$ is an approximate jump point of $u$ if there exist constants $a, b \in \mathbb{R}$ and an orientation $\rho \in \mathbb{S}^{d-1}$ such that $a \neq b$ and
\begin{equation} \label{eq:jumppoints}
\lim _{\varepsilon \downarrow 0} \frac{\int_{B_\varepsilon^{+}(x, \rho)}|u(x')-a| \dd P_X(x')}{P_X(B_\varepsilon^+(x,\rho))}=0 , \quad \lim _{\varepsilon \downarrow 0} \frac{\int_{B_\varepsilon^{-}(x', \rho)}|u(x')-b| \dd P_X( x')}{P_X(B_\varepsilon^{-}(x', \rho))}=0,
\end{equation}
where 
\begin{equation*}
\left\{\begin{array}{l}
B_\varepsilon^{+}(x, \rho):=\left\{x' \in B_\varepsilon(x):\langle x'-x, \rho \rangle>0\right\} \\
B_\varepsilon^{-}(x, \rho):=\left\{x' \in B_\varepsilon(x):\langle x'-x, \rho \rangle<0\right\}
\end{array}\right.
\end{equation*}
denote the two half balls, oriented by  $\rho$, contained in the $\varepsilon$-ball $B_{\varepsilon}(x)$, and $\mathbb{S}^{d-1}$ is the unit sphere in $\mathbb{R}^d$.
I The triplet $(a, b, \rho)$, uniquely determined by \eqref{eq:jumppoints} up to a permutation of $(a, b)$ and a change of sign of $\rho$, is denoted by $\left(u^{+}(x), u^{-}(x), \rho_u(x)\right)$; with $u^+, u^-$ called the traces of $u$.
The approximate discontinuity set contains all approximate jump points and is denoted by $S_u$.
\end{definition}

A Radon measure $\mu$ is an inner regular and locally finite Borel measure. We denote
\[\dashint_{A} f(x)\dd \mu(x)\coloneqq \frac{\int_{A} f(x)\dd \mu(x)}{\mu(A)}.\] We denote the space of ($k$-times) continuously differentiable functions on $\mathcal{X}$ by $C^k(\mathcal{X})$. $C_0(\mathcal{X})$, the space of all continuous functions that vanish eventually, is the closure in the sup-norm of $C_c(\mathcal{X})$, the space of all continuous functions on $\mathcal{X}$ with compact support.

The $k$-dimensional Hausdorff-measure $\mathscr{H}^k(A)$ of a set $A\subset\mathbb{R}^d$, for $k\in [0,+\infty)$, is defined as \citep[e.g.][Definition 2.46]{ambrosio2000functions}
    \[\mathscr{H}^k(A)\coloneqq \lim_{\delta\downarrow0} \frac{\pi^{k/2}}{2^k\Gamma(1+k/2)}\inf \left\{\sum_{i\in I} \left(\text{diam}(A_i)\right)^k: \text{diam}(A_i)<\delta, A\subset\bigcup_{i\in I}A_i\right\}\] with $\Gamma(x)$ being Euler's Gamma function, $\text{diam}(A)$ the diameter of the set $A$, and where the sums are taken over finite or countable covers of $A$.

We use the following definition of functions of bounded variation.
\begin{definition}[Functions of bounded variation and SBV]\citep[Def. 3.1]{ambrosio2000functions}
Let $u \in L^{1}(\mathcal{X})$; we say that $u$ is a function of bounded variation in $\mathcal{X} \subset \mathbb{R}^d$ if the distributional derivative of $u$, $Du$, is representable by a finite Radon measure in $\mathcal{X}$. i.e. if
\begin{equation} \label{eq:BV}
\int_{\mathcal{X}} u \operatorname{div} \varphi d x=-\sum_{i=1}^d \int_{\mathcal{X}} \varphi_i d D_i u \quad \forall \varphi \in C_i^1(\mathcal{X}) .
\end{equation}
for some $R^d$-valued Radon measure $D u=\left(D_1 u . \ldots D_d u\right)$ in $\mathcal{X}$ and all continuously differentiable test functions $\varphi$. The vector space of all functions of bounded variation in $\mathcal{X}$ is denoted by $B V(\mathcal{X})$. If the Cantor part $D_c u$ of the decomposition of $Du$ is zero, then $u$ is called a special function of bounded variation and denoted as $u\in SBV(\mathcal{X})$.
\end{definition}

We define the weak$^*$-topology in the standard way \citep[Definition 3.11]{ambrosio2000functions}: Consider some $u\in BV(\mathcal{X})$ and a sequence $\{u_n\}\subset BV(\mathcal{X})$. Then we say that $\{u_n\}$ converges to $u$ in the weak$^*$-topology, or $u_n\overset{*}{\rightharpoonup} u$, if $u_n$ converges to $u$ in $L^1(\mathcal{X})$ and the corresponding $Du_n$ converge in the weak$^*$-topology to $Du$, i.e.
\[\lim_{n\to\infty} \int_{\mathcal{X}}\phi\dd Du_n = \int_{\mathcal{X}}\phi\dd Du\qquad\text{ for all $\phi\in C_0(\mathcal{X})$}. \]

For two non-empty subsets $A,B\subset M$ of some metric space $(M,d)$, we define their Hausdorff distance as
\[d_H(A,B)\coloneqq \max\left\{\sup_{a\in A}\mathrm{dist}(a,B), \sup_{b\in B}\mathrm{dist}(b,A)\right\},\] where $\mathrm{dist}(a, B) \coloneqq \inf_{b\in B} d(a,b)$ is the distance of $a$ from $B$.

\section{Data and Measurement}\label{app:data}

Anonymized device-level location data from data provider Veraset allow us to assess the effects of the shutdown at a fine spatio-temporal granularity. The data consist of "pings", timestamped GPS locations shared by the device with a mobile app. Veraset cleans and aggregates such data from thousands of "Software-Development Kits" (SDK), packages of tools that supply the infrastructure for most mobile applications. Location data from the same device can be recombined from various SDKs by use of an anonymized device ID. Inside our sample area, which is a bounding box around the state of Rajasthan, we observe 126 million unique pings on the four Sundays of September 2021 between 6 am and 6 pm, emanating from 3.8 million unique devices. With an estimated smartphone penetration of 60.63\% in India in 2021 and a population in Rajasthan of around 80 million, this means we capture a little under 10\% of all mobile devices in the area.

Few other data sources can proxy for economic activity at the fine spatiotemporal granularity we consider: night lights data are not suited for such rapid temporal variations in economic activity that happen during daytime; credit card penetration is low in India (around 5\%); and data from neither Google Pay nor UPI were being sold to third parties at the time of this paper's writing.

Our device-based measure should be a good proxy of economic activity under two assumptions: 1) the economic activity of the restricted device sample is representative of the wider population; 2) the set of apps that cached devices' background location for the duration of the shutdown was not in some way skewed so as to collect geolocation information at different rates in economic areas during the shutdown. To point 1), though information on location-sharing behavior by Indian smartphone users is sparse, one recent poll of Android users suggests 66.78\% have background location enabled, and Android makes up 95.21\% of the smartphone market share in India \citep{Android_Authority_2022, android_india}. This is further supported by the fact that even after filtering on economic areas for our restricted device sample, we still retain around 68 million unique pings compared to the 126 million total pings. To point 2), our location data comes from 1,000+ different mobile applications which allows location tracking across a wide variety of platforms and thus areas and behaviors. Additionally, background location collection happens continuously so it should not be meaningfully affected by people's differential app usage during the shutdown.

One might expect there to be selection effects on economic activity around the discontinuity, if people near the border cross state lines to avoid the shutdown. We can test for such selection effects by counting the number of devices that are observed crossing the state boundaries during the shutdown and comparing it to the average in the preceding month. We focus on a 40km band around the Rajasthan state boundary as illustrated in Figure \ref{fig:crossers}. We calculate the share of devices near the boundary that appeared in one of the neighboring states during the shutdown period and compare this share to its average in the month prior. 

We find that 2.66\% of devices near the discontinuity cross it between 6 am and 6 pm on the day of the shutdown, while on average 3.57\% do so in the same time window the month prior. Thus, there do not appear to be any self-selection effects associated with the localized internet shutdown. In fact, the lower share of devices that cross suggests that some crossing behavior was disrupted due to the shutdown, likely because of the associated disruption of mobile-based navigation services. This does not contradict the fact that location apps may still be caching devices' location, as route calculation and live navigation require full mobile connection, not just satellite-based geo-positioning. Relatedly, we do not expect mobile phones near the Rajasthan border to connect to telecommunications masts across the border, as the shutdown was implemented at the point of contact (the mobile phone) and not just the transmission points (the masts), as mentioned in the main text.

\section{Proofs}
\label{sec:proofs}

\subsection{Proof of Proposition \ref{prop:existence}}
\begin{proof}
    We work in the weak$^*$-topology \citep[Definition 3.11]{ambrosio2000functions}. Since all $v$ have a uniformly bounded variation by $c<+\infty$ and since all $v$ map to $[0,1]$ and are hence uniformly bounded, it follows from Theorem 4.8 in \citet{ambrosio2000functions} that $C\cap\{v: |Dv|\leq c\}$ is compact in the weak$^*$-topology. 
    
    Now we want to prove that the objective function $E(v)$ is lower semi-continuous in $v$. For this, we analyze the inner optimization problem. Since $C_0$ is a Banach space with respect to the supremum norm, Corollary 6.40 in \citet{aliprantis2006infinite} implies that the objective function of \eqref{eq:main_problem} is jointly continuous as the total variations are bounded by $c<+\infty$. Moreover, the constraint set $K$ does not depend on $v$, so it is trivially upper hemicontinuous in $v$. Therefore, Lemma 17.29 in \citet{aliprantis2006infinite} proves that $E(v)$ is lower semicontinuous in $v$. Thus, the optimization problem $\inf_{v\in C\cap\{v:|Dv|\leq c\}} E(v)$ admits a minimizer in $C$, which shows existence. The fact that the solution is a global minimum follows from the convexity of $C\cap\{v: |Dv|\leq c\}$.
\end{proof}

\subsection{Proof of Theorem \ref{thm:ident}}
\begin{proof}
    We first show that if we set $v^*(x,t) = \mathds{1}_f(x,t)$, then $E(v^*)<+\infty$ for any $\lambda\geq0$ and also as $\lambda\to+\infty$. In fact, since $f\in SBV(\mathcal{X})$ by assumption, fixing the rotation via the inward normal to the graph, the objective function in \eqref{eq:main_problem} can be written as \citep[Lemma 2.10]{alberti2001calibration}
    \begin{align*} E(\mathds{1}_f) = \sup_{p\in K}\int_{\mathcal{X}} \left[ p^x(x,f(x))\cdot \nabla f(x)- p^t(x,f(x))\right]\dd x + \\  \int_{S_f}\left[\int_{f^-}^{f^+}p^x(x,t)\dd t\right]\cdot \rho_f(x)\dd\mathscr{H}^{d-1}(x),\end{align*}
    where $p^x$ denotes the first $d$ dimensions of the vector field $p$ and $p^t$ denotes the lifted dimension. $f^-(x)$ and $f^{+}(x)$ denote the respective traces corresponding to the inward normal orientation $\rho_f(x)$ of the graph of $f$ at the point $x\in S_f$. Note that when using $f(x)$, the constraint set $K$ implies under Assumption \ref{ass:ident1} that 
    \[p^t(x,f(x))\geq \frac{|p^x(x,f(x))|^2}{4f_X(x)}\geq0\] independently of $\lambda$ and $\nu$. Moreover, by Assumption \ref{ass:ident1} it holds that $\int_{\mathcal{X}} |\nabla f|^2\dd x <+\infty$. This, by an application of H\"older's inequality for vector-valued functions in conjunction with the fact that $p\in K$ -- and thus $p^t$ cannot be negative -- and $\nu<+\infty$ implies that the first term on the right-hand side is finite for all $p\in K$. For the second term on the right-hand side note that by assumption $\mathscr{H}^{d-1}(S_f)<+\infty$. Furthermore, since $p\in K$, the integrand over $p^x(x,t)$ is bounded by $\nu<+\infty$, so that the overall $E(\mathds{1}_f)<+\infty$ for all $\lambda\geq0$ and also as $\lambda\to+\infty$. 

We now show that in the limit as $\lambda\to+\infty$ and for fixed $\nu>0$, it holds that $\nabla v^*(\lambda)=0$ $\mathcal{L}^{d+1}$-almost everywhere.
To prove this, consider any $v\in C$. Since $v\in SBV(\mathbb{R}^{d+1})$ it holds by definition that
    \[Dv = \nabla v \mathcal{L}^{d+1}+(v^+-v^-)\rho_v\mathscr{H}^{d}\mres J_v,\] where $J_v$ is the set where $v(x,t)$ jumps, $\rho_v$ is a corresponding orientation, and $\mathcal{L}^{d+1}$ denotes the $(d+1)$-dimensional Lebesgue measure. Note that by the Federer-Vol'pert theorem \citep[Theorem 3.78]{ambrosio2000functions} $J_v$ coincides with $S_v$ $\mathscr{H}^d$-almost everywhere, so that we focus on $J_v$. It therefore holds that
    \[\int p\cdot Dv = \int p\cdot\nabla v\dd \mathcal{L}^{d+1} + \int_{J_v} (v^+-v^-)p\cdot \rho_v\dd\mathscr{H}^d.\] Since $v\in C$, it cannot be constant everywhere because the limits as $t\to \pm \infty$ are different for all $x\in\mathcal{X}$. On the other hand, as $\lambda\to+\infty$, the range of $p^t$ converges to $\mathbb{R}$, which implies that for every $\eta>0$ and every $v\in C$ such that $\nabla v\neq 0$ on a subset of $\mathbb{R}^{d+1}$ of positive $\mathcal{L}^{d+1}$-measure, there exists a large enough $\lambda$ and a corresponding $p\in K$ such that $\int p\cdot \nabla v\dd x>\eta$. This implies that no $v$ for which $\nabla v$ is not constant on a set of positive $\mathcal{L}^{d+1}$ measure can be a solution to \eqref{eq:main_problem} as $\lambda\to+\infty$. Therefore, in the limit as $\lambda\to+\infty$, the only changes in $v$ must lie in $J_v$. 
    
    We now prove the rest of the theorem. First, we prove that if $v\in C$ is such that for some $\eta>0$
    \begin{equation}\label{eq:contradic}
    \Gamma_f\not\subset J^\eta_v(\lambda),
    \end{equation} where $A^\eta\coloneqq\{x'\in\mathcal{X}: \text{dist}(x',A)\leq \eta\}$ is the $\eta$-enlargement of $A$, then it cannot be a solution to \eqref{eq:main_problem} in the limit as $\lambda\to+\infty$. Then we prove that if for some $\eta>0$
    \begin{equation}\label{eq:contradic2}
    J_v(\lambda)\not\subset \Gamma^\eta_f
    \end{equation} then it cannot be a solution to \eqref{eq:main_problem} in the limit as $\lambda\to+\infty$. \\

    \noindent \emph{Showing \eqref{eq:contradic} leads to a contradiction.}\\
    The idea is to show that $\lim_{\lambda\to+\infty} E(v(\lambda)) = +\infty$ for any sequence $\{v(\lambda)\}\subset C$ that satisfies \eqref{eq:contradic}. So suppose that $\{v(\lambda)\}\subset C$ is some sequence that satisfies \eqref{eq:contradic}. This implies that in the limit as $\lambda\to+\infty$, it holds that there is some $(x',t')\in \Gamma_f$ such that $\text{dist}((x',t'),J_v(\lambda))>\eta>0$.

    This can happen in two ways. Either, $x'\not\in S_f$ or $x'\in S_f$. Recall that under the first part of Assumption \ref{ass:ident2} it holds that $\mathscr{H}^{d-1}(\bar{S}_f\setminus S_f)=0$, where $\bar{S}_f$ is the closure. We now consider each case one by one.
    
    \emph{First subcase.} In the first case, it follows that there is some $\delta>0$ such that $\text{dist}(x',S_f)>\delta$. We now use the Lipschitz assumption on $f$ to show that $\mathscr{H}^d(\Gamma_f\setminus J_v)>0$. In fact, it holds that $B_\delta(x')\cap S_f=\emptyset$. Now by the second part of Assumption \ref{ass:ident2} there exists a Lipschitz constant $L<+\infty$ such that
    \[|f(x')-f(x'')|\leq L|x'-x''|\leq L\delta\] for all $x''\in B_\delta(x')\subset\mathcal{X}$. Now we pick $\delta>0$ small enough such that the set $Z\coloneqq B_\delta(x')\times B_{L\delta}(f(x'))$ has diameter $\eta$. The diameter of $Z$ is $\text{diam}(Z) = \delta\sqrt{2+L^2}$, so that if we define $\delta< \frac{\eta}{\sqrt{2+L^2}}$, then $\text{diam}(Z)<\eta$. We can do this under Assumption \ref{ass:ident2}. This implies in particular that $Z\cap J_v=\emptyset$.

    Now the fact that $\mathscr{H}^d(Z\cap\Gamma_f) = \mathscr{H}^d((Z\cap\Gamma_f)\setminus J_v)>0$ follows from an application of the area formula since $\Gamma_f$ is a Lipschitz graph. In fact, we get \citep[e.g.][p.~88]{ambrosio2000functions}
    \[\mathscr{H}^{d}(Z\cap\Gamma_f) = \int_{B_\delta(x')} \sqrt{1+ \left\lvert \nabla f\right\rvert^2}\dd\mathcal{L}^{d}>0.\] 
    Denote 
    \[W\coloneqq \left\{(x,t)\in J_v: x\in B_\delta(x')\right\}.\] We have just shown that $J_v$ does not contain $Z\cap\Gamma_f$ on a set $B_\delta(x')$ of positive $d$-dimensional Hausdorff measure, which implies that for $\mathscr{H}^d$-almost every $(x,t)\in W$, $t\neq f(x)$. Moreover, by definition $|v^+(x,t)-v^-(x,t)|>0$ for $\mathscr{H}^d$-almost all $(x,t)\in J_v$. Since $f\in SBV(\mathcal{X})$ it holds by definition that $J_v$ is measurable with respect to $\mathscr{H}^d$. Therefore,
    \begin{multline}\label{eq:split} \lim_{\lambda\to+\infty}\sup_{p\in K}\int_{J_v} (v^+-v^-)p\cdot\rho_v\dd\mathscr{H}^d \\= \lim_{\lambda\to+\infty}\sup_{p\in K}\left[\int_{W} (v^+-v^-)p\cdot\rho_v\dd\mathscr{H}^d+ \int_{J_v\setminus W} (v^+-v^-)p\cdot\rho_v\dd\mathscr{H}^d\right].
    \end{multline}
   Now note that it must hold that $\mathscr{H}^d(W)>0$. In fact, since $v\in SBV(\mathcal{X}\times\mathbb{R})$, it holds that $J_v$ is countably $\mathscr{H}^d$-rectifiable \citep[e.g.][chapter 4]{ambrosio2000functions}. Since $B_\delta(x')$ is compact, $W$ is also compact. This directly implies by Proposition 2.66 in \citet{ambrosio2000functions} that $\mathscr{H}^d(W)\geq \mathcal{L}^d(\pi_{\mathcal{X}}(W))$, where $\pi_{\mathcal{X}}$ is the projection onto $\mathcal{X}$. But $\pi_{\mathcal{X}}(W) = B_{\delta}(x')$, which is of positive $\mathcal{L}^d$ measure. The first term on the right hand side of \eqref{eq:split} diverges to $+\infty$ because the range of $p^t$ increases to $\mathbb{R}$ as $\lambda\to+\infty$ and $t''\neq f(x'')$ for $\mathcal{L}^d$-almost every $x''\in B_\delta(x')$. This implies that such a $v$ cannot be a solution to \eqref{eq:main_problem} in the limit as $\lambda\to+\infty$.

   \emph{Second subcase.} We now suppose $x\in S_f$. %The idea is to reduce it to the first subcase by showing that there will also then be some $x'\not\in S_f$ with $\text{dist}((x',f(x')), J_v)>\delta'>0$ for some $\delta'>0$; the key is that $f(x')$ needs to be Lipschitz continuous locally around $x'$, which will follow from the first part of Assumption \ref{ass:ident2}. 
   Since $x\in S_f$, we now have to work with the points $(x,f^+(x))$ and $(x,f^-(x))$. Without loss of generality focus on the former and suppose that $\text{dist}((x,f^+(x)), J_v(\lambda))>\eta$ for all $\lambda>0$ and as $\lambda\to+\infty$. By Assumption \ref{ass:ident1}, we know that $\mathscr{H}^{d-1}(S_f)<+\infty$, so that \citep[e.g.][Theorem 4.7]{mattila1999geometry} $\mathcal{L}^d(S_f)=0$ on $\mathcal{X}$. The first part of Assumption \ref{ass:ident2} implies that $\mathcal{L}^d(\bar{S}_f)=0$ since it implies $\mathscr{H}^{d-1}(\bar{S}_f\setminus S_f)=0$. This further implies that the projection $\pi_\mathcal{X}(B_\eta((x,f^+(x))))$ is such that it intersects $\mathcal{X}\setminus \bar{S}_f$ in such a way that 
   \[\mathcal{L}^d(\pi_\mathcal{X}(B_\delta((x,f^+(x))))\cap (\mathcal{X}\setminus \bar{S}_f))>0.\]
   Moreover, by Assumption \ref{ass:ident2} we know that $f(x)$ is Lipschitz away from $\bar{S}_f$, which directly implies that there must be points $(x',f(x'))$ in $B_{\delta}((x,f^+(x)))$ with $\text{dist}((x',f(x')), J_v)>0$. This brings us back to the first subcase above and we can derive a contradiction this way and proves that \eqref{eq:contradic} cannot be a solution in the limit as $\lambda\to+\infty$.\\

   \noindent\emph{Showing \eqref{eq:contradic2} leads to a contradiction.}\\
   Suppose \eqref{eq:contradic2} holds and denote $Z^\eta\coloneqq J_v\setminus \Gamma_f^\eta$ for some fixed $\eta>0$. By our assumption, we know that $Z^\eta$ is not empty, so there is some $(x,t)\in (\mathcal{X}\times\mathbb{R})\cap Z^\eta$.  As before $J_v$ must be countably $\mathscr{H}^d$-rectifiable since $v\in SBV(\mathcal{X}\times\mathbb{R})$. This implies \citep[e.g.][Lemma 15.5 (2)]{mattila1999geometry} that $Z^\eta$ is countably $\mathscr{H}^d$-rectifiable. One direction of a theorem based on results by Besicovitch-Marstrand-Mattila \citep[Theorem 2.63]{ambrosio2000functions} implies that 
   \[\theta_d\left(Z^\eta,(x,t)\right)=1\qquad\text{$\mathscr{H}^d$-almost every $(x,t)\in\mathcal{X}\times\mathbb{R}$},\] where
   \[\theta_d\left(A,y\right)\coloneqq \liminf_{r\downarrow 0}\frac{\mathscr{H}^d\left(B_r(y)\right)}{\omega_dr^d}\] is the lower density of the Hausdorff measure with respect to the volume of the $d$-dimensional unit ball $\omega_d$. We may hence assume that there exist some $r_0>0$ and $c>0$ such that
   \[\mathscr{H}^d(Z^\eta\cap B_r((x,t)))\geq  c r^d,\qquad (x,t)\in Z, \thickspace 0<r<r_0. \]

 Recall that for any $(x,t)\in Z^\eta\cap B_r((x,t))$ it must hold that $t\neq f(x)$. Since $Z^\eta$ is by definition $\mathscr{H}^d$-measurable, it holds by the same argument as in \eqref{eq:split} that
    \begin{equation}\label{eq:split2} 
    \begin{aligned}
    &\lim_{\lambda\to+\infty}\sup_{p\in K}\int_{J_v} (v^+-v^-)p\cdot\rho_v\dd\mathscr{H}^d \\
    =& \lim_{\lambda\to+\infty}\sup_{p\in K}\left[\int_{Z^\eta\cap B_r((x,t))} (v^+-v^-)p\cdot\rho_v\dd\mathscr{H}^d+  \int_{J_v\setminus (Z^\eta\cap B_r((x,t)))} (v^+-v^-)p\cdot\rho_v\dd\mathscr{H}^d\right].
    \end{aligned}
    \end{equation}
   Now focus on the first term on the right and recall that we only consider the limit as $\lambda\to+\infty$. Hence, for every constant $M>0$ and every $r>0$ there exists $\lambda>0$ such that the first term on the right hand side of \eqref{eq:split2} is greater than $M$. This means that the first term on the right hand side of \eqref{eq:split2} diverges to $+\infty$ as $\lambda\to+\infty$ because we can always pick a large enough $\lambda$ for every $r$ and because the range of $p^t$ increases to $\mathbb{R}$ as $\lambda\to+\infty$, and $t\neq f(x)$ for every $(x,t)\in Z$. This shows the contradiction to \eqref{eq:contradic2} for every $\eta>0$ and proves that $\lim_{\lambda\to+\infty} d_H(J_{v},\Gamma_f)=0$ for $\lambda$ that diverges fast enough. 
\end{proof}

\subsection{Proof of Proposition \ref{prop:consistency_det}}
We now proof convergence for a \emph{deterministic} analogue of the problem, for which the estimators $\hat{f}_{Nn}$ and $\hat{f}_{X,h(n)}$ are known functions whose values are given in the center of each pixel. We do this for two reasons. First, it clarifies the proof for the statistical setting. Second, it provides a novel convergence result in the mathematical literature on image recognition, complementing recent convergence results \citep{caroccia2020mumford, chambolle2021learning, ruf2019discrete}. Throughout, we assume that $v \in [0,1]^{d+1}$ without loss of generality since all functions are defined on a compact subset of $\mathbb{R}^{d+1}$ by Assumption \ref{ass:stats}.

To prove Proposition \ref{prop:consistency_det}, we require a lemma, which relates the discrete problem to the approximation of the continuous problem via cubes.
In the following, and to lighten the notational burden in proofs, we define  \[v^\uparrow_{k_1,\ldots,k_j,\ldots, k_{d+1}}\coloneqq v_{k_1,\ldots,k_{j}+1,\ldots,k_{d+1}}\] the forward value of $v_{k_1,\ldots,k_{j},\ldots,k_{d+1}}$ for a given dimension $k_j$. As in the main text, we write $k\coloneqq k_1,\ldots,k_{d+1}$, and hence $v_k \coloneqq v_{k_1,\ldots,k_{d+1}}$. We write $v_k^j$ if we want to emphasize one specific dimension $j=1,\ldots, d+1$, otherwise the dimension is defined by the context.
\begin{lemma}\label{lem:cubeapprox}
For the empirical analogues $Dv_N$ of $Dv$ and $p_N$ of $p$ as defined in the main text, it holds that
\[\int_{[0,1]^{d+1}} p\cdot Dv_N = \frac{1}{N^{d+1}}\langle p_N,D_Nv_N\rangle_N,\] with
\[\langle p_N,D_Nv_N\rangle_N =\sum_{0\leq k_1,\ldots, k_{d+1}\leq N} N\left(v^\uparrow_{k_1,\ldots, k_{d+1}} - v_{k_1,\ldots, k_{d+1}}\right) p^\uparrow_{k_1,\ldots,k_{d+1}}.\]
\end{lemma}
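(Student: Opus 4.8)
The plan is to exploit that $v_N$ is, by construction, constant on each cube $Q_k$, so that its distributional derivative is a pure jump measure concentrated on the $d$-dimensional interfaces separating adjacent cubes; the claimed identity then reduces to a bookkeeping computation that matches the surface integrals of $p$ across these interfaces with the forward-difference inner product on the right-hand side.

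First I would record the structure of $Dv_N$. Writing $v_N = \sum_k v_k \mathds{1}\{x\in Q_k\}$, the function is piecewise constant, hence $v_N\in SBV$ with vanishing absolutely continuous and Cantor parts, so that $Dv_N = (v_N^+ - v_N^-)\rho_{v_N}\,\mathscr{H}^d\mres J_{v_N}$. The jump set $J_{v_N}$ is the union of the coordinate-aligned faces $F_{k,j}\coloneqq \partial Q_k\cap \partial Q_{k+e_j}$ shared by $Q_k$ and its forward neighbour in direction $j$; on $F_{k,j}$ the orientation is the coordinate vector $e_j$, while the jump of $v_N$ across $F_{k,j}$ in that direction equals $v^\uparrow_k - v_k$. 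Each such face is a $d$-dimensional cube of side $1/N$, so that $\mathscr{H}^d(F_{k,j}) = N^{-d}$.

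Next I would compute the pairing face by face. Since $p_N$ is, by definition, constant on each face $F_{k,j}$ with component in direction $j$ equal to $p^\uparrow_k$, pairing $p$ against $Dv_N$ reduces to
\[
\int_{[0,1]^{d+1}} p\cdot Dv_N = \sum_{j=1}^{d+1}\sum_k (v^\uparrow_k - v_k)\int_{F_{k,j}} p\cdot e_j\,\dd\mathscr{H}^d = \frac{1}{N^{d}}\sum_k (v^\uparrow_k - v_k)\,p^\uparrow_k,
\]
where the last equality uses $\mathscr{H}^d(F_{k,j}) = N^{-d}$ and collapses the sum over faces and directions into the single index $k$ running over grid points as in the main text. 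Boundary faces contribute nothing beyond what the prescribed values $v(\,\cdot\,,1)=1$, $v(\,\cdot\,,N)=0$ and the decay of $p\in C_0$ already encode, so the sum ranges exactly over $0\le k_1,\ldots,k_{d+1}\le N$.

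Finally I would match this with the right-hand side. Because $D_N$ is the forward difference normalised by the side length $1/N$, we have $D_N v_N = N(v^\uparrow_k - v_k)$, so that $\langle p_N, D_N v_N\rangle_N = \sum_k N(v^\uparrow_k - v_k)p^\uparrow_k$, and hence $\tfrac{1}{N^{d+1}}\langle p_N, D_N v_N\rangle_N = N^{-d}\sum_k (v^\uparrow_k - v_k)p^\uparrow_k$, which is precisely the expression obtained above. The main thing to get right is the orientation and normalisation bookkeeping: identifying the jump measure of the piecewise-constant $v_N$ with the correctly oriented surface measures on the faces, and checking that the factor $N^{-d}$ coming from $\mathscr{H}^d(F_{k,j})$ reconciles the $N^{-(d+1)}$ prefactor with the single power of $N$ produced by the normalised forward difference $D_N$; an error there would surface as an off-by-a-power-of-$N$ discrepancy.
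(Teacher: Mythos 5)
Your proposal is correct and follows essentially the same route as the paper's proof: identify $Dv_N$ as a pure jump measure supported on the coordinate-aligned faces between adjacent cubes (the paper writes the gradient-plus-jump decomposition and notes the gradient part vanishes, which is the same observation), integrate face by face using the constancy of the discrete $p$ on each face and $\mathscr{H}^d(\text{face}) = N^{-d}$, and then reconcile the $N^{-d}$ factor with the $N^{-(d+1)}$ prefactor via the single power of $N$ in the normalized forward difference. The only cosmetic difference is that the paper explicitly cites the Caccioppoli-partition results in \citet{ambrosio2000functions} and \citet{federer2014geometric} to justify $v_N\in SBV$, where you assert this directly from piecewise constancy.
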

\begin{proof}[Proof of Lemma \ref{lem:cubeapprox}]
Note that $v_N\in SBV\left([0,1]^{d+1}\right)$ for all $N\in\mathbb{N}$, since it is a piecewise constant function and the partition $\mathcal{Q}_N$ is a Cacciopolli partition by Theorem 4.16 in \citet{ambrosio2000functions} in combination with Theorem 4.5.11 in \citet{federer2014geometric}. 
We can therefore write
\begin{align*}
    \int_{[0,1]^{d+1}} p\cdot Dv_N
    =\sum_{0\leq k\leq N}\int_{Q_{k}}p\cdot \nabla v_{k}\dd x+\sum_{0\leq k\leq N}\int_{\partial Q_k^{\uparrow}\cap\partial Q_k}\left(v_k^\uparrow - v_k\right)p\cdot s\dd\mathscr{H}^d,
\end{align*}
where the orientation $s\in \mathbb{S}^{d+1}$ is chosen in the ``forward direction'', i.e. from $j$ to $j+1$, which means that it takes the form of unit vectors $e_j\in\mathbb{R}^{d+1}$ with zeros everywhere except a $1$ in one of the $j$ positions. Since $v_k$ is constant on $Q_k$ it holds that $\nabla v_k=0$ on $Q_k$, so that the first term vanishes. 

For the second term, we have
\begin{align*}
    \sum_{0\leq k\leq N}\int_{\partial Q_k^\uparrow\cap\partial Q_k}\left(v_k^\uparrow - v_k\right)p_k\cdot s\dd\mathscr{H}^d
    = \sum_{0\leq k\leq N} \left(v_k^\uparrow - v_k\right)p_k^\uparrow\frac{1}{N^d},
\end{align*}
where we define $p_k^\uparrow = p_k\cdot e^\uparrow$, where $e^\uparrow$ is the unit vector in the forward direction for respective $j$. The inequalities follow because $p_k$ is constant on the boundary of the respective cube where it is defined and since the Hausdorff measure of a face of the hypercube in $d+1$ dimensions with sidelength $\frac{1}{N}$ is $\frac{1}{N^d}$. Also recall that the orientation $s$ is in terms of forward differences, so that $p_k$ in the above expression is the one that lies on the boundary of $Q_k$ and the corresponding $Q_k^\uparrow$. 

We therefore have
\begin{align*}
    \int_{[0,1]^{d+1}} p\cdot Dv_N =& \frac{1}{N^d}\sum_{0\leq k\leq N}\left(v_k^\uparrow - v_k\right)p_k^\uparrow\\
    =&\frac{1}{N^{d+1}}\sum_{0\leq k\leq N}N\left(v_k^\uparrow - v_k\right)p_k^\uparrow\\
    \equiv & \frac{1}{N^{d+1}}\langle p_N,D_Nv_N\rangle_N.
\end{align*}
\end{proof}
With this result, we are ready to prove Proposition \ref{prop:consistency_det}. 

\begin{proof}[Proof of Proposition \ref{prop:consistency_det}]
In the following, we denote weak$^*$-convergence by $\overset{*}{\rightharpoonup}$. 
To prove $\Gamma$-convergence we need to show \citep[Proposition 8.1]{dal2012introduction}
\begin{itemize}
    \item[(i)] for every $v\in C$ and every sequence $v_N\in \tilde{C}_N$ with  $v_{N}\overset{*}{\rightharpoonup} v$ it holds that
    \[E(v)\leq\liminf_{N\to\infty} E_N(v_N)\] and
    \item[(ii)] for every $v\in C$ there exists a sequence $v_N\in \tilde{C}_N$ with $v_{N}\overset{*}{\rightharpoonup}v$ such that 
    \[E(v)\geq \limsup_{N\to\infty} E_N(v_N).\]
\end{itemize}

\emph{Part (i)}: Let $v_N\overset{*}{\rightharpoonup} v$ and assume $\liminf_{N\to+\infty} E_N(v_N)<+\infty$. Let $p\in C_c^\infty([0,1]^{d+1},\mathbb{R}^{d+1})\cap K$.

As in the main text, the discrete approximation of $p$ is achieved by decomposing $[0,1]^{d+1}$ into hypercubes of sidelength $\frac{1}{N}$. We define $p_k^{j+\frac{1}{2}}$ as the average flux through the boundary of two adjancent hypercubes $Q_k^{j+1}$ and $Q_k^j$, i.e.
\[p_k^{j+\frac{1}{2}} = N^d\int_{\partial Q_k^{j+1}\cap \partial Q_k^j}p\cdot s\dd\mathscr{H}^d,\] where $s\in\mathbb{S}^d$ is the orientation chosen in the direction from $j$ towards $j+1$. We then define 
\[p_N = \sum_{0\leq k\leq N}p_k.\] 

Now we need to analyze the above constructed $p_N$ in terms of $\hat{K}$, which consists of two constraints. Let us start with the first. Each $p_k$ is a $(d+1)$-dimensional vector, and we have to distinguish between the first $d$ entries of this vector and the last entry. To do this, we will write $p_k^x$ as the vector consisting of the first $d$ entries and $p_k^t$ as the last entry of $p_k$. Furthermore, for the cube $Q_k$ we denote its centerpoint by $x_k$.

We can now analyze
\begin{equation}\label{eq:tob1}
\begin{aligned}
    \left\lvert \frac{1}{N}\sum_{\kappa_1\leq k_{d+1}\leq \kappa_2}  p_k^x\right\rvert =& \left\lvert\frac{1}{N}\sum_{\kappa_1\leq k_{d+1}\leq \kappa_2}^{} p_k^x+ p^x(x_k) - p^x(x_k)\right\rvert\\
    \leq & \left\lvert \frac{1}{N}\sum_{\kappa_1\leq k_{d+1}\leq \kappa_2}^{} p_k^x- p^x(x_k)\right\rvert + \left\lvert \frac{1}{N}\sum_{\kappa_1\leq k_{d+1}\leq \kappa_2} p^x(x_k)\right\rvert
\end{aligned}
\end{equation}
for $\kappa_1,\kappa_2\in \{1,\ldots,N\}$. Since $p$ is smooth and in particular Lipschitz, we can bound the first term in terms of the Lipschitz constant and the edge length. Let $L<+\infty$ be the Lipschitz constant and note that the Euclidean distance between $x_k$ and the center of one of the boundaries of $Q_k$ is $\frac{1}{2N}$. We can hence bound the first term as
\begin{align*}
    \left\lvert \frac{1}{N}\sum_{\kappa_1\leq k_{d+1}\leq \kappa_2}^{} p_k^x - p^x(x_k)\right\rvert
    \leq &\frac{1}{N}\sum_{\kappa_1\leq k_{d+1}\leq \kappa_2} \left\lvert {}p_k^x - p^x(x_k)\right\rvert\\
    \leq & \frac{1}{N}\sum_{\kappa_1\leq k_{d+1}\leq \kappa_2} \frac{L}{2N}\\
    \leq & \frac{L}{2N}.
\end{align*}
The second term in \eqref{eq:tob1} is bounded above by $\nu+ o(1)$ since $p\in K$ and a Riemann sum argument. So overall, we have
\[\frac{1}{N}\left\lvert \sum_{\kappa_1\leq k_{d+1}\leq \kappa_2} p_k^x\right\rvert\leq \nu+ o(1)+ \frac{L}{2N}.\]
So for $p_N\in \hat{K}$ we would have to normalize each ${}p_k^x$ as
\[\tilde{p}^x_k = {}p_k^x\frac{\nu}{\nu+o(1)+\frac{L}{2N}}.\]

Let us now analyze the second part of the constraint $\hat{K}$. Recall that the constraint for $K$ reads
\[\left\lvert p^x(x,t)\right\rvert^2\leq 4\left(p^t(x,t)+\lambda(t-f(x))^2\right).\]
By Assumption \ref{ass:stats} the term in square brackets is bounded for all $t$ and $x$. 
We now analyze
\[\left\lvert {}p_k^x\right\rvert^2\leq \left(\left\lvert {}p_k^x-p^x(x_k)\right\rvert+\left\lvert p^x(x_k)\right\rvert\right)^2,\]
where $x_k$ is again the center point of the cube $Q_k$. 
The first term can be bounded in the same way as above:
\[\left\lvert p_k^x-p^x(x_k)\right\rvert\leq\frac{L}{2N}.\]
By the fact that $p\in K$ the second term satisfies 
\begin{align*}
    \left\lvert p^x(x_k)\right\rvert \leq& 2\sqrt{p^t(x_k)+\lambda\left(t_k - f(\tilde{x}_k)\right)^2}\\
    \leq & 2\sqrt{\left\lvert p^t(x_k) -p_k^t\right\rvert+{}p_k^t+\lambda\left(t_k - f(\tilde{x}_k)\right)^2}\\
    \leq & 2\sqrt{\frac{L}{2N}+{}p_k^t+\lambda\left(t_k - f(\tilde{x}_k)\right)^2},
\end{align*}
where we define $x_k\equiv (\tilde{x}_k,t_k)\in \mathbb{R}^{d+1}$ and the second line follows from $(a-b)\leq |a-b|$. Note that $f$ is a bounded function by Assumption \ref{ass:stats}, which in our discretization scheme is approximated in the center of each cube. 
Putting both terms together gives
\begin{align*}
    \left\lvert {}p_k^x\right\rvert^2\leq &\left(\frac{L}{2N}+\sqrt{\frac{2L}{N}+4\left({}p_k^t+\lambda\left(t_k-f(\tilde{x}_k)\right)^2\right)}\right)^2\\
    \leq &\left(\frac{L}{2N} + \sqrt{\frac{2L}{N}} + \sqrt{4\left({}p_k^t+\lambda\left(t_k-f(\tilde{x}_k)\right)^2\right)}\right)^2\\
    = &\left(\frac{L}{2N}+\sqrt{\frac{2L}{N}}\right)^2 + 2\left(\frac{L}{2N}+\sqrt{\frac{2L}{N}}\right)\sqrt{4\left({}p_k^t+\lambda\left(t_k-f(\tilde{x}_k)\right)^2\right)}  +  \\ 
    & 4\left({}p_k^t+\lambda\left(t_k-f(\tilde{x}_k)\right)^2\right) \\
    & \coloneqq C_p,
\end{align*}
where we used the inequality $\sqrt{a+b}\leq \sqrt{a}+\sqrt{b}$ in the second line.

So in order to make $\tilde{p}_N\in \hat{K}_N$, we would need to normalize 
\begin{align*}
    \tilde{p}^x_k = p_k^x \left(\frac{\nu}{\nu+o(1)+\frac{L}{2N}}\right)\qquad\text{and}\qquad
    \tilde{p}^x_k = p_k^x\sqrt{\frac{4\left({}p_k^t+\lambda\left(t_k-f(\tilde{x}_k)\right)^2\right)}{C_p}}.
\end{align*}
Taking the minimum of these two expressions for $\tilde{p}_k^x$, and calling this minimum $0\leq m(N)\leq 1$ we have $\tilde{p}_k^x = m(N) p_k^x\leq p_k^x$ with $m(N)\to1$ as $N\to\infty$. It also follows that
\[\frac{1}{N^{d+1}}\langle p_N,D_Nv_N\rangle_N \leq m(N)E_N(v_N).\]

We now show the convergence of the liminf using the information we have just derived. First, since $v_N\overset{*}{\rightharpoonup}v$ is holds by definition of weak$^*$-convergence that 
\[\int_{[0,1]^{d+1}} p\cdot Dv = \lim_{N\to\infty} \int_{[0,1]^{d+1}} p\cdot Dv_N = \lim_{N\to\infty} \frac{1}{N^{d+1}}\langle p_N,D_Nv_N\rangle_N,\] where the second inequality follows from Lemma \ref{lem:cubeapprox}.
From the above argument, letting $N\to+\infty$ and then taking the supremum over $p$ we get
\[E(v)\equiv \sup_{p\in K}\int_{[0,1]^{d+1}} p\cdot Dv\leq \liminf_{N\to\infty} E_N(v_N).\] 
Note that the Neumann boundary conditions of the population problem are preserved in the finite sample version and consistency for those follows immediately by construction. This shows the first part of $\Gamma$-convergence. 

\emph{Part (ii)}: For the second part we have to construct a recovering sequence \citep{dal2012introduction, chambolle2021learning}, which is a sequence $v_N$ of discrete functions of the form
\[\tilde{v}_N(x)\coloneqq \sum_{0\leq k\leq N} \tilde{v}_k \mathds{1}\{x\in Q_k\}\]
with $\tilde{v}_N\overset{*}{\rightharpoonup} v$ and 
\[\limsup_{N\to\infty} E_N(\tilde{v}_N)\leq E(v). \]

By \citet[Theorem 3.9]{ambrosio2000functions} we can approximate $v\in BV([0,1]^{d+1})$ by a sequence $v_N\in C_c^\infty([0,1]^{d+1})$ of mollifiers in $L^1$ with 
\[\lim_{N\to\infty} \int_{[0,1]^{d+1}} \left\lvert \nabla v_N\right\rvert\dd x = \left\lvert Dv\right\rvert.\] We then construct the $\tilde{v}_N$ by 
\[\tilde{v}_N(x) = \sum_{0\leq k\leq N} v_N(x_k)\mathds{1}\{x\in Q_k\},\]
where $x_k$ is the center point of the corresponding cube $Q_k$. 
But note that 
\begin{align*}
    &\lim_{N\to\infty}\lvert \tilde{v}_N - v_N\rvert_{L^1([0,1]^{d+1})} \\=& 
    \lim_{N\to\infty}\int_{[0,1]^{d+1}} \left\lvert \sum_{0\leq k\leq N} v_N(x_k)\mathds{1}\left\{x\in Q_k\right\} - v(x)\right\rvert\dd x\\
    \leq & \lim_{N\to\infty}\int_{[0,1]^{d+1}} \left\lvert \sum_{0\leq k\leq N} v_N(x_k)\mathds{1}\left\{x\in Q_k\right\} - v_N(x)\right\rvert\dd x + o(1)\\
    = & \lim_{N\to\infty}\int_{[0,1]^{d+1}} \left\lvert \sum_{0\leq k\leq N} v_N(x_k)\mathds{1}\left\{x\in Q_k\right\} - \sum_{0\leq k\leq N} v_N(x)\mathds{1}\left\{x\in Q_k\right\}\right\rvert\dd x + o(1)\\
    \leq & \lim_{N\to\infty}\sum_{0\leq k\leq N}\int_{Q_k} \left\lvert  v_N(x_k) - v_N(x)\right\rvert\dd x + o(1)\\
    \leq &\lim_{N\to\infty} \sum_{0\leq k \leq N}\frac{L_N+o(1)}{2N} \frac{1}{N^{d+1}} +o(1)\\
    \leq &\lim_{N\to\infty} \sum_{0\leq k \leq N}\frac{L_N+o(1)}{2N^{d+2}}  +o(1)\\
    \leq &\lim_{N\to\infty} \frac{L_N+o(1)}{2N} +o(1) = 0,
\end{align*}
where we used the mean-value inequality and 
\[L_N\coloneqq \int_{[0,1]^{d+1}}\left\lvert \nabla v_N\right\rvert\dd x,\] which we know converges to a bounded quantity since $v\in BV([0,1]^{d+1})$. This shows that $\tilde{v}_N\to v$ in $L^1([0,1]^{d+1})$. %An alternative argument would be to use the boundedness of the gradient of $v_N$ which follows since $[0,1]^{d+1}$ is compact. 

Now we need to show that $\tilde{v}_N\overset{*}{\rightharpoonup}v$, for which we have to show that $\sup_{N\in\mathbb{N}}\left \lvert D\tilde{v}_N\right\rvert<+\infty$. Since $v_N\to v$ in $L^1([0,1]^{d+1})$, it holds that 
\[|Dv_N| = \int_{[0,1]^{d+1}}\left\lvert \nabla v_N\right\rvert \dd x <+\infty\] for $N$ large enough \citep[Theorem 3.9]{ambrosio2000functions}.  
By construction, for any $N\in\mathbb{N}$ it holds that
\begin{align*}
    \int_{[0,1]^{d+1}} p\cdot D\tilde{v}_N = &\sum_{0\leq k\leq N} \int_{Q_k} p\cdot D\tilde{v}_N\\
    = & \sum_{0\leq k\leq N} \int_{Q_k} p\cdot \nabla\tilde{v}_N + \sum_{0\leq k\leq N}\int_{\partial Q_k^{\uparrow}\cap\partial Q_k}\left(\tilde{v}_k^\uparrow - \tilde{v}_k\right)p\cdot s\dd\mathscr{H}^d\\
    =& \sum_{0\leq k\leq N} \left(\tilde{v}_k^\uparrow - \tilde{v}_k\right)p^\uparrow\frac{1}{N^d}\\
    \leq &\frac{1}{N^d} \sum_{0\leq k\leq N} \frac{L}{N} p^\uparrow,
\end{align*}
where $p^\uparrow$ is the value of $p$ on the boundary of one of the forward directions. Since $p$ is continuous and $[0,1]^{d+1}$ is compact, it follows
\[\sup_{N\in\mathbb{N}}\sup_{p\in C_0([0,1]^{d+1})} \int_{[0,1]^{d+1}} p\cdot D\tilde{v}_N\leq\sup_{N\in\mathbb{N}}\sup_{p\in C_0([0,1]^{d+1})} \frac{1}{N^d} \sum_{0\leq k\leq N} \frac{L}{N} p^\uparrow<+\infty, \]
which shows that $\sup_{N\in\mathbb{N}}\left \lvert D\tilde{v}_N\right\rvert<+\infty$. This in turn implies that that $\tilde{v}_N\overset{*}{\rightharpoonup} v$, which shows that $\tilde{v}_N$ is the required recovering sequence, since the Neumann boundary conditions are also preserved as before. 
\end{proof}

\subsection{Proof of Theorem \ref{thm:consistency_rand}}
\begin{proof}
The proof follows along similar lines to the proof of Proposition \ref{prop:consistency_det}, except that we have to account for the randomness of the estimators $\hat{f}_{Nn}(\tilde{x}_k)$ and $\hat{f}_{X,h(n)}(\tilde{x}_k)$. The only random element in the optimization problem is one constraint in $\hat{K}$, and this is what we analyze. All the other arguments are the same as in the deterministic case and are omitted. 

Recall that the constraint for $\hat{K}_{Nn}$ reads
\[\left\lvert p^x(x,t)\right\rvert^2\leq 4\hat{f}_{X,h(n)}(\tilde{x}_k)\left(p^t(x,t)+\lambda_n \hat{f}_{X,h(n)}(\tilde{x}_k)\left(t-\hat{f}_{Nn}(\tilde{x}_k)\right)^2\right).\]
As before we analyze
\[\left\lvert {}p_k^x\right\rvert^2\leq \left(\left\lvert {}p_k^x-p^x(x_k)\right\rvert+\left\lvert p^x(x_k)\right\rvert\right)^2,\]
where $x_k$ is again the center point of the cube $Q_k$. 
The first term can be bounded in the same way as in Proposition \ref{prop:consistency_det}:
\[\left\lvert {}p_k^x-p^x(x_k)\right\rvert\leq\frac{L}{2N}.\]
By the fact that $p\in K$ the second term satisfies 
\begin{align*}
    \left\lvert p^x(x_k)\right\rvert \leq& 2\sqrt{\hat{f}_{X,h(n)}(\tilde{x}_k )p^t(x_k)+\lambda_n\hat{f}^2_{X,Nn}(\tilde{x}_k )\left(t_k - \hat{f}_{Nn}(\tilde{x}_k)\right)^2}\\
    \leq & 2\sqrt{\hat{f}_{X,h(n)}(\tilde{x}_k ) \left\lvert p^t(x_k) - {}p_k^t\right\rvert+\hat{f}_{X,h(n)}(\tilde{x}_k ){}p_k^t+\lambda_n\hat{f}^2_{X,Nn}(\tilde{x}_k )\left(t_k - \hat{f}_{Nn}(\tilde{x}_k)\right)^2}\\
    \leq & 2\sqrt{\hat{f}_{X,h(n)}(\tilde{x}_k )\frac{L}{2N}+ \hat{f}_{X,h(n)}(\tilde{x}_k ){}p_k^t+\lambda_n\hat{f}_{X,h(n)}^2(\tilde{x}_k )\left(t_k - \hat{f}_{Nn}(\tilde{x}_k)\right)^2},
\end{align*}
where we define $x_k\equiv (\tilde{x}_k,t_k)\in \mathbb{R}^{d+1}$ and the second line follows from $(a-b)\leq |a-b|$. The difference to the deterministic case is that $\hat{f}_{Nn}$ and $\hat{f}_{X,h(n)}$ are random estimators of $f(x)$ and $f_X(x)$, which in our discretization scheme are imputed in the center of each cube. 
Putting both terms together gives
\[\left\lvert {}p_k^x\right\rvert^2\leq \left(\frac{L}{2N}+\sqrt{\hat{f}_{X,h(n)}(\tilde{x}_k)\frac{2L}{N} + \hat{f}_{X,h(n)}(\tilde{x}_k){}p_k^t+\lambda_n\hat{f}^2_{X,Nn}(\tilde{x}_k)\left(t_k-\hat{f}_{Nn}(\tilde{x}_k)\right)^2}\right)^2.\]

Just as in the discrete case, we bound this further by
\begin{multline}\label{eq:largep}
\left\lvert {}p_k^x\right\rvert^2\leq\left(\frac{L}{2N}+\sqrt{\hat{f}_{X,h(n)}(\tilde{x}_k)\frac{2L}{N}}\right)^2 \\ + 2\left(\frac{L}{2N}+\sqrt{\hat{f}_{X,h(n)}(\tilde{x}_k)\frac{2L}{N}}\right)\sqrt{4\hat{f}_{X,h(n)}(\tilde{x}_k)\left({}p_k^t+\lambda_n\hat{f}_{X,h(n)}(\tilde{x}_k)\left(t_k-\hat{f}_{Nn}(\tilde{x}_k)\right)^2\right)} \\+ 4\hat{f}_{X,h(n)}(\tilde{x}_k)\left({}p_k^t+\lambda_n\hat{f}_{X,h(n)}(\tilde{x}_k)\left(t_k-\hat{f}_{Nn}(\tilde{x}_k)\right)^2\right)
\end{multline}

\noindent\emph{Addressing the randomness:}\\
Let us consider the two terms $\hat{f}_{X,h(n)}(\tilde{x}_k)$ and $\hat{f}_{Nn}(\tilde{x}_k)$
one by one. First, by definition, $\hat{f}_{X,h(n)}(\tilde{x}_k)$ is a Nadaraya-Watson estimator. Corollary 3 in \citet{irle1997consistency} in conjunction with Assumption \ref{ass:stats} and the assumptions stated in Theorem \ref{thm:consistency_rand} guarantees that $\hat{f}_{X,h(n)}(x)$ converges almost surely for almost every $x\in\mathbb{R}^d$ to $f_X(x)$. To see this, note that strict stationarity of $\{X_i\}$ implies asymptotic stationarity and in particular ``summability'' of the first two terms in the last equation on page 131 of \citet{irle1997consistency}. The last term of the summability condition of \citet{irle1997consistency} follows since under Assumption \ref{ass:stats} $s>1$, which implies that there exists a small enough $k>0$ such that $\sum_{i=1}^{\infty} \alpha(i)^{1/k}<+\infty$; in fact, we can set $0<k<s$. The last assumption in Corollary 3 of \citet{irle1997consistency} follows directly from Assumption \ref{ass:stats}. Then, by construction, $\tilde{x}_k$ is a Lebesgue point, which implies that $\hat{f}_{X,h(n)}(\tilde{x}_k)$ converges almost surely to $f_X(\tilde{x}_k)$ since the latter is a density. The Continuous Mapping Theorem \citep[e.g.][Theorem 1.3.6]{van1996weak} implies that $\hat{f}_{X,h(n)}^2(\tilde{x}_k)$ also converges almost surely to $f^2_X(\tilde{x}_k)$.

Now write $Z_{ik}\coloneqq 1\{i:X_i\in Q_k\} (f(X_i)+\varepsilon_i)$ with $S_k\coloneqq \sum_{i=1}^n Z_{ik}$ for the numerator; and $I_{ik}\coloneqq 1\{X_i\in Q_k\}$ with $M_k \coloneqq \sum_{i=1}^n I_{ik}$. Then we can write \[\hat{f}_{Nn}(\tilde{x}_k) = \frac{1}{\sum_{i=1}^n 1\left\{i: X_i\in Q_k\right\}}\sum_{i: X_i\in Q_k} (f(X_i)+\varepsilon_i) = \frac{S_k}{M_k}\].  

Fix a cube $Q_k$ and define the success probability of the Bernouilli random variable $I_{ik}$ as $\pi_k = \int_{Q_k} f_X(x) dx$. By Assumption \ref{ass:stats}, $\pi_k\geq \frac{c_x}{N^d}$. By Theorem 3.49 in \citet{white2014asymptotic}, the process $\{I_{ik}\}_i$ is $\alpha$-mixing of the same size since $\{(X_i,\varepsilon_i)\}_i$ is $\alpha$-mixing and $I_{ik}$ is a measurable function of the process. 

We want to bound $|\hat{f}_{Nn}(\tilde{x}_k) - \frac{\mathbb{E}[S_k]}{\mathbb{E}[M_k]}|$. We can write:
\[\left\lvert \hat{f}_{Nn}(\tilde{x}_k) - \frac{\mathbb{E}[S_k]}{\mathbb{E}[M_k]}\right\rvert = \left\lvert\frac{S_k}{M_k} - \frac{\mathbb{E}[S_k]}{\mathbb{E}[M_k]}\right\rvert.\] Note that this is the correct relationship to bound since we want to compare the estimator to $\mathbb{E}[S_k | X_i\in Q_k] = \frac{\mathbb{E}[S_k]}{\mathbb{E}[M_k]}.$ Now write
\begin{align*}
    \left\lvert \frac{S_k}{M_k} - \frac{\mathbb{E}[S_k]}{\mathbb{E}[M_k]}\right\rvert  &\leq  \left\lvert \frac{S_k}{M_k} - \frac{\mathbb{E}[S_k]}{M_k}\right\rvert +\left\lvert \frac{\mathbb{E}[S_k]}{M_k}- \frac{\mathbb{E}[S_k]}{\mathbb{E}[M_k]}\right\rvert\\
    &= \frac{1}{M_k} \left\lvert S_k - \mathbb{E}[S_k]\right\rvert + \mathbb{E}[S_k]\left\lvert \frac{1}{M_k} - \frac{1}{\mathbb{E}[M_k]}\right\rvert,
\end{align*}
and consider the first term first.

To bound the first term, consider the two events $\{\sum_i I_{ik}<\frac{n c_x}{2N^d}\}$ and $\{\sum_i I_{ik}\geq \frac{n c_x}{2N^d}\}$. The first one is the event that there are ``not enough'' data points in $Q_k$, the ``bad event''. 

 We get: 
\begin{align*}
    P\left(\sum_{i=1}^n I_{ik} < \frac{n c_x}{2N^d}\right) &= P\left(\sum_{i=1}^n I_{ik} - \sum_{i=1}^n \mathbb{E}[I_{ik}]< \frac{n c_x}{2N^d} - \sum_{i=1}^n \mathbb{E}[I_{ik}]\right)\\
    &\leq P\left(\sum_{i=1}^n I_{ik} - \sum_{i=1}^n \mathbb{E}[I_{ik}]< -\frac{n c_x}{2N^d}\right)\\
    & \leq P\left(\left\lvert\sum_{i=1}^n I_{ik} - \sum_{i=1}^n \mathbb{E}[I_{ik}]\right\rvert>\frac{n c_x}{2N^d}\right)\\
    &\leq 4\exp\left(-\frac{c_x^2}{32}q\right)+ 22\left(1+\frac{8}{c_x}\right)^{1/2}q \alpha\left(\frac{n}{2N^dq}\right)\eqqcolon A(n, N^d,q)
\end{align*}
for any integer $q\in [1,\frac{n}{2N^d}]$, where the second line follows from $p_k\geq \frac{c_x}{N^d}$ and every $I_{ik}$ is a Bernoulli random variable; and the last line follows from a Hoeffding inequality for $\alpha$-mixing processes \citep[Theorem 1.3]{bosq2012nonparametric}. Above, we can set $q$ to be any slowly changing function of $\frac{n}{N^d}$, like $\lfloor\sqrt{n}\rfloor$ or $\lfloor\log(n)\rfloor$, to trade-off the speed of convergence in both terms. Also, note that this will hold for any $k$ since the smallest probability of landing in a cube is $c_x \frac{1}{N^d}$ by Assumption \ref{ass:stats}. Since we have bounded this ``bad event'', we can now turn to the good event and condition on this.

 We want to bound this with a similar concentration bound, but since $\varepsilon_i$ do not need to have bounded support, we need the more general concentration bound \citep[Theorem 1.4]{bosq2012nonparametric}. Since $f$ is bounded on $\mathcal{X}$, $f(X_i)$ is bounded for all $i$; in addition, $\varepsilon_i$ possesses a moment generating function in a local neighborhood of zero by Assumption \ref{ass:stats}.  Therefore $Y_i=f(X_i)+\varepsilon_i$ satisfies Cram\'er's condition from Theorem 1.4 in \citet{bosq2012nonparametric}. For any $\delta>0$, any $\frac{n}{N^d}\geq 2$, any integer $q\in [1,\frac{n}{2N^d}]$, and any each $k\geq 3$ we hence get:
\begin{align*}
    & P\left(\left\lvert S_k - \mathbb{E}[S_k]\right\rvert>\delta\frac{nc_x}{2N^d}\right)\\
    \eqqcolon &P\left(\left\lvert S_k - \mathbb{E}[S_k]\right\rvert>\tilde{\delta}\frac{n}{N^d} \right)\\
    \leq &
a_1 \exp\left( -\frac{q\tilde{\delta}^2}{25m_2^2 + 5c\tilde{\delta}} \right) 
+ a_2(k)\alpha\left(\left\lfloor \frac{n}{(q+1) N^d} \right\rfloor \right)\eqqcolon B(n, N^d,q, \delta),
\end{align*}
where 
\begin{align*}
a_1 &= 2\frac{n}{qN^d} + 2 \left( 1 + \frac{\tilde{\delta}^2}{25m_2^2 + 5c\tilde{\delta}} \right), 
\quad m_2^2 = \max_{1 \leq t \leq n} \mathbb{E}X_t^2, \\
a_2(k) &= 11\frac{n}{N^d} \left( 1 + \frac{5m_k}{\tilde{\delta}} \right)^{\frac{2k}{2k+1}}, 
\quad m_k = \max_{1 \leq t \leq n} \|X_t\|_k.
\end{align*}
The same argument as above holds, i.e., we can pick an appropriate $q$ that balanced the convergence of both terms, and hence proves convergence to zero of this event. 

In order to put both bounds together and bound the entire first term, note that
\[\left\{\frac{1}{M_k} \left\lvert S_k - \mathbb{E}[S_k]\right\rvert>\delta, M_k\geq \frac{n}{2N^d}\right\}\subset \left\{\left\lvert S_k - \mathbb{E}[S_k]\right\rvert>\delta\frac{n}{2N^d}\right\},\]
so that 
\begin{align*}
P\left(\frac{1}{M_k}\left\lvert S_k-\mathbb{E}[S_k]\right\rvert>\delta\right) \leq& P\left(M_k< \frac{n}{2N^d}\right) + P\left(\left\lvert S_k-\mathbb{E}[S_k]\right\rvert>\delta \frac{n}{2N^d}\right)\\
\leq &A(n,N^d,q_1)+B(n,N^d,q_2)
\end{align*}
for integers $q_1,q_2\in [1,\frac{n}{2N^d}]$.

Now we need to bound the second term $\mathbb{E}[S_k]\left\lvert M_k^{-1} - (\mathbb{E}[M_k])^{-1}\right\rvert$. We approach this in the same way as the previous bound. First, notice that 
\[|M_k^{-1} - (\mathbb{E}[M_k])^{-1}|\leq \left\lvert \frac{|M_k-\mathbb{E}[M_k]|}{M_k\mathbb{E}[M_k]}\right\rvert.\] Now we split the problem into the ``bad'' event $\{M_k<\frac{n}{2N^d}\}$, which we can bound in the exact same way as above by $A(n,N^d,q)$, and the ``good event'' $\{M_k\geq \frac{n}{2N^d}\}$.
For the good event, we have
\[\mathbb{E}[S_k]\left\lvert M_k^{-1}- \left(\mathbb{E}[M_k]\right)^{-1}\right\rvert \leq \frac{\mathbb{E}[S_k]}{\mathbb{E}[M_k]} \frac{|M_k-\mathbb{E}[M_k]|}{M_k}.\]
Now note that 
\[\mathbb{E}[S_k] = \sum_{i=1}^n \mathbb{E}[f(X_i) +\varepsilon_i)|X_i\in Q_k]\cdot P(X_i\in Q_k)\leq \sum_{i=1}^n C\frac{1}{N^d}= C\frac{n}{N^d},\] because $\mathbb{E}[\varepsilon_i|X_i] = 0$, the law of iterated expectations, and $f\leq C<+\infty$ by Assumption \ref{ass:stats}. Similarly, $\mathbb{E}[M_k] = \frac{n}{N^d}$, so that 
$\frac{\mathbb{E}[S_k]}{\mathbb{E}[M_k]}\leq C.$

Now focus on the good event and bound for any $\delta>0$
\begin{align*}
    &P\left(\left\lvert M_k - \mathbb{E}[M_k]\right\rvert>\frac{\delta}{2C}\cdot \frac{n}{N^d}\right)\\
    \leq & 4\exp\left(-\frac{\delta}{32C^2}q\right)+22\left(1+\frac{8C}{\delta}\right)^{1/2} q\alpha\left(\frac{n}{2qN^d}\right) \\
    \eqqcolon & C(n,N^d,q)
\end{align*}
for any integer $q\in[1,\frac{n}{N^d}]$, using the bound established above for $I_{ik}$.
Notice that
\[\left\{\mathbb{E}[S_k]\left\lvert M_k^{-1}-(\mathbb{E}[M_k])^{-1}\right\rvert>\delta, M_k\geq \frac{n}{2N^d}\right\}\subset \left\{|M_k-\mathbb{E}[M_k]|>\frac{\delta}{2C}\cdot \frac{n}{N^d}\right\}.\]
We can therefore bound the second term for $\delta>0$
\begin{align*}
P\left(\mathbb{E}[S_k]\left\lvert M_k^{-1}-\left(\mathbb{E}[M_k]\right)^{-1}\right\rvert >\delta\right)\leq &P\left(M_k< \frac{n}{2N^d}\right) + P\left(|M_k-\mathbb{E}[M_k]|>\frac{\delta}{2C}\cdot \frac{n}{N^d}\right)\\
&\leq A(n,N^d,q) + C(n, N^d, q).
\end{align*}
This implies that we can bound the entire expression as
\[P\left(\left\lvert \hat{f}_{Nn}(\tilde{x}_k)-\frac{\mathbb{E}[S_k]}{\mathbb{E}[M_k]}\right\rvert>\delta\right)\leq 2 A(n,N^d,q) + B(n,N^d,q,\delta) + C(n,N^d,q,\delta).\]
This bound holds for every cube $Q_k$. To obtain a uniform bound over the entire grid, we take the union bound
\[P\left(\max_{1\leq k\leq N^d}\left\lvert \hat{f}_{Nn}(\tilde{x}_k)-\frac{\mathbb{E}[S_k]}{\mathbb{E}[M_k]}\right\rvert>\delta\right)\leq N^d\left(2 A(n,N^d,q) + B(n,N^d,q,\delta) + C(n, N^d,q,\delta)\right).\]
We want this expression to converge to zero. 
Analyze the terms $A(n,N^d,q)$ and $C(n,N^d,q,\delta)$ first. These terms depend on $q$ and have two main terms: the first is exponential in $q$ and the second takes the form $q\cdot \alpha(\frac{n}{2q N^d})$. We get to choose $q\in [1,\frac{n}{2N^d}]$. So pick $q = \left(\frac{n}{N^d}\right)^\rho$ for some $0<\rho<1$. 
The exponential term is of the form $N^d\cdot \exp\left(-c\left(\frac{n}{N^d}\right)^\rho\right)$ for some $0<\rho<1$ and some fixed constant $c$. Since $\frac{n}{N^d}\to+\infty$, the exponential term dominates and this term vanishes exponentially fast as $n\to+\infty$. 

The term involving the $\alpha$-mixing parameter is of the form 
$N^d\cdot(\frac{n}{N^d})^\rho\alpha\left(\frac{n}{2qN^d}\right)$, where $\alpha(m) = O(m^{-s})$ by Assumption \ref{ass:stats}. We want the expression 
\[T(n)= N^d\cdot\left(\frac{n}{N^d}\right)^\rho\cdot\alpha\left(\frac{n^{1-\rho}}{2N^{d(1-\rho)}}\right)\leq N^d\cdot\left(\frac{n}{N^d}\right)^\rho\cdot C\cdot \left(\frac{n^{1-\rho}}{2N^{d(1-\rho)}}\right)^{-s}\]
to be bounded above by $n^{-r}$ for some $r>0$, in order to show almost sure convergence over all bins by a Borel-Cantelli-argument further down. We do this by picking an appropriate $\rho\in(0,1)$. Some calculations show that if 
\[\rho\leq \frac{s-m(1+s)-r}{(1+s)(1-m)}\] then $T(n) = O(n^{-r})$ for some fixed constant for all $n$. Also note that this $\rho\in(0,1)$ because $s>\max\left\{\frac{r+m}{1-m},\frac{r+1}{\frac{2k}{2k+1}(1-m)}\right\}$ by Assumption \ref{ass:stats}.

Now turn to analyzing $B(n,N^d,q,\delta)$, which is again split into an exponential term and the $\alpha$-mixing term. For $q=\left(\frac{n}{N^d}\right)^\rho$ for $\rho\in (0,1)$, the exponential term dominates, so focus again on the $\alpha$-mixing term. We have to analyze $a_2(k)\alpha\left(\left\lfloor\frac{n}{(q1)N^d}\right\rfloor\right)$ in Theorem 1.4 of \citet{bosq2012nonparametric}. To simplify, we assume that $\frac{n}{(q1)N^d}$ is an integer without loss of generality. We want to guarantee that $N^d\frac{n}{N^d} \left(\frac{n}{\left(\left(\frac{n}{N^d}\right)^\rho+1\right)N^d}\right)^{-s}$ is bounded above by $n^{-r}$ for some $r>1$. Some more calculations show that if 
\[\rho\leq \frac{s\frac{2k}{2k+1}(1-m)-r-1}{s\frac{2k}{2k+1}(1-m)},\] then $T(n)= O(n^{-r})$ for some fixed constant $C<+\infty$ over all $n$. Again, since $s>\max\left\{\frac{r+m}{1-m},\frac{r+1}{\frac{2k}{2k+1}(1-m)}\right\}$ by Assumption \ref{ass:stats}, $\rho\in(0,1)$.

Hence picking 
\[\rho\leq \min\left\{\frac{s-m(1+s)-r}{(1+s)(1-m)},\frac{s\frac{2k}{2k+1}(1-m)-r-1}{s\frac{2k}{2k+1}(1-m)}\right\}\] ensures that
\[P\left(\max_{1\leq k\leq N^d}\left\lvert \hat{f}_{Nn}(\tilde{x}_k)-\frac{\mathbb{E}[S_k]}{\mathbb{E}[M_k]}\right\rvert>\delta\right) \leq C n^{-r}\] for some constant $C<+\infty$.

We now show the almost sure convergence uniformly over all cubes $Q_k$. For each $n\in\mathbb{N}$, consider the event 
\[A_n\coloneqq\left\{\max_{1\leq k\leq N^d}\left\lvert \hat{f}_{Nn}(\tilde{x}_k)-\frac{\mathbb{E}[S_k]}{\mathbb{E}[M_k]}\right\rvert>\delta\right\},\] and the above computations show that $P(A_n)\leq C n^{-r}$. Hence,
\[\sum_{i=1}^n P(A_n)\leq C \sum_{i=1}^n n^{-r}<+\infty\] and the Borel-Cantelli Lemma implies that $A_n$ occurs infinitely often with probability zero. Thus, for large enough $n$, almost surely $A_n=0$, which shows the almost sure convergence of $\hat{f}_{nN}(\tilde{x}_k)$ to $\lim_{N\to+\infty} \frac{\mathbb{E}[S_k]}{\mathbb{E}[M_k]}$. 

If $f$ is continuous at $\tilde{x}_k$, then, focusing on a sequence of cubes $\{Q_{k}\}_N$ shrinking in such a way that they all contain $\tilde{x}_k$, the classical Lebesgue-Besicovitch differentiation theorem \citep[e.g.][section 1.7.1]{evans2018measure} implies that $\hat{f}_{Nn}(\tilde{x}_k)= \lim_{N\to+\infty} \frac{\mathbb{E}[S_k]}{\mathbb{E}[M_k]} = \frac{f\left(\tilde{x}_k\right) f_X\left(\tilde{x}_k\right)}{f_X\left(\tilde{x}_k\right)} =  f(\tilde{x}_k)$.

If $\tilde{x}_k\in S_f$ the limit value depends on how the cubes shrink while containing $x_k$. We can write
\begin{align*}
\lim_{N\to+\infty}\hat{f}_{Nn}(\tilde{x}_k) &= \lim_{N\to+\infty}\frac{\int_{Q_{kN}} f(x) f_X(x) dx}{\int_{Q_{kN}} f_X(x) dx} \\
&= \lim_{N\to+\infty}\frac{\int_{Q_{kN} \cap H^+(\tilde{x}_k, \rho)} f(x) f_X(x) dx+ \int_{Q_k\cap H^-(\tilde{x}_k,\rho)} f(x) f_X(x) dx}{\int_{Q_{kN}} f_X(x) dx},
\end{align*}
where $Q_{kN}$ is a sequence of cubes containing $\tilde{x}_k$ such that they converge to $\{\tilde{x}_k\}$ and $H^+(\tilde{x}_k,\rho)=\{y: \langle y-x,\rho\rangle\geq 0\}$ and $H^-(\tilde{x}_k,\rho) = \mathbb{R}^d\setminus H^+(\tilde{x}_k,\rho)$ are the two half spaces created by the Hyperplane $H(\tilde{x}_k,\rho)$ oriented by some $\rho\in S^{d-1}$ in the direction of the jump of $f$ at $\tilde{x}_k$, as depicted in Figure \ref{fig:jump}. Since each $Q_{kN}$ intersects both half balls for all $N$, the limit $\lim_{N\to+\infty}\hat{f}_{Nn}(\tilde{x}_k)$ converges by the Lebesgue-Besicovitch differentiation theorem to the convex combination
\[\lim_{N\to +\infty} \hat{f}_{Nn}(\tilde{x}_k) = (1-\theta)f^+(\tilde{x}_k) + \theta f^{-}(\tilde{x}_k),  \]
where 
\[0<\theta = \lim_{N\to+\infty} \frac{\int_{Q_{kn}\cap H^+(\tilde{x}_k,\rho)} f_X(x) dx}{\int_{Q_{kn}} f_X(x)dx}<1.\]
Note that since cubes are symmetric with respect to their center point, any hyperplane going through the center point of $Q_k$ and dissecting it splits $Q_k$ into two subsets of the same Lebesgue measure. Hence if $f_x$ is the density of the continuous uniform distribution and the grid consists of all cubes, then $\theta = 1/2$.

\begin{figure}[h!]
    \centering
\begin{tikzpicture}[scale=1]

% Shade the region between the curve and the two adjacent squares
        \begin{scope}
            \clip (-3,-1) rectangle (3,1);
            \fill[gray!30] (-3,-0.6) .. controls (-2,-0.9) and (-1,-0.2) .. (0,0)
                                  .. controls (0.4,0.2) and (0.8,0.8) .. (1.6,1) -- (3,1) -- (3,-1) -- (-3,-1) -- cycle;
        \end{scope}

        % Filling the bottom gray
        \fill[gray!30] (-1,-3) rectangle (1,-1);

        % Draw the extended dividing curve
        \draw[thick] (-3,-0.6) .. controls (-2,-0.9) and (-1,-0.2) .. (0,0)
                                  .. controls (0.4,0.2) and (0.8,0.8) .. (1.6,1);
        
    % Draw the square boundary
    \draw[thick] (-1,-1) rectangle (1,1);

    % Draw the additional squares on each side
        \draw[thick] (-3,-1) rectangle (-1,1); % Left square
        \draw[thick] (1,-1) rectangle (3,1);   % Right square
        \draw[thick] (-1,1) rectangle (1,3);   % Top square
        \draw[thick] (-1,-3) rectangle (1,-1); % Bottom square

    % hyperplane 
    \draw[thick,dashed] (-2,-1) -- (2,1);
    \node[below right] at (1.5,0.85) { $H(\tilde{x}_k,\rho)$};
    
    % Draw the dashed circle around the square
    %\draw[dashed] (0,0) circle (1.4);

    % Draw the dashed diagonal line through the square
    %\draw[dashed] (-1.4,0) -- (1.4,0);

    % Mark the critical point and the other center points
    \filldraw[black] (0,0) circle (0.8pt);
    \node[above left] at (0,0) { $\tilde{x}_k$};

    \filldraw[black] (2,0) circle (0.8pt);
    \node[right] at (2,0) { $\tilde{x}_{\overset{\rightarrow}{k}}$};

    \filldraw[black] (0,-2) circle (0.8pt);
    \node[right] at (0,-2) { $\tilde{x}_{k\uparrow}$};

    \filldraw[black] (0,2) circle (0.8pt);
    \node[right] at (0,2) { $\tilde{x}_{k\downarrow}$};

    \filldraw[black] (-2,0) circle (0.8pt);
    \node[right] at (-2,0) { $\tilde{x}_{\overset{\leftarrow}{k}}$};

    % Label the curve
    \node[right] at (-2.4,-0.4) { $S_f$};

    % Add labels for $f^+$ and $f^-$ regions
    %\node at (-0.5,0.6) { $f^-$};
    %\node at (0.5,-0.6) { $f^+$};

    % Label the ball B
    %\node[above] at (0,1.4) { $\mathcal{B}^-(r_N,\tilde{x}_k)$};
    %\node[below] at (0,-1.4) {$\mathcal{B}^+(r_N, \tilde{x}_k)$};

    % Draw the vector rho
    \draw[->, thick] (0,0) -- (0.26,-0.5);
    \node[right] at (0.2,-0.4) {$\rho$};
\end{tikzpicture}
\caption{Representation of the case $\tilde{x}_k \in S_f$.}
    \label{fig:jump}
\end{figure}

\noindent\emph{Putting everything together:}\\
The argument is essentially the same as in the proof of Proposition \ref{prop:consistency_det}, except that we allow for $\lambda_n\to+\infty$ and the bounds are now
\begin{align*}
    \tilde{p}^x_k = p_k^x \left(\frac{\nu}{\nu+o(1)+\frac{L}{2N}}\right)\qquad\text{and}\qquad
    \tilde{p}^x_k = p_k^x\sqrt{\frac{S^2}{R^2 + 2RS + S^2}},
\end{align*}
with 
\begin{align*}
    R\coloneqq \frac{L}{2N}+\sqrt{\hat{f}_{X,h(n)}(\tilde{x}_k)\frac{2L}{N}},\,\,
    S\coloneqq\sqrt{4\hat{f}_{X,h(n)}(\tilde{x}_k)\left({}p_k^t+\lambda_n\hat{f}_{X,h(n)}(\tilde{x}_k)\left(t_k-\hat{f}_{Nn}(\tilde{x}_k)\right)^2\right)}.
\end{align*} 
Taking the minimum of these two expressions for $\tilde{p}_k^x$, and calling this minimum $0\leq m(N)\leq 1$ we first want to show that $m(N)$ converges almost surely to $1$ as $N(n)\to+\infty$. Since $\hat{f}_{X,h(n)}\overset{a.s.}{\to} f_X$ as $n\to+\infty$, where $f_X$ is bounded above by Assumption \ref{ass:stats}, $R^2\to0$ almost surely. Now analyze $2RS$:
\begin{align*}
    RS &= \left(\frac{L}{2N}+\sqrt{\hat{f}_{X,h(n)}(\tilde{x}_k)\frac{2L}{N}}\right)\sqrt{4\hat{f}_{X,h(n)}(\tilde{x}_k)\left({}p_k^t+\lambda_n\hat{f}_{X,h(n)}(\tilde{x}_k)\left(t_k-\hat{f}_{Nn}(\tilde{x}_k)\right)^2\right)}\\
    &=\sqrt{\frac{L^2}{N^2}\hat{f}_{X,h(n)}(\tilde{x}_k)\left({}p_k^t+\lambda_n\hat{f}_{X,h(n)}(\tilde{x}_k)\left(t_k-\hat{f}_{Nn}(\tilde{x}_k)\right)^2\right)} \\
    &\hspace{2cm}+ \sqrt{\frac{8L}{N}\hat{f}^2_{X,h(n)}(\tilde{x}_k)\left({}p_k^t+\lambda_n\hat{f}_{X,h(n)}(\tilde{x}_k)\left(t_k-\hat{f}_{Nn}(\tilde{x}_k)\right)^2\right)}.
\end{align*}
Since $f_X$ and $\tilde{f}$ are bounded and both $\hat{f}_{X,h(n)}$ and $\hat{f}_{Nn}$ are almost surely consistent for $f_X$ and $\tilde{f}$, it follows from $\lambda_n = o(N)$ that the both terms converge to $0$ almost surely, which implies that $RS\to0$ almost surely. This shows that $m(N)\to1$ almost surely. The rest of the argument now is the same as in Proposition \ref{prop:consistency_det}.
\end{proof}

\section{Implementation} \label{app:implement}

We solve \eqref{eq:main_problem_discrete} using a primal-dual algorithm \citep{chambolle2011first}. At each step, we need to project the iterand $v^n$, which denotes the estimated primal function $v$ at the $n$-th iteration of the algorithm,
onto the sets $C$ and $K$. We now discuss how these projections are implemented, before presenting the algorithm. The implementation follows those of \citet{strekalovskiy2012convex} and \citet{bauer2016solving}.

\subsection{Projection Onto Constraint Sets}

\paragraph*{Projection Onto C} The projection onto $C$ can be done using a straightforward clipping,
\begin{equation}
v^{n+1} = \min\{ 1, \max{0, v^n} \},
\end{equation}
where $v^n$ is the $n$-th iteration of the discretized function $v$. We also need to impose the discretized limits from $C$, by setting $v^{n+1}(k_1, \ldots, k_d, 1) = 1$ and $v^{n+1}(k_1, \ldots, k_d, N) = 0$. 

\paragraph*{Projection Onto K} The projection onto $K$ is more involved. The first constraint,
\begin{equation}
 K_1 = 
\left\{p^t(k) \geq \frac{\left\lvert p^x(k)\right\rvert _2^2}{4}-\lambda\left(\frac{k}{S}-f(k_1, \ldots, k_d)\right)^2\right\},
\end{equation}
constitutes a pointwise projection onto a parabola, which can be rewritten as an optimization program. To see this, let $\alpha>0, p^x \in \mathbb{R}^d, p^t \in \mathbb{R}$ and $p=\left(p^x, p^t\right)^T \in \mathbb{R}^d \times \mathbb{R}$. Assume that $p_0^t<\alpha\left\lvert p_0^x\right\rvert _2^2$ holds for a point $p_0 \in \mathbb{R}^d \times \mathbb{R}$. The projection of $p_0$ onto the parabola $\alpha\left\lvert p_0^x\right\rvert _2^2$ can then be written as the following optimization program:
$$
\begin{array}{cl}
\min _{p \in \mathbb{R}^d \times \mathbb{R}} & \frac{1}{2}\left\lvert p-p_0\right\rvert _2^2 \\
\text { subject to } & p^t \geq \alpha\left\lvert p^x\right\rvert _2^2,
\end{array}
$$
where $\alpha=\frac{1}{4 \hat{f}_{X,N_n}(k_1,\ldots,k_d)} $ and we leave out the constant $\lambda \hat{f}_{X,N_n}(\ldots) \left(\frac{k}{N}-f( \ldots)\right)^2$ for ease of notation. The first-order conditions of the corresponding Lagrangian are,
\begin{equation}
\left(\begin{array}{c}
p^x-p_0^x+\mu 2 \alpha p^x \\
p^t-p_0^t-\mu \\
\alpha\left\lvert p^x\right\rvert _2^2-p^t
\end{array}\right)=0.
\end{equation}
Solving for $\mu$ and recombining gives the cubic equation,
$$
t^3+3 b t-2 a=0
$$
with $a=2 \alpha\left\lvert p_0^x\right\rvert ^2_2, b=\frac{2}{3}\left(1-2 \alpha p_0^t\right)$ and $t=2 \alpha\left\lvert p^x\right\rvert ^2_2,$
which can be solved analytically as \citep{mckelvey1984simple},
$$
p^x= \begin{cases} p^x_0 & \text{ if } p_0^t \geq \alpha\left\lvert p_0^x\right\rvert _2^2 \\  \frac{w}{2 \alpha} \frac{p_0^x}{\left\lvert p_0^x\right\rvert ^2_2} & \text { if } p_0^t < \alpha\left\lvert p_0^x\right\rvert _2^2 \text{ and } p_0^x \neq 0 \\ 0 & \text { else }\end{cases}
$$
and $$ p^t= \begin{cases} p^t_0 & \text{ if } p_0^t \geq \alpha\left\lvert p_0^x\right\rvert _2^2 \\  \alpha\left\lvert p^x\right\rvert _2^2  & \text{ else} \end{cases},$$
where
$$
d= \begin{cases}a^2+b^3 & \text { if } b \geq 0 \\ \left(a-\sqrt{-b}^3\right)\left(a+\sqrt{-b}^3\right) & \text { else }\end{cases}
$$
and
$$
w= \begin{cases} 0 & \text{ if } c = 0 \\ c-\frac{b}{c}  & \text { if } d \geq 0 \text{ and } c > 0 \\ 2 \sqrt{-b} \cos \left(\frac{1}{3} \arccos \frac{a}{\sqrt{-b}^3}\right) & \text { else }\end{cases},
$$
with $c=\sqrt[3]{a+\sqrt{d}}$.

The second constraint, we decouple from the first by way of Lagrange multipliers \citep{strekalovskiy2012convex}. This lets us avoid the Dykstra projection originally used in \citet{pock2009algorithm}, which requires nested iterations (outer loop: primal-dual algorithm, inner loop: Dykstra's algorithm) and is thus computationally costly. In particular, we introduce a set of auxiliary variables $s_{s_1, s_2} \coloneqq \sum_{s_1 \leq k_{d+1} \leq s_2} p^x(k)$ and of Lagrange multipliers $\mu_{s_1,s_2} \in \mathbb{R}^d$ and write the second constraint set as, 
\begin{equation}
K_2 =\left\{\left|s_{s_1, s_2}\right| \leq \nu \text { s.t. } s_{s_1, s_2}=\sum_{s_1 \leq k_{d+1} \leq s_2} p^x(k)\right\},
\end{equation}
with corresponding Lagrangian,
\begin{equation}
\begin{aligned}
& \mathcal{L}(v, \mu, p, s)= \\
& \langle p, D_N v\rangle_N+\sum_{s_1=1}^S \sum_{s_2=k_1}^S\left\langle\mu_{s_1, s_2}, \sum_{s_1 \leq k_{d+1} \leq s_2} p^x(k)-s_{s_1, s_2}\right\rangle.
\end{aligned}
\end{equation}
Then, let $(v^*, p^*)$ be the solution to \eqref{eq:main_problem_discrete} and $(v^*, p^*, \mu_{s_1,s_2}^*, s_{s_1, s_2}^*)$ the solution to,
\begin{equation}
\min _{\substack{v \in C \\ \mu_{s_1, s_2}}} \max _{\substack{p \in K_1 \\\left|s_{s_1, s_2}\right| \leq \nu}} \mathcal{L}(u, \mu, p, s). 
\end{equation} 
We have that,
\begin{equation}
\left\langle p^*, D_N v^*\right\rangle_N =\mathcal{L}\left(v^*, \mu^*, p^*, s^*\right),
\end{equation}
because $p \in K$ implies $p \in K_2$ and hence the Lagrange constraint $s_{s_1, s_2}=\sum_{s_1 \leq k_{d+1} \leq s_2} p^x(k) $ always holds with equality at the optimum.

\subsection{Algorithm} 

 Putting everything together, we solve the discretized problem using a primal-dual gradient descent-ascent algorithm \citep{chambolle2011first}. To calculate the gradient updates, we just need the derivatives with respect to the Lagrangian above, which are,
\begin{equation}
\begin{aligned}
\frac{\partial \mathcal{L}(u, \mu, p, s)}{\partial u} & =D_N^T p \\
\frac{\partial \mathcal{L}(u, \mu, p, s)}{\partial s_{s_1, s_2}} & =- \mu_{s_1, s_2} \\
\frac{\partial \mathcal{L}(u, \mu, p, s)}{\partial \mu_{s_1, s_2}} & = p^x(k)-s_{s_1, s_2} \\
\frac{\partial \mathcal{L}(u, \mu, p, s)}{\partial p} & = D_N u+\tilde{p},
\end{aligned}
\end{equation}
where \begin{equation}
\tilde{p}=\left(\begin{array}{c}
\sum_{s_1=1}^l \sum_{s_2=l}^S \mu_{s_1, s_2}^1 \\
\sum_{s_1=1}^l \sum_{s_2=l}^S \mu_{s_1, s_2}^2 \\
0
\end{array}\right).
\end{equation}
We get the following algorithm,
\begin{algorithm} \label{algo:primaldual}
\caption{Primal-Dual Algorithm}
\begin{algorithmic}[1]
\STATE Choose $\left(v^0, p^0, \mu^0, s^0\right) \in C \times K_p \times \mathbb{R}^{d \times N_1 \times \ldots \times N_d \times I} \times \mathbb{R}^{d \times N_1 \times \ldots N_d \times I}$ and let $\bar{v}^0=u^0, \bar{\mu}^0=\mu^0=0, p^0=0$. 
\STATE Set $\tau_v=\sigma_p=\frac{1}{4 (d+1)}$ and $\tau_\mu=1/I, \sigma_s=1$. 
\STATE For each $n \geq 0$:
\STATE 
$$
\left\{\begin{array}{l}
p^{n+1}=\Pi_{K_p}\left(p^n+\sigma_p\left(D_N \bar{v}^n+\tilde{p}^n \right)\right)  \\
s_{s_1, s_2}^{n+1}=\Pi_{|\cdot| \leq \nu}\left(s_{s_1, s_2}^n-\sigma_s \bar{\mu}_{s_1, s_2}^n\right) \\
v^{n+1}=\Pi_C\left(v^n-\tau_v D_N^* p^{n+1}\right) \\
\mu_{s_1, s_2}^{n+1}=\mu_{s_1, s_2}^n+\tau_\mu \left(s_{s_1, s_2}^{n+1}- \sum_{s_1 \leq k_{d+1} \leq s_2} p^{x,n+1}(k) \right) \\
\bar{v}^{n+1}=2 v^{n+1}-v^n \\ 
\bar{\mu}_{s_1, s_2}^{n+1}=2 \mu_{s_1, s_2}^{n+1}-\mu_{s_1, s_2}^n,
\end{array} \right.
$$
where $\Pi_D(x)$ denotes the projection of $x$ onto the set $D$, $I \coloneqq \frac{S^2+S}{2}$ the number of $K$ constraints, and $D^*_N = D_N^T \coloneqq N \cdot D^T \coloneqq - N \cdot \operatorname{div}$ the adjoint of the discrete gradient operator.\footnote{Implemented in practice as backward differences with a Neumann boundary condition.}
\RETURN $v^*$

\end{algorithmic}
\end{algorithm}
 It was proved in \citet[Thm.1]{chambolle2011first} that $v \to v^*$ for this algorithm as the number of iterations $n$ goes to infinity, under the condition that the step sizes satisfy $\tau \sigma L^2 < 1$, where $L = \lvert K \rvert \coloneqq \max\{\lvert Kx \rvert : x \in X \text{ with } \lvert x\rvert \leq 1 \}$ with $K$ the continuous linear operator on the primal variable.

 \subsection{Computational Details and Code}

We implement the described algorithm in \href{https://pytorch.org}{PyTorch}, a modern deep learning framework that offers native support for GPU acceleration. Running the algorithm on a GPU, as opposed to a CPU, provides several advantages. GPUs are inherently designed for high-throughput parallelism, enabling the simultaneous processing of thousands of matrix operations. PyTorch offers native support for the whole spectrum of GPU architectures, including NVIDIA, Apple Silicon (for late 2020 Apple computers onward), and Intel GPUs. This ensures that researchers and practitioners with varied hardware setups can readily apply the estimator. For efficient computation of the SURE, which requires solving the algorithm a large number of times, we rely on the hyperparameter tuning suite in \href{http://ray.io}{Ray}, an open-source distributed computing framework that supports parallel and fractional GPU processing. This can drastically speed up the hyperparameter selection by efficiently allocating jobs across multiple partitions of a single GPU on a local machine or multiple GPUs on a high-performance computing cluster. 

The accompanying Python library can be found at \url{https://github.com/Davidvandijcke/fdr}. A fully developed Python package as well as extensions to R and STATA are in progress. Though the package is compatible with all major GPU architectures, researchers without GPUs on their local machines can refer to our Google Colab notebooks, which provide free access to cloud-based machines with GPU support. The current implementation of the algorithm takes 69.37 seconds on a Nvidia Tesla A100 Tensor Core GPU to converge on a 2D dataset with the number of raw data points $n=90,000$, the number of grid points $N=12,750$ (25\%), and the discretization of the lifted dimension $S=32$. This computation time can be further improved by approximating the objective function using ``sublabel-accurate'' relaxations \citep{mollenhoff2017sublabel}. As this requires further extensions of our statistical convergence arguments, we leave this to future work.

\section{Uncertainty Quantification} \label{app:uncertainty}
\subsection{Subsampling} \label{sec_app:subsampling}

\begin{algorithm}[htbp!]
\caption{Subsampling}
\label{algo:sub}
\begin{algorithmic}[1]
\STATE \textbf{Input}: Data $\left(X_i, Y_i\right), \mathcal{I} \coloneqq \{1, \ldots, n\}$, miscoverage level $\alpha \in(0,1)$, regression algorithm $\mathcal{A}$ 
\STATE \textbf{Output:} Confidence band, over $x \in \mathcal{Q_{N_-}}$
    \STATE $\hat{u} = \mathcal{A}(\{(X_i,Y_i):i\in \mathcal{I}\})$ 
    \FOR{$j = 1$ to $J$}
        \STATE Randomly sample $\mathcal{I}$ into $K$ subsets $\mathcal{I}_1, \ldots, \mathcal{I}_k$ of size $b_1, \ldots,b_K$
        \STATE $u^*_{j,k} \coloneqq \mathcal{A}(\{ (Y_i, X_i) : i \in \mathcal{I}_k \})$
        \STATE $Z[j,k] = \max\{|u^*_{j,k} - \hat{u}|\}$
    \ENDFOR
    \STATE $\bar{y}_{k} \coloneqq \frac{1}{L} \sum_{\ell=1}^L \log \left[G_{b_k}^{*-1}\left(t_\ell\right)-G_{b_k}^*{ }^{-1}\left(s_\ell\right)\right]$, the mean of $L$ log differences of the empirical quantile functions $G_b^{*-1}(t)$ at quantiles $s_\ell,t_\ell$
    \STATE $\hat{\beta}\coloneqq-\frac{\operatorname{cov}\{\bar{y_k}, \log (b_k)\}}{\operatorname{var}\{\log (b_k)\}}$ the rate of convergence
    \STATE For some $k \in 1,\ldots,K$:
    \STATE $Z^* = b_k^{\hat{\beta}} \cdot \max\{|\mathbf{u}^*_{k} - \hat{u})|\}$ with $\mathbf{u}^*_{k} = (u^*_{1,k}, \ldots, u^*_{J,k})$ 
    \STATE $z_\alpha \coloneqq \operatorname{sort}(Z^*)[(J + 1) \cdot (\alpha)]$ the critical value
    \RETURN $C_{\text{sub}}(x, \alpha) = [\hat{u}(x) - z_\alpha / n^{b_k}, \hat{u}(x) +  z_\alpha / n^{b_k}]$, for all $x \in \mathcal{Q}_{N_-}$
\end{algorithmic}
\end{algorithm}
 %Even subsampling approaches require the limiting distribution of the estimator to be non-degenerate, and we are not aware of existing results describing the asymptotic distribution of statistical estimators akin to ours. 
We construct uniform confidence bands by way of a subsampling approach with an estimated rate of convergence \citep[Ch.8]{politis1999subsampling}. The only assumption required for this approach to be consistent is that the limiting distribution exists and is non-degenerate for the rate of convergence we estimate. Under this assumption, we construct uniform confidence bounds using Algorithm \ref{algo:sub}, where $\mathcal{Q}_{N_-}$ is the grid on the domain only, that is, without the lifted dimension. Note that, since we aim to construct confidence bands around a non-parametric estimate $\hat{u}$ on a grid of size $N$, we keep the grid size fixed when calculating the subsampled estimates $u^*_{j,k}$. With this algorithm, we obtain the confidence bands and 95\% significant jump sets depicted in Figures \ref{fig:simulated_examples} and \ref{fig:india_econ}. To obtain confidence bands on the jump sizes, we simply repeat lines 8--14 in Algorithm \ref{algo:sub} for the forward difference of $\hat{u}$ and of the subsampled $u^*$s and conclude that a jump at point $x \in S_u$ is significant at the $\alpha$\% level if either $\max\{y : y\in  C^D_{\text{sub, lower}}(x, \alpha) \} > 0,$ or $\min\{y : y\in  C^D_{\text{sub, upper}}(x, \alpha) \} < 0$, where $C^D$ indicates the confidence bands for the forward differences and ``lower'', ``upper'' indicate whether we consider the upper or lower bound. We take the max and min since in practice we set the jump size equal to the largest forward difference along any of the axes.

\subsection{Conformal Inference}
As a computationally efficient, though more conservative, alternative to quantifying the uncertainty of the function $\hat{u}$ and the jump set $S_u$, we rely on distribution-free conformal prediction methods developed for the non-parametric regression setting \citep{lei2018distribution}. Conformal prediction, originally, is a method for constructing bands around predictions produced by machine learning methods. Its appeal lies in the fact that it can produce prediction bands for any general estimator without requiring assumptions about its properties or about the distribution of the data-generating process. In particular, we construct confidence bands around $u$ and $\lvert \nabla u \rvert$ using the Split Conformal Prediction Algorithm \ref{algo:scp} proposed in \citet[Algorithm 2]{lei2018distribution},
\begin{algorithm}[htbp!]
\caption{Split Conformal Prediction \citep{lei2018distribution} } \label{algo:scp} 
\begin{algorithmic}[1]
\STATE \textbf{Input:} Data $(X_i, Y_i), i=1, \ldots, n$, miscoverage level $\alpha \in(0,1)$, regression algorithm $\mathcal{A}$
\STATE \textbf{Output:} Prediction band, over $x \in \mathbb{R}^d$
\STATE Randomly split $\{1, \ldots, n\}$ into two equal-sized subsets $\mathcal{I}_1, \mathcal{I}_2$
\STATE $\widehat{\mu}=\mathcal{A}\left(\left\{\left(X_i, Y_i\right): i \in \mathcal{I}_1\right\}\right)$
\STATE $R_i=\left|Y_i-\widehat{\mu}\left(X_i\right)\right|, i \in \mathcal{I}_2$
\STATE $d(\alpha)=$ the $k$th smallest value in $\left\{R_i: i \in \mathcal{I}_2\right\}$, where $k=\lceil(n / 2+1)(1-\alpha)\rceil$
\RETURN $C_{\text{split}}(x, \alpha)=[\widehat{\mu}(x)-d(\alpha), \widehat{\mu}(x)+d(\alpha)]$, for all $x \in \mathbb{R}^d$
\end{algorithmic}
\end{algorithm}
The authors show that, for $(X_i, Y_i)_{i=1,\ldots,n}$ i.i.d. and assuming that the residuals $R_i, i \in \mathcal{I}_2$ have a continuous joint distribution, which is guaranteed in the limit under the Gaussian assumption for the SURE, 
\[\mathbb{P}\left(Y_{n+1} \in C_{\text {split}}\left(X_{n+1}, \alpha \right)\right) \leq 1-\alpha+\frac{2}{n+2}. \]
Thus, $C_{\text{split}}$ delivers prediction bands for every grid point $x \in \mathcal{Q}_{N_-}$ on which we estimate $\hat{u}(x)$. 

In practice, we adapt Algorithm \ref{algo:scp} to our grid-based setting as follows. Keeping the hyperparameters $\lambda, \nu$ and the grid $\mathcal{Q}_{N}$ fixed, randomly split $\{1,\ldots,n\}$ into equal subsets $\mathcal{I}_1, \mathcal{I}_2$ and cast $(X_i, Y_i), i \in \mathcal{I}_1$ onto $\mathcal{Q}_{N}$ as in Section \ref{sec:consistency}. Solve problem \eqref{eq:main_problem_discrete} on the grid, and then calculate $R_i$ as the difference between $Y_i$ and the prediction $\hat{u}(X'_i)$ for the grid point $X_i' \in \mathcal{Q}_{N_-}$ closest to $X_i$. This gives prediction bands for the outcome variable $Y_i$. To obtain prediction bands for the jump set $J_f$, we also calculate the residual vector $R^J_i$ for the implied predictions $D \hat{u}$ by computing $ D Y'_i $ where $Y'_i$ is the outcome value corresponding to the data point $X_i$ closest to the grid point $X'_i$. Inference on the jump set then proceeds identically to the subsampling case.
Finally, we stress that these \textit{prediction} bands allow us to do inference on $Y$, but not on the conditional mean $f(X)$. Confidence bands for the latter will always be smaller than prediction bands for the former, so this approach provides a computationally efficient but very conservative hypothesis test regarding $f(X)$ and the corresponding jump set $J_f$. If the goal is to do inference on $f(X)$, we recommend using the subsampling approach described in the previous section.

\section{Additional Results} \label{app:additional_results}

% \subsection{Tables}

% \begin{table}
% \input{tabs/comparison_table}
% \end{table}
% \begin{landscape}
% \begin{scriptsize}
% \input{tabs/comparison_table}
% \end{scriptsize}
% \end{landscape}

\clearpage
\subsection{Figures}

\begin{figure}[htbp!]
\includegraphics[width=0.7\textwidth]{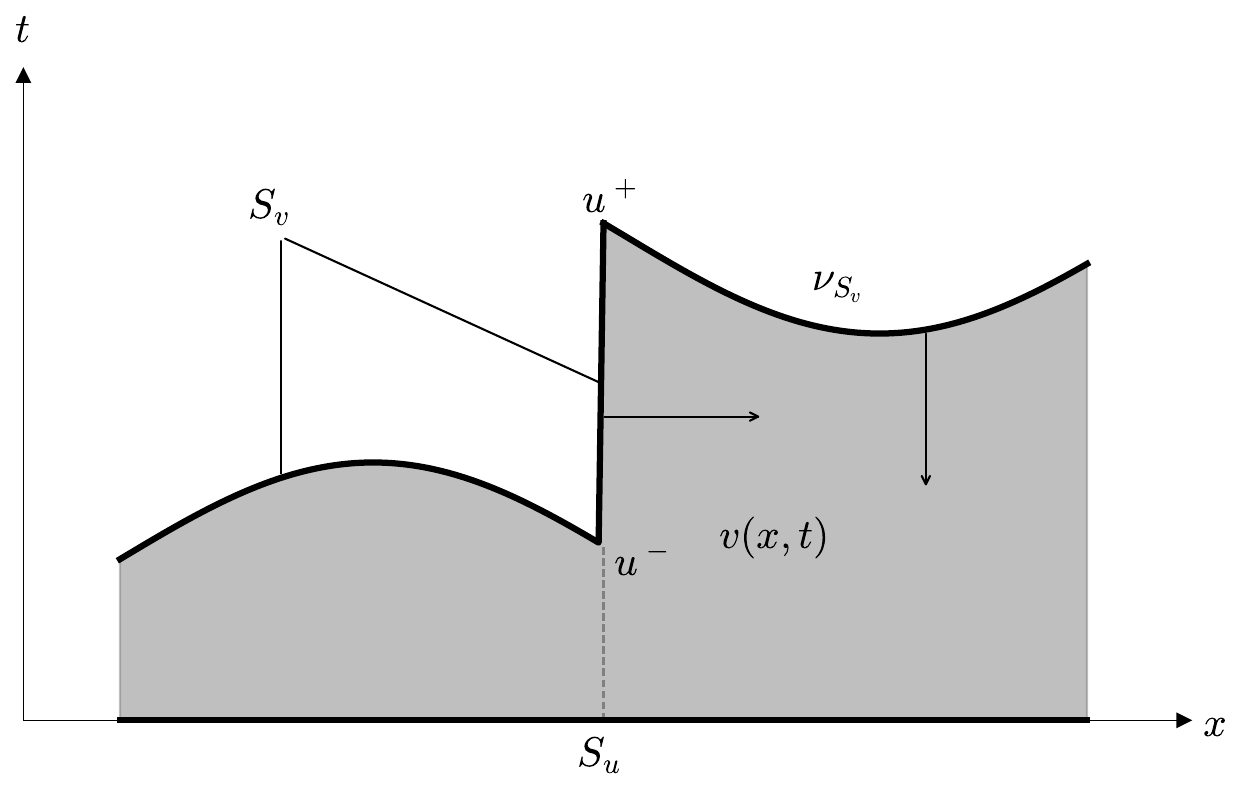}
\caption{Convex Relaxation Through Functional Lifting}
%\floatfoot{\textit{Notes}: }
\end{figure}

\begin{figure}[htbp!]
\includegraphics[width=0.7\textwidth]{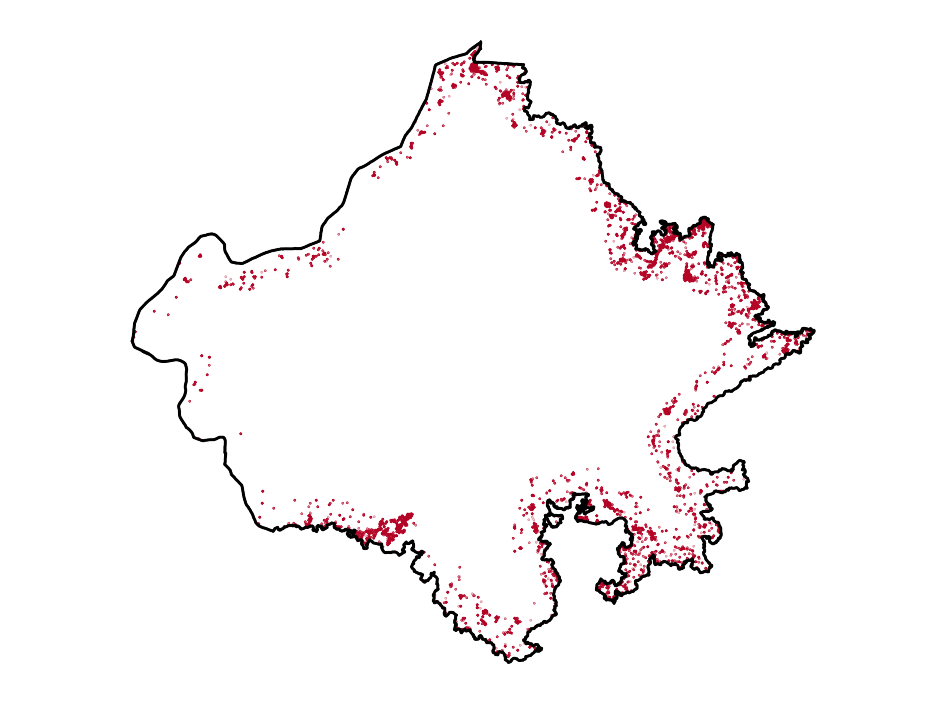}
\caption{Catchment Area for Estimating Degree of Selection}
\floatfoot{\textit{Note}: Figure plots pings inside 40km band within Rajasthan around the state boundary during the shutdown period. The degree of self-selection is estimated by comparing the share of devices associated with these pings that cross into neighboring states with its average in the prior month.}
\label{fig:crossers}
\end{figure}

\begin{figure}[htbp!]
\begin{subfigure}[b]{0.49\linewidth}
    \includegraphics[width=\textwidth]{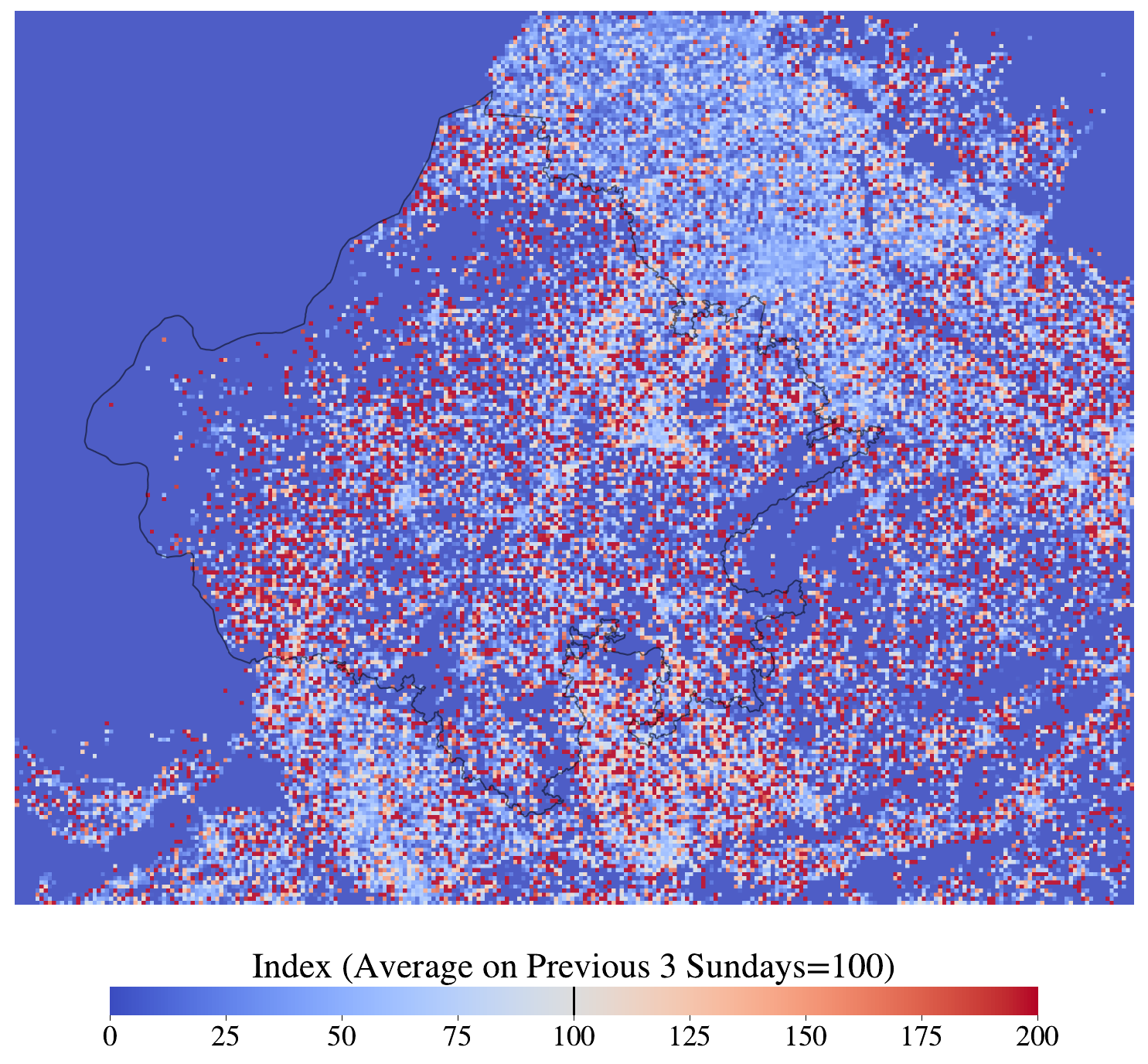}
\caption{Mobile: Raw Data, Day Of, 6 pm -- Midnight}
\end{subfigure}
\begin{subfigure}[b]{0.49\linewidth}
    \includegraphics[width=\textwidth]{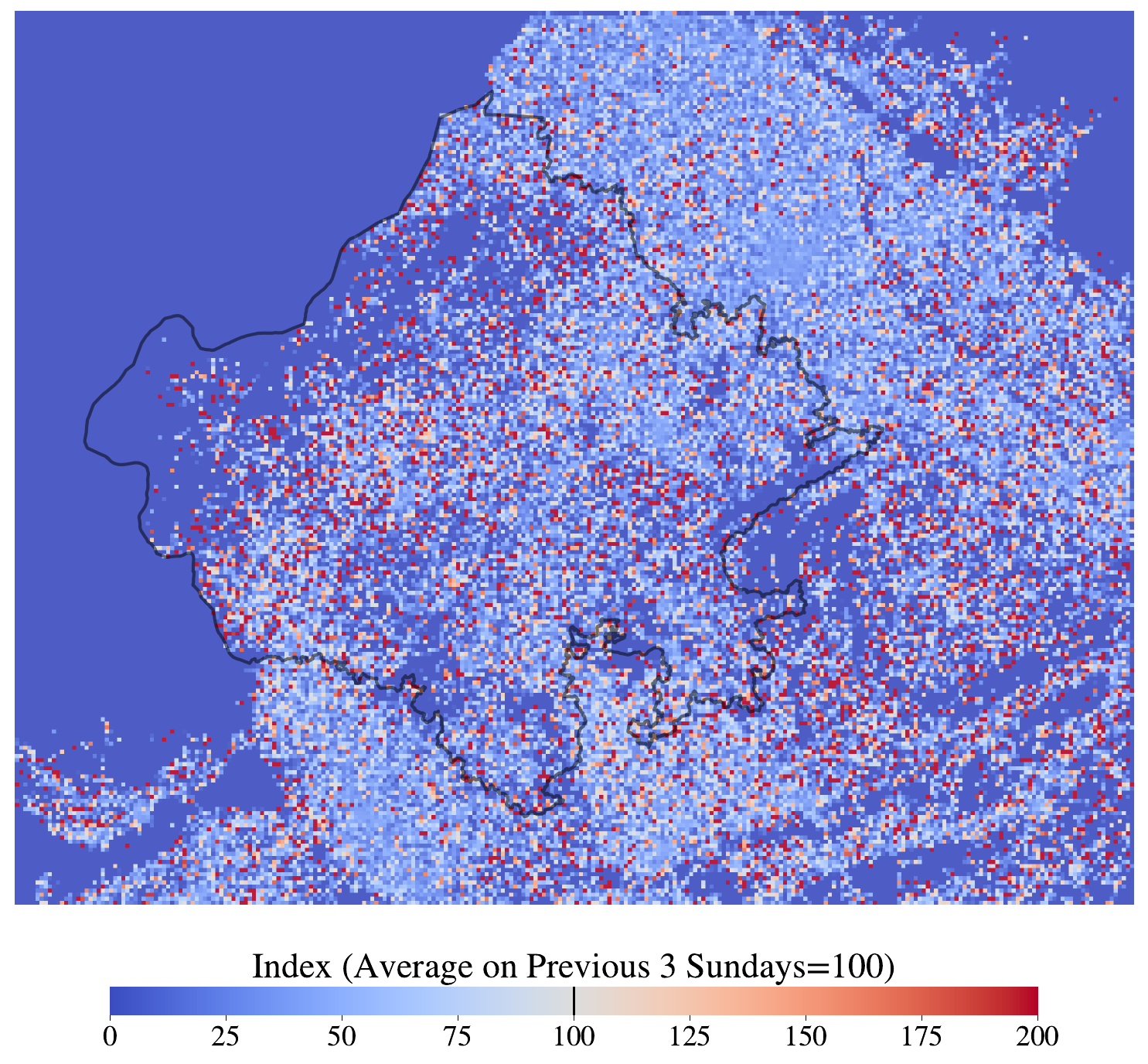}
\caption{Mobile: Raw Data, Day After, 6 am -- 6 pm}
\end{subfigure}
\caption{Post-Shutdown Activity}
\label{fig:india_overshooting}
\floatfoot{\textit{Note}: Plot shows the raw mobile device data on a 5x5km grid with the fill color of each cell indicating the value of $\overline{Pings}$, for the hours between 6 pm and midnight on the day of the shutdown in (a) and for the time spanning the shutdown window the day after the shutdown in (b). The outline of Rajasthan state is indicated by black lines.}
\end{figure}

\begin{figure}[htbp!]
\begin{subfigure}[b]{0.49\linewidth}
    \includegraphics[width=\textwidth]{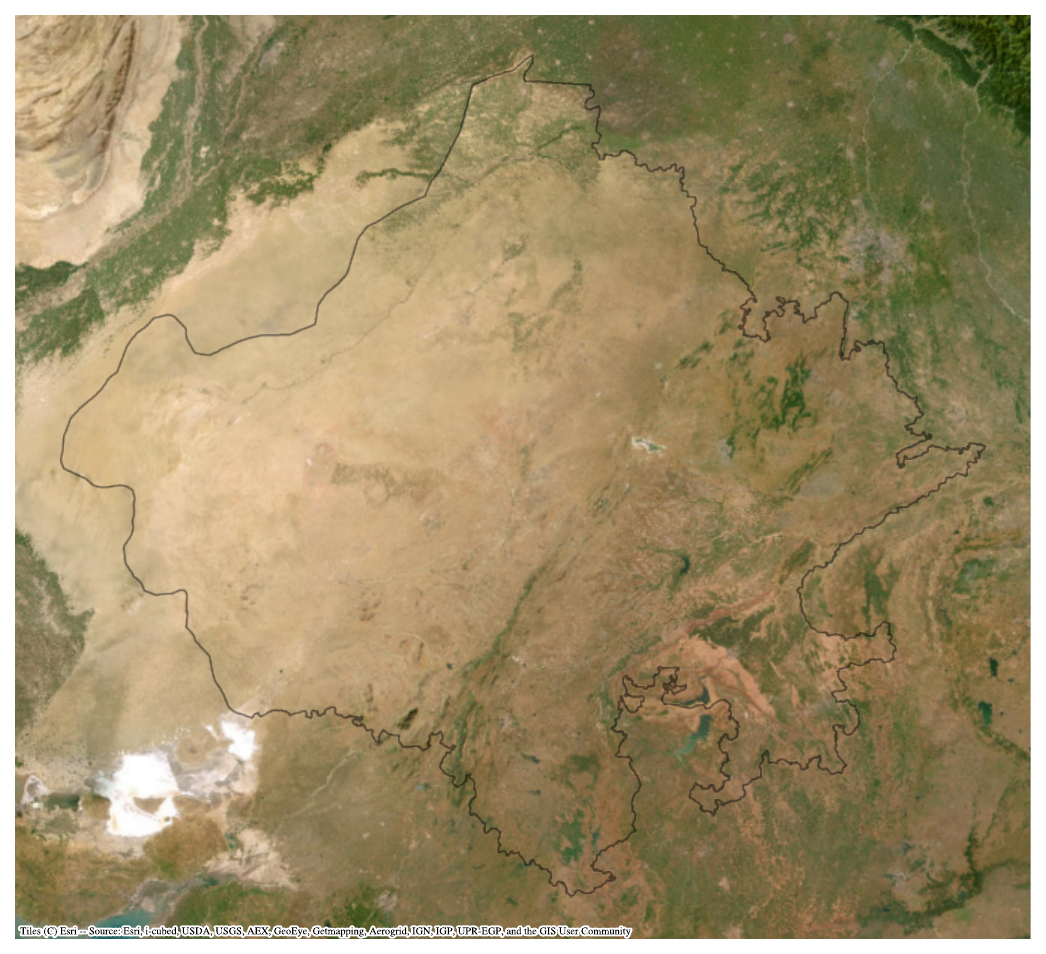}
        \caption{Esri Satellite Imagery}
        \label{fig:india_terrain_satellite}
\end{subfigure}
\begin{subfigure}[b]{0.49\linewidth}
    \includegraphics[width=\textwidth]{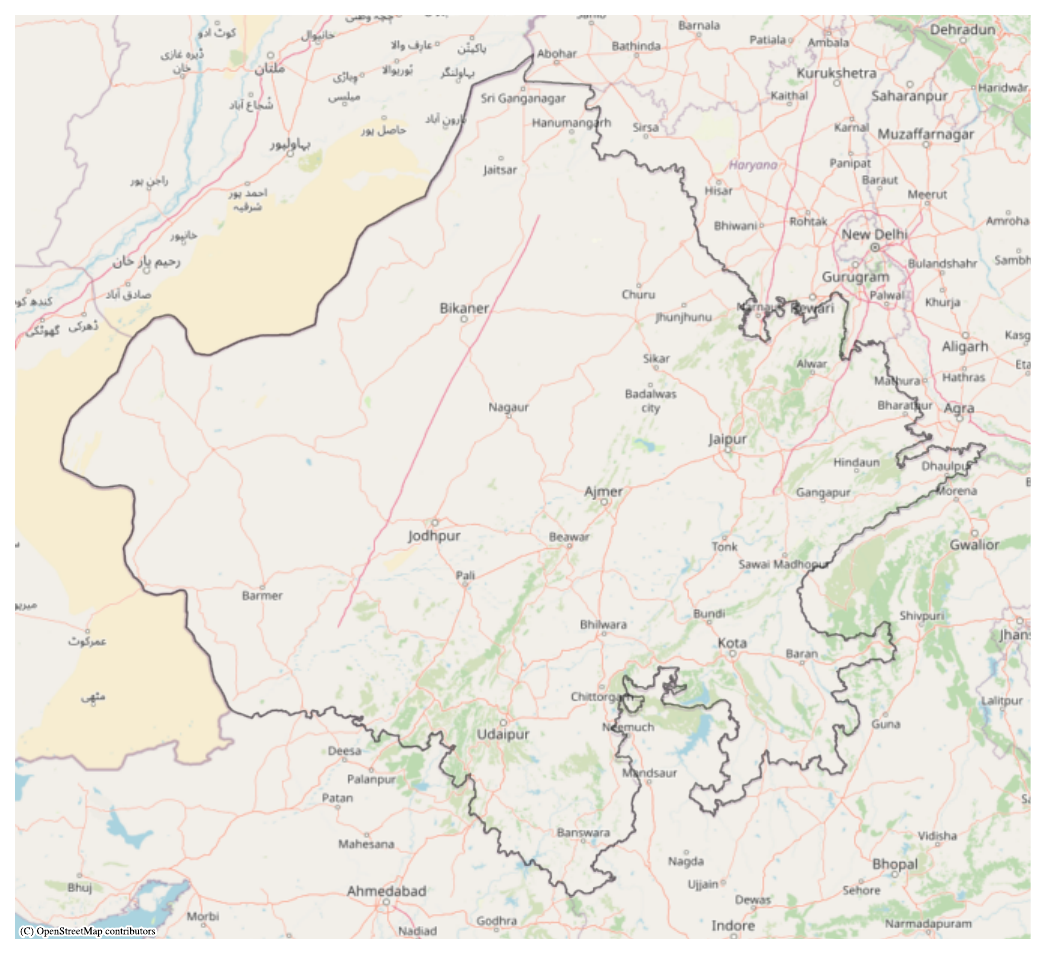}
    \caption{OpenStreetMap}
            \label{fig:india_terrain_street}
\end{subfigure}
\caption{Rajasthan: Terrain View}
\label{fig:india_terrain}
\floatfoot{\textit{Note}: \ref{fig:india_terrain_satellite} shows the satellite view of Rajasthan, obtained from Esri; \ref{fig:india_terrain_street} shows the street map of Rajasthan obtained from OpenStreetMap. The outline of Rajasthan is depicted in black.}
\end{figure}

\end{appendix}

\end{appendices}
\end{supplement}

\singlespacing
\bibliographystyle{apalike}
\bibliography{references}
\end{document}